\theoremstyle{definition}
\newtheorem{corollary}{Corollary}
\newtheorem{lemma}{Lemma}
\newtheorem{proposition}{Proposition}
\newtheorem{theorem}{Theorem}
\newtheorem*{definition*}{Definition}
\DeclareMathOperator{\cov}{cov}
\begin{document}
	
	
	\title{\textbf{Strategic Attribute Learning\footnote{For valuable comments, we thank: Arjada Bardhi, Christoph Carnehl, Kfir Eliaz, Nathan Hancart, Toomas Hinnosaar, Alessandro Ispano, Igor Letina, Antoine Loeper, Marc M\"{o}ller, Nick Netzer, Francisco Poggi, Johannes Schneider, Armin Schmutzler, Jakub Steiner, Peter Norman S{\o}rensen, Dezsö Szalay, and the audiences at the Universities of Alicante, Bern, Copenhagen, Essex, Madrid Carlos III, Mannheim, Zurich, and CERGE-EI, as well as the Nordic Theory Meetings, the European Economic Theory Conference, the Society for the Advancement of Economic Theory Conference (SAET), the Annual Meeting of European Association of Young Economists, the Lisbon Meetings in Game Theory and Applications, the Barcelona Summer Forum, EEA-ESEM, the Symposium of the Spanish Economic Association (SAEe), and the Annual Meeting of the VfS. Matyskov\'{a} gratefully acknowledges funding from the Generalitat Valenciana (Prometeo/2021/073) and from the grant PID2022-142356NB-I00 financed by MICIU/AEI /10.13039/501100011033 and FEDER, UE.}}}
	
	\author{Jean-Michel Benkert, Ludmila Matyskov\'{a}, and Egor Starkov\footnote{Benkert: Department of Economics, University of Bern. Matyskov\'{a}: Department of Economics, University of Alicante.  Starkov: Department of Economics, University of Copenhagen. Email: jean-michel.benkert@unibe.ch, ludmila.matyskova@ua.es, egor.starkov@econ.ku.dk.}
}
	
	\setcounter{Maxaffil}{0}
	\renewcommand\Affilfont{\itshape\small} 
	\maketitle
	\vspace{-5ex}
	\begin{abstract}
A researcher allocates a budget of informative tests across multiple unknown attributes to influence a decision-maker. We derive the researcher's equilibrium learning strategy by solving an auxiliary single-player problem. The attribute weights in this problem depend on how much the researcher and the decision-maker disagree. If the researcher expects an excessive response to new information, she forgoes learning altogether. In an organizational context, we show that a manager favors more diverse analysts as the hierarchical distance grows. In another application, we show how an appropriately opposed advisor can constrain a discriminatory politician, and identify the welfare-inequality Pareto frontier of researchers.
	\end{abstract}
	
	\setlength{\parindent}{0cm}
	{\it Keywords}: Attributes, Information acquisition, Gaussian distribution, Strategic learning \\
	{\it JEL classification}: D72, D81, D83

	\setlength{\parindent}{0cm}
	
	\newpage
	
	\spacing{1.45}

	\section{Introduction}

	Decision-making under uncertainty often involves multiple dimensions of potentially unequal importance. For example, when designing a new product, a company may consider the product's uncertain reception across various market segments, some of which may be more important than others. Similarly, a politician's optimal policy may need to address the diverse and uncertain needs of distinct social groups, with some groups potentially carrying more weight than others do.

	In both examples, decision-makers must learn about the various dimensions, or attributes, influencing their decisions but often lack the resources for a thorough investigation. Consequently, a specialized researcher is often in charge of the learning task. However, the researcher may prioritize attributes differently from the decision-maker and strategically procure information about the unknown attributes to influence the final decision.

	In this paper, we explore questions pertaining to such strategic concerns in learning about complex decisions. First, what kinds of biases may arise when multiple attributes are at play, and how do they interact? For instance, in an organizational application, we explore two orthogonal biases---hierarchical distance and diversity---and analyze when a manager (decision-maker) prefers diverse analysts (researchers) as a function of hierarchical distance. Second, how can biases mitigate socially undesirable decisions? For instance, in the context of political discrimination, we show how an appropriately opposed advisor (researcher) can constrain a discriminatory politician whose decision affects well-being of various social groups (attributes). More broadly, how does preference misalignment in a multiattribute environment affect learning, when researchers shape not only the \emph{extent} of learning but also its \emph{direction}?

	To address these questions, we develop a framework that captures the strategic interaction between a \emph{researcher}, who learns about an unknown state of the world, and a \emph{decision-maker}, who acts based on the resulting information. Crucially, we assume that the state of the world consists of multiple \emph{attributes} and capture the player's preference misalignment by allowing for differing weights on the attributes. As such, our main contribution is a tractable model for studying how preference misalignment affects multi-attribute learning and decision-making.

	In Section \ref{sec:model}, we formally introduce our framework. In our model, the attributes are independently and normally distributed and jointly determine the state of the world and thus the players' preferred decisions. The state can be imperfectly learned by allocating a given budget of \emph{tests} across different attributes. The more tests are allocated to an attribute, the more informative a signal about it becomes. Players aim to minimize the quadratic loss between the final decision and their \emph{bliss point}, which is a weighted sum of attributes.

We begin by analyzing a helpful benchmark in Section \ref{sec:single_player}: a situation in which a single player controls both the decision and the learning, where the optimal learning strategy is the primary focus of our analysis. Theorem \ref{thm:benchmark} shows that---absent strategic motives---the player chooses the test allocation which achieves the highest reduction in residual uncertainty regarding her optimal decision. In particular, optimal learning reflects both the attribute's relevance to the agent and its prior uncertainty. Put differently, the more important and more uncertain an attribute is, the more tests are allocated to it. Moreover, the budget size determines how many attributes the agent learns about.

In Section \ref{sec:strategic_players}, we return to the primary framework, in which a researcher controls the learning and a decision-maker makes the decision. Our main result (Theorem \ref{thm:strategic_solution}) shows that the researcher's equilibrium test allocation then coincides with the single-player solution with \emph{auxiliary weights}, which are different from the actual weights of either player. We show that these auxiliary weights decrease as the misalignment between players grows. If the researcher values an attribute less (or more) than the decision-maker does, he views the decision-maker's reaction to any information about that attribute as excessive (or insufficient). When the misalignment results in overreaction, it can lead the researcher to abstain from learning about the attribute entirely. Notably, when all of the researcher's weights are sufficiently low relative to the decision-maker's weights, the researcher entirely forgoes learning---contrasting with the single-player case, where learning generically takes place. The fact that the researcher's equilibrium test allocation coincides with the single-player solution with auxiliary weights clarifies how the strategic motive influences the researcher's learning, reducing its impact to a shift in attribute weights. Thus, by comparing these auxiliary weights to the researcher's original weights, we can pinpoint how misalignment drives adjustments in the researcher's learning strategy---insight that would be harder to achieve if the strategic solution deviated fundamentally from the single-player case.

We explore two economical applications in Section \ref{sec:applications}, each with a tailored parametrization of player misalignment. These applications demonstrate the versatility of our analytical tools in studying misalignment and yield new insights within each specific context. 
	In our first application, we examine how biases across layers of hierarchy influence learning in organizations. Here, an analyst (researcher)  gathers information, and a manager (decision-maker) acts on it. While the manager trivially prefers an analyst whose preferences match hers, we examine the preferred analyst when full alignment is unfeasible.
	We parametrize misalignment by decomposing the analyst's preferences into \emph{sensitivity} (how closely the players align in absolute terms) and \emph{distortion} (how much the players disagree in relative terms). Proposition~\ref{prop:CS_manager_gamma} shows that the manager may prefer analysts with high distortion when the analysts have low sensitivity. An analyst with low sensitivity views the manager's reaction to any information as excessive and thus may completely forgo learning. In such cases, the manager then prefers an analyst who is very distorted toward some attribute, as then the manager's reaction to information about that attribute is no longer viewed as excessive, resulting in some learning. As any learning is better for the manager than no learning, she prefers a strongly distorted analyst over one with little or no distortion. Assuming that the more hierarchical the organization, the bigger the difference in the sensitivity between the managers and the analysts, our results suggest balancing diversity and uniformity (reflected in distortion) in organizations as a function of hierarchical distance.

	In our second application, we explore the issue of discrimination. A politician (decision-maker) chooses a policy to meet the uncertain needs of two a priori identical social groups (attributes) but favors one group over the other. With an unchecked politician---who controls both decision-making and learning---such favoritism results in inequality and reduces utilitarian welfare compared to the situation without favoritism. We examine how  delegating learning to an advisor (researcher) with different preferences can mitigate these negative effects.
	Proposition~\ref{prop:discrimination_frontier} demonstrates that there exists a welfare-inequality Pareto frontier of potential advisors that strictly dominates unchecked discrimination. At one end of this frontier is an impartial advisor, who---given the politician's favoritism---maximizes welfare, but leaves the groups with unequal outcomes (Lemma~\ref{prop:discrimination_impartial}). At the other end of the frontier is an advisor who is suitably partial towards the disadvantaged group. Such an advisor learns precisely enough about this group to counteract the politician's favoritism, and thus eliminates inequality when appointed (Lemma~\ref{prop:discrimination_equalizing_bias}). However, it comes at a cost to welfare, as the politician then relies less on the advisor's information, and information is thus ``wasted.'' Notably, the level of advisor's partiality required to eliminate inequality does not increase with the politician's favoritism but exhibits a \emph{non-monotone} relationship.

	In Section \ref{sec:extensions}, we outline two extensions developed in detail in the online appendix. The first examines multiple researchers in the context of media markets, showing that while competition between media outlets (researchers) leads to polarization in equilibrium, this outcome is actually beneficial for a voter (decision-maker). The second extension addresses uncertainty about the decision-maker's preferences. We study this within a dual-self model, where a single agent may be either sophisticated or na\"ive about potential changes in her preferences between the learning and the decision-making stages, finding that na\"ivete can, in fact, be advantageous. 

	We conclude in Section \ref{sec:conclusion} by discussing our results and prospective avenues for future research.
	
	\paragraph*{Related literature.} 

    Our work is closely related to \citet{bardhi_attributes:_2024}, who examines a strategic multi-attribute learning problem where a project's payoff is composed of correlated attributes, weighted differently by a decision-maker and a researcher. The researcher samples a finite number of attributes and uses the observed realizations to draw inferences about other attributes. In this delegated learning problem, \citet{bardhi_attributes:_2024} explores how the researcher can optimally leverage correlation between the attributes to determine \emph{which} attributes to learn about.\footnote{More broadly, the framework of \citet{bardhi_attributes:_2024} is related to the Gaussian sampling literature such as \citet{bardhi_local_2023}, \citet{callander_searching_2011}, and \citet{carnehl_quest_2024}.} Importantly, this learning decision is binary for each attribute. In contrast, our study encompasses both the extensive and the intensive margin, so that we can derive which attributes the research learns about and in what intensity. Moreover, abstracting from correlation between attributes allows us to obtain closed-form solutions, giving a clear view on the effects of preference misalignment and enabling the exploration of various applications.\footnote{We discuss the case of correlated attributes in more detail in Section \ref{sec:correlation}.}

	\citet{kirneva_informing_2023}, building on \citet{tamura2018bayesian}, considers a related multi-dimensional setting, where the decision-maker can acquire additional costly information on top of what is provided by the researcher. The need to influence the decision-maker's learning strategy (and not only the decision) then shapes the researcher's learning strategy. The researcher may then provide partial information about the attribute on which the players' preferences are misaligned in order to divert the decision-maker's learning away from it.\footnote{Similar insights---namely, that when the receiver can acquire information, the sender's choices are primarily driven by a desire to influence the learning process rather than the final decision---have also been obtained by \citet{matveenko_sparking_2023}.} In contrast, we do not allow for independent learning by the decision-maker, creating a distinct strategic environment and shutting down the ``attention diversion'' channel.

	More broadly, our paper is related to the literature on strategic learning in Gaussian environments, see \cite{veldkamp_information_2011} for an overview. While there is extensive literature in macroeconomics in finance using such models, we are not aware of any models of delegated learning using this framework. Related also is the literature on information design and Bayesian Persuasion in multi-dimensional Gaussian settings, see \citet{tamura2018bayesian}, \citet{dworczak_persuasion_2024}, and \citet{miyashita_lqg_2025} for recent contributions. Unlike these contributions, we primarily focus on the effects of misalignment between the researcher and the decision-maker.

	Our framework is related to \citet{liang_dynamically_2022}, who study a single agent's dynamic learning under attribute correlation. In contrast, we consider a static framework and  focus on strategic learning, abstracting  from correlation to highlight the role of preference misalignment between players. Notably, there is a key parallel in our findings. Both papers show that a ``greedy'' learning strategy, which achieves the highest reduction in residual uncertainty and is optimal in a single-player model without correlation, remains optimal in the presence of correlation \citep{liang_dynamically_2022} and in the presence of strategic motives (our paper), respectively.

	Our first application, on diversity in organization, connects to the literature on delegated expertise, where a decision-maker delegates learning about the state of interest to a biased researcher.\footnote{Pioneered by \citet{demski_delegated_1987}, this literature has seen renewed interest, for instance, see \citet{deimen_delegated_2019} and \citet{lindbeck_delegation_2020}.} We explore the effects of different biases between the researcher and the decision-maker, which is related to studies by \citet{ball_benefitting_2021}, \citet{che_opinions_2009}, and \citet{ilinov_optimally_2022}.  These papers show that delegating learning to a researcher with some preference misalignment can be optimal in single-dimensional settings, as misalignment encourages the acquisition of more costly information. In contrast, we explore a multi-dimensional setting, where the researcher decides \emph{which attributes} to learn about and to what extent. Our findings in Section \ref{sec:organizations} reveal that higher misalignment on a certain type of bias can be beneficial, as it may prevent the breakdown of learning caused by misalignment on another, orthogonal, type of bias.
    
	Our second application, on discrimination in policymaking, relates to \cite{fosgerau_equilibrium_2023} and \cite{echenique_rationally_2023}, who explore how decision-makers' strategic learning choices reinforce discrimination.\footnote{See \citet{Onuchic2024} for a recent overview of theories of discrimination.} They show that employers' discriminatory beliefs shape job candidates' incentives to invest in their skills, potentially leading to a self-sustaining discriminatory equilibrium. In contrast, we focus on countering discriminatory tendencies by separating learning from decision-making and delegating it to independent advisors. Further, our results in Section \ref{sec:discrimination} complement \citet{liang_algorithm_2024}. They study the fairness-accuracy Pareto frontier in the context of an interaction between an egalitarian researcher and a utilitarian decision-maker, where the researcher may coarsen or ban information about certain attributes. In contrast, we examine the welfare-inequality frontier in a setting where the decision-maker is inherently unfair but can be paired with a researcher holding different preferences, who can flexibly allocate informative tests across different attributes.

	\section{Model}\label{sec:model}
	
	\subsection{Setup}

	\paragraph*{Players.} There are two players, a decision-maker ($D$, she) and a researcher ($R$, he). First, the researcher chooses the learning strategy to examine payoff-relevant attributes of the state of the world by choosing how to allocate a budget of tests $T$. Second, the decision-maker observes the results of these tests and makes a decision. 
	
	\paragraph*{Attributes.} There is an unknown state of the world $\tilde{\theta}= (\tilde{\theta}_1,\hdots,\tilde{\theta}_K)$ consisting of $K\geq 2$ attributes.\footnote{We denote a random variable and its realization by $\tilde{x}$ and $x$, respectively.}	All attributes $\tilde{\theta}_k$ for $k = 1,\ldots,K$ are jointly normally distributed with commonly known prior means $\mu_k^0 \in \mathbb{R}$ and prior variances $\Sigma^0_k>0$; that is, $\tilde{\theta}_k \sim \mathcal{N} \left( \mu_k^0, \Sigma^0_k  \right).$ Moreover, the attributes are independent: $\tilde{\theta}_l \perp \tilde{\theta}_j$ for $l\neq j$.
	
	\paragraph*{Actions.} The decision-maker chooses a decision $d\in \mathbb{R}$. The researcher, given an exogenous budget of tests $T>0$, chooses a test allocation $\tau \in \mathcal{T} := \{\tau \in \mathbb{R}^K_+: \tau_1 + \hdots + \tau_K \leq T\}$, where $\tau_k$ is the amount of tests allocated to learn about attribute $\tilde{\theta}_k$. Allocating $\tau_k$ to attribute $\tilde{\theta}_k$ yields a single observation of a signal,
	\begin{eqnarray} \label{eq:signal_defn}
		\tilde{s}_k(\tau) = \tilde{\theta}_k + \tilde{\varepsilon}_k,
	\end{eqnarray}
	where $\tilde{\varepsilon}_k \sim \mathcal{N}(0,1/\tau_k)$, and where $\tilde{\varepsilon}_k$ is independent of all other random variables.\footnote{When $\tau_k = 0$, the signal $\tilde{s}_k(\tau)$ is considered completely uninformative about $\tilde{\theta}_k$.}$^,$\footnote{Put differently, the researcher chooses how to allocate a budget of precision $T$ across different attributes.  An equivalent interpretation, abstracting from integer constraints, is that allocating $\tau_k$ tests to attribute $k$ produces $\tau_k$ independent realizations of the standard normal signal $\theta_k + \mathcal{N}(0,1)$.}
	When the researcher chooses $\tau=(0,\hdots,0)$, we say \emph{the researcher abstains from learning}. The test allocation $\tau = (\tau_1, \tau_2, \hdots, \tau_K)$ and the subsequent signal realizations $s = (s_1, s_2, \hdots, s_K)$ are publicly observed, and the players are symmetrically informed throughout the game.

	\paragraph*{Payoffs.}
    Each player $i=D,R$ has a bliss point, which is
    a linear combination of the realized attribute values:
	\begin{align*}
		b^i(\theta) :=  \alpha_1^i\theta_1 + \hdots + \alpha_K^i\theta_K
	\end{align*}
	for commonly known weights $\alpha^i = (\alpha_1^i,\hdots, \alpha_K^i) \in \mathbb{R}^K_+$.\footnote{\label{foot:alpha_positive}The non-negativity assumption $\alpha^i_k \geq 0$ simplifies the exposition and is without loss of generality. If the players' weights $\alpha^{D}_k$ and $\alpha^R_k$ had opposite signs, the researcher would never learn about attribute $\tilde{\theta}_k$, effectively equivalent to setting $\alpha^R_k = 0$. Additionally, since any attribute can be replaced by its negative, if $\alpha^{D}_k, \alpha^R_k < 0$ for some $k$, we can instead consider $-\tilde{\theta}_k$ with positive weights $-\alpha^{D}_k, -\alpha^R_k > 0$.} 
	The players may weigh attributes differently, which is the source of their conflict. When the decision-maker takes a decision $d\in \mathbb{R}$ and the realized state is $\theta$, player $i$ obtains ex post utility
	\begin{eqnarray} \nonumber
		u^i(d,\theta) = - (d-b^i(\theta))^2.
	\end{eqnarray}

	From the ex ante perspective, $i$'s bliss point $b^i(\theta)$ is distributed according to
	\begin{equation} 
	\nonumber \begin{aligned}
		\tilde{b}^i &:= b^i(\tilde{\theta}) \sim 
		\mathcal{N}
		\left(
		b_0^i, \sigma^{2,i}_0 
		\right),
	\end{aligned}
	\end{equation}
	where $b_0^i = \alpha_1^i \mu_1^0 + \hdots + \alpha_K^i \mu_K^0$ is player $i$'s ex ante expectation of their bliss point, and $\sigma^{2,i}_0 = \left(\alpha_1^i\right)^2 \Sigma^0_1 + \hdots + \left(\alpha_K^i\right)^2 \Sigma^0_K$ is the prior variance of player $i$'s bliss point. 

	\paragraph*{Timing.} First, at the ex ante stage, the researcher chooses a publicly observable test allocation $\tau$ to maximize his ex ante expected utility
	\begin{align} \label{eq:util_exante_R}
		V^R(\tau) := \mathbb{E} \left[ u^R \left( d \left( \tilde{s}, \tau \right), \tilde{\theta} \right) \right],
	\end{align} 
	where, given some anticipated decision strategy $d(s,\tau)$, the expectation is taken over the induced distribution of the decisions and the state of the world.
	Next, nature draws the state realization $\theta$ and signal realizations $s$. The players observe $s$ and update their beliefs about the state (and bliss points). Then, at the interim stage, the decision-maker chooses decision $d$ to maximize her interim expected utility 
	\begin{align} \label{eq:util_interim_DM}
		U^{D}(d, s, \tau) := \mathbb{E} \left[ u^{D} \left( d, \tilde{\theta} \right) \right],
	\end{align}
	where the expectation is taken over the state.
	Finally, at the ex post stage, payoffs are realized. 

	\paragraph*{Equilibrium.} A weak Perfect Bayesian Equilibrium of the game (henceforth, ``equilibrium'') is a triple $(\tau^*, d^*(s, \tau), \tilde{\theta}(s,\tau))$ such that
	\begin{enumerate}
		\item the researcher's test allocation strategy $\tau^*\in \mathcal{T}$ maximizes his expected utility \eqref{eq:util_exante_R} given the decision-maker's strategy $d^*(s,\tau)$ and her prior belief; whenever the researcher is indifferent between some $\tau$ and abstaining from learning, he chooses the latter;\footnote{The tie-breaking rule resolves a possible multiplicity of equilibria, which only arises in non-generic cases.}
		\item the decision-maker's decision strategy $d^*(s,\tau): \mathbb{R}^K \times \mathcal{T} \to \mathbb{R}$ maximizes her expected utility \eqref{eq:util_interim_DM} given her posterior beliefs $\tilde{\theta}(s,\tau)$; and
		\item the decision-maker's posterior beliefs $\tilde{\theta}(s,\tau): \mathbb{R}^K \times \mathcal{T} \to \varDelta(\mathbb{R}^K)$ are obtained via Bayes' rule  for all possible signal realizations $s$, given the researcher's choice $\tau$.\footnote{We pin down the off-equilibrium-path beliefs using Bayes' rule as well. Given the researcher's test allocation strategy $\tau$ is observable, and he has no private information when making that choice, this restriction appears reasonable. The same off-path beliefs would be uniquely pinned down by stronger concepts like Sequential Equilibrium, although there are known technical issues associated with finding Sequential Equilibria in games with infinitely many actions and signals \citep{myerson2020perfect}, which motivates us to use weak PBE as our equilibrium concept.}
	\end{enumerate}

	\subsection{Preliminary analysis: belief updating}\label{sec:prelim}

	Given test allocation $\tau=(\tau_1,\hdots,\tau_K)$ and signal realizations $s=(s_1,\hdots,s_K)$, let $\tilde{\theta}_k(s_k,\tau_k)$ and $\tilde{b}^i(s,\tau)$ denote the posterior beliefs about attribute $\tilde{\theta}_k$ and about player $i$'s bliss point, respectively. For every $k$, we get	\begin{eqnarray}\label{eq:posterior_attributes} 
		\tilde{\theta}_k(s_k,\tau_k)
		\sim \mathcal{N} \left( \hat{\mu}(s_k,\tau_k), \hat{\Sigma}_k(\tau_k) \right),
	\end{eqnarray}
	where	\begin{eqnarray}\label{eq:posterior_mean_attributes} 
		\hat{\mu}_k(s_k,\tau_k) &=& \frac{1}{1+\tau_k\Sigma^0_k} \mu^0_k + \frac{\tau_k \Sigma^0_k}{1 + \tau_k\Sigma^0_k} s_k, 
		\\ 
		\label{eq:posterior_var_attributes}
		\hat{\Sigma}_k(\tau_k) &=& \frac{\Sigma^0_k}{1+\tau_k \Sigma^0_k}.
	\end{eqnarray}
	Since $b^i(\theta) = \sum_k \alpha^i_k \theta_k$ and attributes are independent, we get
	\begin{eqnarray}\label{eq:bliss_posterior}
		\tilde{b}^i(s,\tau) \sim \mathcal{N}\left( \sum_k \alpha^i_k \hat{\mu}_k(s_k,\tau_k), \hat{\sigma}^{2,i}(\tau) \right),
	\end{eqnarray}
	where
	\begin{eqnarray}\label{eq:bliss_posterior_var}
		\hat{\sigma}^{2,i}(\tau) = (\alpha^i_1)^2 \hat{\Sigma}_1(\tau_1) + \hdots + (\alpha^i_K)^2 \hat{\Sigma}_K(\tau_K).
	\end{eqnarray}
	Due to the properties of normal distribution, the variances $\hat{\Sigma}_k(\tau_k)$, and hence also the variance of $\tilde{b}^i(s,\tau)$,  do not depend on signal realizations $s$.

	\section{The single-player case}\label{sec:single_player}
	
	We begin by analyzing the scenario in which the decision-maker and the researcher share identical preferences, referred to as the single-player case. Here, we omit the players' superscripts $i =D,R$ and set 
    $\alpha := \alpha^{R} = \alpha^{D}$, $b(\theta) := b^{R}(\theta) = b^{D}(\theta)$, and so forth. For simplicity, we refer to this single player as the \emph{agent} in this section. In the learning stage, the agent allocates tests to learn about the attributes, and in the decision stage, after observing the signal realizations, the agent takes a decision. 
	
	The optimal decision strategy is straightforward.\footnote{We use ``optimal'' strategy to denote the equilibrium solution defined in Section \ref{sec:model} when the researcher's and decision-maker's weights coincide.} Given a test allocation $\tau$ and signal realizations $s$, the expected utility \eqref{eq:util_interim_DM} is maximized by
	\begin{eqnarray}\label{eq:optimal_decision}
		d^*(s,\tau) = \mathbb{E}[\tilde{b}(s,\tau)] = \alpha_1 \hat{\mu}_1(s_1,\tau_1) + \hdots + \alpha_K \hat{\mu}_K (s_K,\tau_K),
	\end{eqnarray}
	where $\hat{\mu}_k(s_k,\tau_k)$ for every $k=1,\hdots, K$ is given by equation \eqref{eq:posterior_mean_attributes}. Conditional on $\tau$, the optimal decision is ex ante distributed as
	\begin{eqnarray}\label{eq:distr_posterior_mean}
		\tilde{d}^\ast(\tau):= {d}^*(\tilde{s},\tau) \sim \mathcal{N} \left( b_0, \psi(\tau) \right),
	\end{eqnarray}
	where the variance is given by $\psi(\tau) := \sigma^{2}_0 - \hat{\sigma}^{2}(\tau)$.
	Anticipating the optimal decision strategy, the problem of choosing a test allocation can be reformulated as follows.
	\begin{lemma}\label{lemma:single_player_maximization}
		In the single-player case, the optimal test allocation is given by	\begin{eqnarray}\label{eq:single_player_maximization2}
			\arg \max_{\tau \in \mathcal{T}}\ V(\tau)= \arg \min_{\tau \in \mathcal{T}} \ \hat{\sigma}^2(\tau).
		\end{eqnarray}
	\end{lemma}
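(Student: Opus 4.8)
The plan is to show that, once the optimal decision rule is plugged in, the researcher's (here, the agent's) ex ante objective $V(\tau)$ is literally equal to $-\hat\sigma^2(\tau)$, so that the two optimization problems in \eqref{eq:single_player_maximization2} have identical solution sets. No constants or monotone transformations are even needed.

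First I would substitute the optimal decision. By sequential rationality the agent chooses $d^*(s,\tau)=\mathbb{E}[\tilde b(s,\tau)]$ as in \eqref{eq:optimal_decision}, so, writing $\tilde b = b(\tilde\theta)$ and letting the expectation run over the joint law of $(\tilde\theta,\tilde s)$ induced by $\tau$,
\[
V(\tau) = \mathbb{E}\!\left[u\!\left(d^*(\tilde s,\tau),\tilde\theta\right)\right] = -\,\mathbb{E}\!\left[\left(\mathbb{E}[\tilde b \mid \tilde s]-\tilde b\right)^2\right].
\]
Next I would apply the tower property: conditioning on $\tilde s$ and using that $\mathbb{E}[\tilde b\mid\tilde s]$ is the conditional mean of $\tilde b$,
\[
\mathbb{E}\!\left[\left(\mathbb{E}[\tilde b \mid \tilde s]-\tilde b\right)^2\right] = \mathbb{E}\!\left[\operatorname{Var}(\tilde b\mid\tilde s)\right].
\]
By \eqref{eq:bliss_posterior}–\eqref{eq:bliss_posterior_var} the posterior variance of the bliss point is $\hat\sigma^2(\tau)$, and (as observed right after \eqref{eq:bliss_posterior_var}) this quantity does not depend on the signal realizations; hence the outer expectation is trivial and $V(\tau)=-\hat\sigma^2(\tau)$. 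As an alternative route one can argue directly from \eqref{eq:distr_posterior_mean}: since $\tilde d^*(\tau)=\mathbb{E}[\tilde b\mid\tilde s]$ satisfies $\operatorname{Cov}(\tilde d^*(\tau),\tilde b)=\operatorname{Var}(\tilde d^*(\tau))=\psi(\tau)$ and $\psi(\tau)=\sigma^2_0-\hat\sigma^2(\tau)$, expanding the square yields $\mathbb{E}[(\tilde d^*(\tau)-\tilde b)^2]=\psi(\tau)+\sigma^2_0-2\psi(\tau)=\hat\sigma^2(\tau)$.

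Finally, since $\tau\mapsto V(\tau)=-\hat\sigma^2(\tau)$ on all of $\mathcal{T}$, the set of maximizers of $V$ over $\mathcal{T}$ coincides with the set of minimizers of $\hat\sigma^2$ over $\mathcal{T}$, which is exactly the claimed identity. I do not anticipate a genuine obstacle here: the only two points needing a sentence of justification are that the agent's relevant decision is the posterior mean of her bliss point (already established via \eqref{eq:util_interim_DM}–\eqref{eq:optimal_decision}) and that the posterior variance $\hat\sigma^2(\tau)$ is deterministic (already noted in Section \ref{sec:prelim}); everything else is the one-line orthogonality computation above.
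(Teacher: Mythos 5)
Your proposal is correct and follows essentially the same route as the paper: both establish the pointwise identity $V(\tau)=-\hat\sigma^2(\tau)$ after substituting the posterior-mean decision rule, the paper by expanding the square and using $\cov(\tilde d^*(\tau),\tilde b)=\operatorname{var}(\tilde d^*(\tau))=\psi(\tau)=\sigma_0^2-\hat\sigma^2(\tau)$ (exactly your ``alternative route''), and your primary argument merely repackages this as the law of total variance plus the observation that the posterior variance is deterministic. No gap; nothing further needed.
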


	Lemma \ref{lemma:single_player_maximization} establishes that the agent chooses $\tau$ to minimize the residual uncertainty $\hat{\sigma}^2$ regarding the bliss point $\tilde{b}$ (i.e., the residual uncertainty concerning the optimal decision) that remains at the posterior beliefs. 
    Our first main result below characterizes the test allocation that accomplishes this objective. For clarity, we present the characterization in the case of two attributes, with the general form for $K\geq 2$ available in the appendix.

    \begin{theorem}\label{thm:benchmark}
        Consider the single-player case and let $K=2$. Without loss of generality, let $\alpha_1 \Sigma^0_{1} \geq \alpha_2 \Sigma^0_{2}$.
		Then, the unique optimal test allocation $\tau^*=(\tau^*_1,\tau^*_2)$ is as follows: if $\alpha_1 = \alpha_2=0$, then $\tau^* = (0, 0)$; 
		otherwise,
		\begin{align*}
			\tau^*_1 = 
			\begin{cases}
				T & \text{ if }\ T\leq \bar{T} \\
				\bar{T} + \frac{\alpha_1}{\alpha_1+\alpha_2}  (T-\bar{T}) & \text{ if }\ T>\bar{T}
			\end{cases}
		\end{align*}
		and $\tau^*_2=T-\tau^*_1$, where $\bar{T} = \frac{\alpha_1 }{\alpha_2\Sigma^0_2} - \frac{1}{\Sigma^0_1}$.
    \end{theorem}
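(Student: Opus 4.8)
The plan is to invoke Lemma~\ref{lemma:single_player_maximization}, which reduces the agent's problem to $\min_{\tau\in\mathcal{T}}\hat\sigma^2(\tau)$ with $\hat\sigma^2(\tau)=\alpha_1^2\hat\Sigma_1(\tau_1)+\alpha_2^2\hat\Sigma_2(\tau_2)$ and $\hat\Sigma_k(\tau_k)=\Sigma^0_k/(1+\tau_k\Sigma^0_k)$ by \eqref{eq:posterior_var_attributes}--\eqref{eq:bliss_posterior_var}. First I would dispose of the case $\alpha_1=\alpha_2=0$: then $\hat\sigma^2\equiv 0$, so the agent is indifferent across all allocations and, in particular, indifferent to abstaining, and the tie-breaking rule yields $\tau^*=(0,0)$. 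For the remaining case the normalization $\alpha_1\Sigma^0_1\ge\alpha_2\Sigma^0_2$ forces $\alpha_1>0$, and since $\hat\sigma^2$ is then strictly decreasing in $\tau_1$, any allocation with $\tau_1+\tau_2<T$ is strictly improved by raising $\tau_1$; hence the optimum satisfies $\tau_1+\tau_2=T$.

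Substituting $\tau_2=T-\tau_1$ turns the problem into minimizing $g(\tau_1):=\alpha_1^2\Sigma^0_1/(1+\tau_1\Sigma^0_1)+\alpha_2^2\Sigma^0_2/(1+(T-\tau_1)\Sigma^0_2)$ over $\tau_1\in[0,T]$. Each summand is a function of the form $c/(1+ax)$ with $a,c\ge 0$ composed with an affine map, hence convex, and the first is strictly convex because $\alpha_1>0$; thus $g$ is strictly convex and has a unique minimizer, characterized by the first-order condition together with the standard corner analysis. Setting $g'(\tau_1)=0$ and taking positive square roots gives the ``equal marginal reduction'' condition $\alpha_1\Sigma^0_1/(1+\tau_1\Sigma^0_1)=\alpha_2\Sigma^0_2/(1+(T-\tau_1)\Sigma^0_2)$; cross-multiplying yields a linear equation in $\tau_1$ with a unique root, which after simplification equals $\tau_1^c=\frac{\alpha_2}{\alpha_1+\alpha_2}\bar T+\frac{\alpha_1}{\alpha_1+\alpha_2}T=\bar T+\frac{\alpha_1}{\alpha_1+\alpha_2}(T-\bar T)$, where $\bar T=\frac{\alpha_1}{\alpha_2\Sigma^0_2}-\frac{1}{\Sigma^0_1}$ (when $\alpha_2=0$ one reads $\bar T=+\infty$ and $g$ is simply strictly decreasing).

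It then remains to locate $\tau_1^c$ relative to $[0,T]$. The normalization $\alpha_1\Sigma^0_1\ge\alpha_2\Sigma^0_2$ gives $\bar T\ge 0$, so $\tau_1^c\ge\frac{\alpha_1}{\alpha_1+\alpha_2}T>0$ and the lower corner is never selected; and $\tau_1^c\le T\iff\bar T\le T$. Hence: if $T>\bar T$ then $\tau_1^c\in(0,T)$ is the unique minimizer, giving the second branch; if $T\le\bar T$ then $\tau_1^c\ge T$, so by strict convexity $g'<0$ on $[0,T)$ and the minimizer is the corner $\tau_1^*=T$, giving the first branch. In either case $\tau_2^*=T-\tau_1^*$, and uniqueness is inherited from strict convexity of $g$.

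I do not expect a serious obstacle: this is a convex one-dimensional optimization. The points requiring care are the bookkeeping across the degenerate cases ($\alpha_1=\alpha_2=0$, and $\alpha_2=0$ where $\bar T$ is infinite), verifying that the budget constraint binds, and matching the algebraic form of the interior root $\tau_1^c$ to the expression in the statement via the definition of $\bar T$.
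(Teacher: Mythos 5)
Your proof is correct. The route differs from the paper's in a way worth noting: the paper does not prove the $K=2$ statement directly but instead establishes a general-$K$ characterization (Theorem~\ref{thm:benchmark_general}) via the full KKT system, a substitution $\zeta=1/\sqrt{\xi}$ in the Lagrange multiplier, and the fundamental theorem of calculus to express $\tau^*_k$ through the budget thresholds $\bar T_1\le\cdots\le\bar T_{K+1}$; the two-attribute theorem is then read off as a special case. You instead exploit $K=2$ to eliminate the budget constraint (after showing it binds because $\alpha_1>0$ makes $\hat\sigma^2$ strictly decreasing in $\tau_1$) and reduce everything to a one-dimensional strictly convex minimization, whose interior root you correctly match to $\bar T+\frac{\alpha_1}{\alpha_1+\alpha_2}(T-\bar T)$ and whose corner analysis hinges, as it should, on $\bar T\ge 0$ (equivalent to the normalization $\alpha_1\Sigma^0_1\ge\alpha_2\Sigma^0_2$) and on $\tau_1^c\le T\iff\bar T\le T$. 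Your treatment of the degenerate cases ($\alpha_1=\alpha_2=0$ via the tie-breaking rule; $\alpha_2=0$ via $\bar T=+\infty$) and your uniqueness argument (budget must bind, then strict convexity of $g$ on the budget line) are both sound. What your approach buys is brevity and elementarity for the two-attribute case; what it gives up is the general-$K$ water-filling structure, which the paper needs elsewhere (e.g., the displayed general form of $\tau^*_k$ and the interpretation of the thresholds $\bar T_j$ as successive activation points for additional attributes).
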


    Theorem \ref{thm:benchmark} shows that
    the agent's optimal test allocation depends on the weights $\alpha_k$ solely through their \emph{ratios}. The ex ante marginal value of learning about attribute $\tilde{\theta}_k$ is $\frac{\partial V(\tau)}{\partial \tau_k}|_{\tau=(0,\hdots,0)} = \left(\alpha_k \Sigma^0_k\right)^2$. The term $\alpha_k \Sigma^0_k$ reflects  both the attribute's relevance to the agent (weight $\alpha_k$) and its prior uncertainty (prior variance $\Sigma_{k}^0$). The budget size $T$ determines whether the agent learns exclusively about the attribute with the higher ex ante marginal value of learning $(T\leq \bar{T})$ or if he learns about both attributes $(T>\bar{T})$. Note that allocating $\tau_k$ tests to learn about attribute $\tilde{\theta}_k$ decreases its interim marginal value of learning, which is given by $\frac{\partial V(\tau)}{\partial \tau_k} =$ $\left(\frac{\alpha_k \Sigma_k^0}{1 + \tau_k \Sigma_k^0}\right)^2$.  
    Evaluating the interim marginal value of learning for attribute $\tilde{\theta}_1$ at $\tau_1 = \bar{T}$, we obtain $(\alpha_2 \Sigma_2^0)^2$, which matches the ex ante marginal value of learning  about attribute $\tilde{\theta}_2$. Hence, with $\bar{T}$ tests allocated to attribute $\tilde{\theta}_1$ and none to $\tilde{\theta}_2$, the marginal values of learning about the two attributes are equalized. Therefore, any test budget $T\leq \bar{T}$ is allocated exclusively to attribute $\tilde{\theta}_1$, and the agent learns about attribute $\tilde{\theta}_2$ only when more tests are available. These additional tests are allocated between $\tilde{\theta}_1$ and $\tilde{\theta}_2$ in proportion to weights $\alpha_1$ and $\alpha_2$, keeping the interim marginal values of learning about the two attributes equal.\footnote{
   We can connect this result in two ways to \citet{liang_dynamically_2022}. First, Theorem \ref{thm:benchmark}---which abstracts from strategic motives---can be obtained from their results by considering a discrete time version of their model without correlation. Second, it thus immediately follows from their results that our 
    Theorem \ref{thm:benchmark} above (and also Theorem \ref{thm:strategic_solution} below) are robust to allowing for dynamic learning----as in their paper---and do not hinge on the static choice of the test allocation.
    }
     
	Note that the agent exhausts the entire budget $T$, except in the degenerate case where neither attribute holds any importance $(\alpha_1=\alpha_2=0)$. In this scenario, consistent with our equilibrium selection rule, the agent abstains from learning.

	\section{Strategic players}\label{sec:strategic_players}
	
	We now turn to the general case where the players have different preferences, leading to a conflict of interest. We proceed by backward induction, first solving the decision-maker's problem and then addressing the researcher's learning problem.
		
	The decision-maker's problem is analogous to the decision stage in the single-player problem with $\alpha=\alpha^{D}$. In particular, maximizing the expected utility \eqref{eq:util_interim_DM} over $d$ given some test allocation $\tau$ yields the equilibrium decision strategy $d^*(s,\tau) = \mathbb{E}[\tilde{b}^{D}(s,\tau)] = \alpha^{D}_1 \hat{\mu}(s_1,\tau_1) + \hdots + \alpha^{D}_K \hat{\mu}(s_K,\tau_K)$. Thus, for a given $\tau$, the equilibrium decision is ex ante distributed as $\tilde{d}^\ast(\tau):= d^*(\tilde{s},\tau)\sim N(b_0^{D}, \psi^{D}(\tau))$, where $\psi^{D}(\tau) := \sigma^{2,D}_0 - \hat{\sigma}^{2,D}(\tau)$	and $\hat{\sigma}^{2,D}(\tau)$ is given by \eqref{eq:bliss_posterior_var}.
	
	Turning to the researcher's learning problem, Lemma \ref{lemma:single_player_maximization} shows that if the researcher's optimal decisions were implemented, he would choose $\tau$ to minimize $\hat{\sigma}^{2,R}(\tau)$, the posterior uncertainty about his optimal decision. In a strategic setting, however, the researcher  prioritizes attributes where his weights align closely with those of the decision-maker. 
    The following lemma captures the essence of the researcher's maximization problem.	

	\begin{lemma}\label{lemma:strategic_player_maximization}
		The researcher's equilibrium test allocation solves
		\begin{eqnarray}\label{eq:strategic_player_maximization}
			\max_{\tau \in \mathcal{T}} V^R(\tau) = \max_{\tau \in \mathcal{T}}
			\Bigg\{ \underbrace{2 \sum_k \alpha_k^{D} \alpha_k^R \Sigma^0_k \hat{\Sigma}_k(\tau_k) \tau_k}_{2\cov(\tilde{d}^*(\tau),\tilde{b}^R)}
			-\underbrace{ \left(\sigma^{2,D}_0 - \hat{\sigma}^{2,D}(\tau) \right)}_{\psi^{D}(\tau)} + V^R(\emptyset) \Bigg\},
		\end{eqnarray}
		where $\hat{\Sigma}_k(\tau_k)$ for every $k=1,\hdots,K$ are given by equation \eqref{eq:posterior_var_attributes}, $\hat{\sigma}^{2,D}(\tau)$ by equation \eqref{eq:bliss_posterior_var}, and $V^R(\emptyset)$ is the researcher's expected utility under no learning (constant in $\tau$).
	\end{lemma}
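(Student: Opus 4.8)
The plan is to substitute the decision-maker's equilibrium decision rule into the researcher's objective \eqref{eq:util_exante_R} and then decompose the resulting expected quadratic loss into a covariance term, a variance term, and a constant. From the analysis preceding the lemma, in equilibrium $d^*(s,\tau) = \mathbb{E}[\tilde b^D(s,\tau)]$, so $\tilde d^*(\tau) := d^*(\tilde s,\tau)$ is the posterior mean of the decision-maker's bliss point. Writing $V^R(\tau) = -\mathbb{E}\big[(\tilde d^*(\tau) - \tilde b^R)^2\big]$ and expanding the square, I would use the identity
\[
  \mathbb{E}\big[(\tilde d^*(\tau) - \tilde b^R)^2\big]
  = \operatorname{Var}(\tilde d^*(\tau)) + \operatorname{Var}(\tilde b^R)
  + \big(\mathbb{E}[\tilde d^*(\tau)] - \mathbb{E}[\tilde b^R]\big)^2 - 2\cov(\tilde d^*(\tau), \tilde b^R).
\]

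Three of these pieces follow from facts already recorded. By \eqref{eq:distr_posterior_mean} (equivalently, by the law of total variance applied to $\tilde b^D$, using that $\mathbb{E}[\tilde b^D\mid \tilde s] = \tilde d^*(\tau)$ and that $\operatorname{Var}(\tilde b^D\mid \tilde s) = \hat\sigma^{2,D}(\tau)$ is deterministic by \eqref{eq:bliss_posterior_var}), we have $\mathbb{E}[\tilde d^*(\tau)] = b_0^D$ and $\operatorname{Var}(\tilde d^*(\tau)) = \psi^{D}(\tau) = \sigma^{2,D}_0 - \hat\sigma^{2,D}(\tau)$; from the prior, $\mathbb{E}[\tilde b^R] = b_0^R$ and $\operatorname{Var}(\tilde b^R) = \sigma^{2,R}_0$. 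These combine to contribute the constant $-(b_0^D - b_0^R)^2 - \sigma^{2,R}_0$, which I would identify with $V^R(\emptyset)$ by evaluating the objective directly at $\tau=\emptyset$: then no signal is informative, $\tilde d^*(\emptyset)\equiv b_0^D$, so $V^R(\emptyset) = -\mathbb{E}\big[(b_0^D - \tilde b^R)^2\big] = -(b_0^D - b_0^R)^2 - \sigma^{2,R}_0$.

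The only genuine computation is the covariance term. Since $\tilde d^*(\tau) = \sum_k \alpha^D_k\,\hat\mu_k(\tilde s_k,\tau_k)$ and $\tilde b^R = \sum_j \alpha^R_j\,\tilde\theta_j$, and the attributes together with the signal noises $\tilde\varepsilon_k$ are mutually independent, every cross term with $k\neq j$ vanishes, so $\cov(\tilde d^*(\tau),\tilde b^R) = \sum_k \alpha^D_k\alpha^R_k\,\cov\!\big(\hat\mu_k(\tilde s_k,\tau_k),\tilde\theta_k\big)$. Plugging the posterior-mean formula \eqref{eq:posterior_mean_attributes} together with $\tilde s_k = \tilde\theta_k + \tilde\varepsilon_k$ into this covariance gives $\cov\!\big(\hat\mu_k(\tilde s_k,\tau_k),\tilde\theta_k\big) = \frac{\tau_k\Sigma^0_k}{1+\tau_k\Sigma^0_k}\,\Sigma^0_k = \Sigma^0_k\,\hat\Sigma_k(\tau_k)\,\tau_k$, where the last equality is \eqref{eq:posterior_var_attributes}. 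Substituting back and collecting terms yields exactly \eqref{eq:strategic_player_maximization}. I do not expect a substantive obstacle: the argument is a bias--variance decomposition plus one covariance calculation that exploits independence of the attributes; the only points requiring care are justifying $\operatorname{Var}(\tilde d^*(\tau)) = \sigma^{2,D}_0 - \hat\sigma^{2,D}(\tau)$ (cleanest via the law of total variance) and noting that, since $V^R(\emptyset)$ is constant in $\tau$, it can be carried along without affecting the set of maximizers.
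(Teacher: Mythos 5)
Your proposal is correct and follows essentially the same route as the paper: expand $-\mathbb{E}[(\tilde d^*(\tau)-\tilde b^R)^2]$ into variance, covariance, and mean terms, identify $\operatorname{Var}(\tilde d^*(\tau))=\psi^D(\tau)$ and the constant $-(b_0^D-b_0^R)^2-\sigma^{2,R}_0$ with $V^R(\emptyset)$, and compute $\cov(\tilde d^*(\tau),\tilde b^R)$ term by term using independence of attributes and noise together with the posterior-mean formula. The only cosmetic difference is that you use the bias--variance form of the expansion while the paper works with raw second moments, which is an equivalent bookkeeping choice.
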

	Lemma \ref{lemma:strategic_player_maximization} shows that the researcher pursues two interdependent objectives. First, he seeks to align the final decision with his own bliss point: $\uparrow \cov \left( \tilde{d}^*(\tau),\tilde{b}^R \right)$. Second, he aims to prevent the decision-maker from overreacting to new information relative to the researcher's bliss point: $\downarrow \left(\psi^{D}(\tau)-\cov \left( \tilde{d}^*(\tau),\tilde{b}^R \right) \right)$. Together, these effects constitute the added value of learning.\footnote{
		The trade-off illustrated by Lemma~\ref{lemma:strategic_player_maximization} also arises in Theorem 2 in \citet{bardhi_attributes:_2024}. However, the specifics of $\psi^{D}$ and $\cov \left( \tilde{d}^*(\tau),\tilde{b}^{R} \right)$ differ between the two models due to the distinct ways in which uncertainty and player payoffs are modeled.}

	To concisely formulate our main result---the solution to the researcher's problem---we introduce a final piece of notation. For all $k=1,\hdots,K$, let
	\begin{equation} \label{eq:alpha_n_lambda}
		\begin{aligned} 
		\Delta_k &:= |\alpha_k^{R}-\alpha_k^{D}|,
		\\
		\lambda_k &:= 2\alpha_k^{R}\alpha_k^{D}-\left(\alpha^{D}_k\right)^2 =  (\alpha_k^{R})^2 - (\Delta_k)^2.
		\end{aligned}
	\end{equation}
	Hence, $\Delta_k$ is the distance between the players' weights on that attribute, and $\lambda_k$ is a parameter that reflects the aforementioned trade-off. 
	
	\begin{theorem}\label{thm:strategic_solution}
		The equilibrium test allocation $\tau^*$ coincides with the solution of the single-player problem with \emph{auxiliary weights}
		\begin{eqnarray}\label{eq:auxiliary_weights}
			\hat{\alpha}_k := \sqrt{ \max \{ \lambda_k, 0 \} }
			\quad \text{for all } k=1,\hdots,K.
		\end{eqnarray} 
	\end{theorem}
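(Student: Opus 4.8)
The plan is to transform the researcher's program \eqref{eq:strategic_player_maximization} into the canonical single-player form of Lemma~\ref{lemma:single_player_maximization} with the weights $\hat{\alpha}_k$, and then appeal to Theorem~\ref{thm:benchmark}. The first step is purely algebraic. From \eqref{eq:posterior_var_attributes} one reads off the identity $\Sigma^0_k\,\hat{\Sigma}_k(\tau_k)\,\tau_k = \Sigma^0_k - \hat{\Sigma}_k(\tau_k)$, which turns the covariance term in \eqref{eq:strategic_player_maximization} into $2\sum_k \alpha^{D}_k\alpha^R_k\bigl(\Sigma^0_k - \hat{\Sigma}_k(\tau_k)\bigr)$; expanding $\psi^{D}(\tau)=\sum_k(\alpha^{D}_k)^2\bigl(\Sigma^0_k-\hat{\Sigma}_k(\tau_k)\bigr)$ via \eqref{eq:bliss_posterior_var} and collecting coefficients, the researcher's objective becomes $\sum_k \lambda_k\bigl(\Sigma^0_k-\hat{\Sigma}_k(\tau_k)\bigr) + V^R(\emptyset)$ with $\lambda_k$ as in \eqref{eq:alpha_n_lambda}. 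Discarding the $\tau$-independent constant, the researcher's problem is thus $\min_{\tau\in\mathcal{T}}\sum_k \lambda_k\hat{\Sigma}_k(\tau_k)$, whereas the single-player problem with the auxiliary weights is, by Lemma~\ref{lemma:single_player_maximization}, $\min_{\tau\in\mathcal{T}}\sum_k \hat{\alpha}_k^2\hat{\Sigma}_k(\tau_k)=\min_{\tau\in\mathcal{T}}\sum_k \max\{\lambda_k,0\}\hat{\Sigma}_k(\tau_k)$. So it remains to argue that these two minimizations have the same solution.

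Let $N_+=\{k:\lambda_k>0\}$ (equivalently $\hat{\alpha}_k>0$). When $N_+\neq\emptyset$, I claim every optimal $\tau$ in \emph{either} program satisfies $\tau_k=0$ for all $k\notin N_+$. The reason is a reallocation argument: $\hat{\Sigma}_k$ is strictly decreasing, so if some $\tau_k>0$ with $k\notin N_+$, moving that mass onto a coordinate $j\in N_+$ leaves the coefficient-$\lambda_k$ (respectively $\max\{\lambda_k,0\}$) term non-increasing --- it is zero when $\lambda_k=0$ and strictly falls when $\lambda_k<0$, since $\hat{\Sigma}_k(0)=\Sigma^0_k$ is maximal --- while strictly decreasing the coordinate-$j$ term because $\hat{\Sigma}_j$ strictly decreases under the added mass; the new allocation is still feasible, so the original $\tau$ was not optimal. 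Restricted to $\{\tau\in\mathcal{T}:\tau_k=0\ \forall k\notin N_+\}$ the two objectives coincide up to an additive constant (both reduce to $\sum_{k\in N_+}\lambda_k\hat{\Sigma}_k(\tau_k)$ plus a program-specific constant), so both programs amount to minimizing one and the same function over one and the same set; since Theorem~\ref{thm:benchmark} delivers a unique minimizer of the single-player problem (the auxiliary weights are not all zero), this common minimizer is the shared solution. In the degenerate case $N_+=\emptyset$, i.e. all $\lambda_k\le0$ and all $\hat{\alpha}_k=0$, the researcher's objective is at most $V^R(\emptyset)$, attained at $\tau=(0,\dots,0)$ and also at any allocation supported on the coordinates with $\lambda_k=0$; the tie-breaking rule in the definition of equilibrium selects $\tau^*=(0,\dots,0)$, which is exactly the single-player solution for all-zero weights by the convention in Theorem~\ref{thm:benchmark}. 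This closes the proof.

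I expect the only real obstacle to be the second paragraph --- specifically, upgrading ``setting $\tau_k=0$ off $N_+$ is without loss'' into a statement about the entire set of optimizers (rather than just the optimal value), given that the budget constraint couples all coordinates, and then disposing of the boundary and degenerate configurations ($\lambda_k=0$ attributes and the case $N_+=\emptyset$) so that the equilibrium, after the tie-breaking rule is applied, coincides \emph{exactly}, not merely generically, with the single-player allocation. The remaining content --- the substitution of \eqref{eq:posterior_var_attributes}--\eqref{eq:bliss_posterior_var} into \eqref{eq:strategic_player_maximization} and the identification of $\lambda_k$ --- is routine bookkeeping, and the appeal to Theorem~\ref{thm:benchmark} is a black box.
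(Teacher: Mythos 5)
Your proposal is correct and follows essentially the same route as the paper: the identity $\Sigma^0_k\tau_k\hat{\Sigma}_k(\tau_k)=\Sigma^0_k-\hat{\Sigma}_k(\tau_k)$ reduces the researcher's objective to $-\sum_k\lambda_k\hat{\Sigma}_k(\tau_k)$ plus constants, after which the attributes with $\lambda_k\le 0$ are zeroed out and the problem is identified with the single-player program under the auxiliary weights. The only cosmetic difference is that the paper disposes of the $\lambda_k\le 0$ coordinates by reading the sign of the partial derivative $\partial V^R/\partial\tau_k=\lambda_k(\Sigma^0_k/(1+\tau_k\Sigma^0_k))^2$ and then passing to an adjusted objective, whereas you use an explicit reallocation argument; both handle the degenerate all-$\lambda_k\le 0$ case via the stated tie-breaking rule.
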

	
		
	Theorem \ref{thm:strategic_solution} reveals a key insight: even when the researcher and the decision-maker have different weights, the researcher's equilibrium test allocation coincides with a solution to the single-player problem with some auxiliary weights. Thus, the effect of the strategic motive on the researcher's learning reduces to a shift in attribute weights. Therefore, by comparing these auxiliary weights to the researcher's original weights, we can pinpoint how misalignment drives adjustments in the researcher's learning strategy---insight that would be harder to achieve if the strategic solution deviated fundamentally from the single-player case.
	
	From the researcher's perspective, the ex ante marginal value of learning about attribute $\tilde{\theta}_k$ is proportional to the parameter $\lambda_k$, which---unlike in the single-player case---can be negative. Theorem \ref{thm:strategic_solution} explains the link between the auxiliary weights $\hat{\alpha}_k$ and the parameters $\lambda_k$. Note that $\lambda_k$ (and thus $\hat{\alpha}_k$) is decreasing in the misalignment between the players' weights $\Delta_k$. When $\Delta_k>0$, the researcher views the decision-maker's reaction to information about $\tilde{\theta}_k$ as either excessive ($\alpha^{D}_k> \alpha^R_k$) or insufficient $(\alpha^{D}_k<\alpha^R_k)$, thereby reducing the researcher's incentives to learn about that attribute. If the reaction is too strong ($\alpha_k^{D} \geq 2 \alpha_k^R$) or too weak ($\alpha^{D}_k=0$), the researcher avoids learning about that attribute entirely. If $\lambda_k \leq 0$ for all $k$, the researcher abstains from learning, as any information would lead to undesirable overreaction (or no reaction) by the decision-maker, making the status quo decision the researcher's preferred outcome.\footnote{This observation is related to the fundamental lesson from strategic communication models: if the misalignment between the sender and the receiver becomes too large, then no information is transmitted (c.f. \citealp{crawford_strategic_1982}; or \citealp[Section VA]{kamenica_bayesian_2011}). Our results add nuance to this observation: when misalignment related to one attribute grows large, no information about that attribute is transmitted, but more information about other attributes may be provided. Only if misalignment is large for all attributes does information transmission break down completely.}

	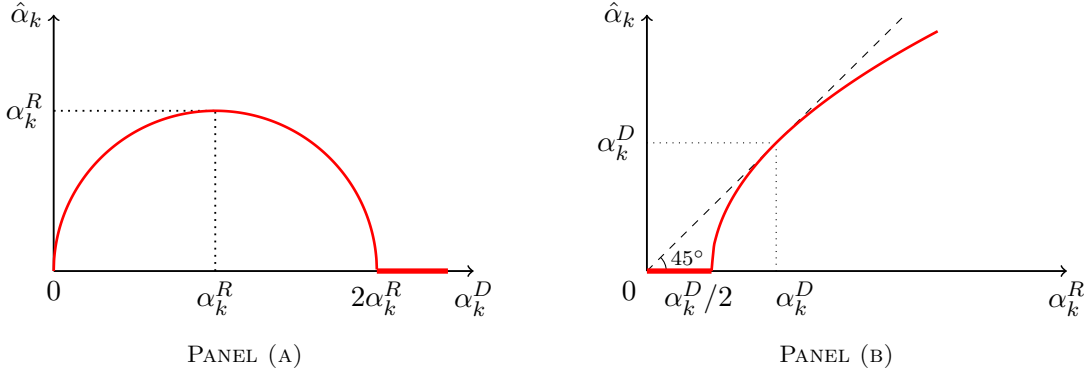
\begin{figure}
		\begin{center}
  			\caption{The researcher's solution: Auxiliary single-player weights}
			\label{fig:alpha_tildes}
			\subfloat[][Panel (a)]{
				\begin{tikzpicture}[scale=0.85]
					\draw (0,0) node[below]{$0$};
					\draw (5,0) node[below]{$2\alpha^R_k$};
					\draw[->, line width=0.7pt] (0,0) -- (0,4) node[left] {$\hat{\alpha}_k$};
					\draw[->, line width=0.7pt] (-0,0) -- (6.5,0) node[below] {$\alpha_k^{D}$};
					
					\draw[dotted, line width=0.7pt]  (2.5,0) -- (2.5,2.5);
					\draw (2.5,0)
					node[below]{$\alpha_k^R$};
					\draw (0,2.5)
					node[left]{$\alpha^R_k$};
					\draw[dotted, line width=0.7pt] (0,2.5) -- (2.5,2.5);
					\draw[smooth, samples=1000, domain=0:5, line width=1pt, red] plot({\x}, {sqrt(\x*(5-\x))});
					\draw[line width=2pt, red] (5,0) -- (6.1,0);
			\end{tikzpicture}}
			\hfill
			\subfloat[][Panel (b)]{
				\begin{tikzpicture}[scale=0.85]
					\draw (0,0) node[below left]{$0$};
					\draw[->, line width=0.7pt] (0,0) -- (0,4) node[left] {$\hat{\alpha}_k$};
					\draw[->, line width=0.7pt] (0,0) -- (6.5,0) node[below] {$\alpha^{R}_k$};
					
					\draw (0.8,0) node[below]{$\alpha^{D}_k/2$};
					\draw[dotted] (2,2) -- (2,0);
                        \draw (2.3, 0) node[below]{$\alpha^{D}_k$};
					\draw[dotted] (2,2) -- (0,2) node[left]{$\alpha^{D}_k$};
					\draw[dashed] (0,0) -- (4,4);
					\draw (0.3,0) arc (0:45:0.3) node[right]{\scriptsize$45^{\circ}$};
					
					\draw[smooth, samples=100, domain=1:4.5, line width=1pt, red] plot({\x}, {sqrt(2*(2*\x-2))});
					\draw[line width=2pt, red] (0,0) -- (1,0);
				\end{tikzpicture}
			}
		\end{center}
		\footnotesize \emph{Notes:} The figures depict $\hat{\alpha}_k$ as a function of the decision-maker's weight, $\alpha^{D}_k$, in panel (a), and as a function of the researcher's weight, $\alpha^{R}_k$, in panel (b), keeping the weight of the other player constant.
	\end{figure}

	When players disagree on the weight of an attribute, the auxiliary weight differs from the weight of either player. Figure \ref{fig:alpha_tildes} illustrates how $\hat{\alpha}_k$ depends on $\alpha^{R}_k$ and $\alpha^{D}_k$, keeping the weight of the other player fixed. Panel (a) shows that $\hat{\alpha}_k$ always satisfies $\hat{\alpha}_k \leq \alpha^{R}_k$: the misalignment effectively reduces the importance the researcher assigns to the attribute. Nevertheless, the extent and direction of any distortion in the equilibrium test allocation---compared to the researcher's non-strategic optimum---then depend    
    on how the \emph{ratios} of the auxiliary weights differ from the ratios of the researcher's weights.  

    \subsection{Correlated attributes} \label{sec:correlation}

 In the model, we assume that the attributes are independent. In this subsection, we discuss how the results change when the attributes are jointly normal but correlated---that is, when the variance–covariance matrix is no longer diagonal.
    
Theorem \ref{thm:strategic_solution} characterizes the main, \textit{direct} effects of misalignment on multi-attribute learning. Introducing correlation into the model adds additional, \textit{indirect} effects.
The marginal value of learning about any given attribute now consists of both a direct component---reflecting the value of information about that attribute itself---and indirect effects---capturing the informational spillovers from correlated attributes. When deciding which attributes (and how much) to learn about, the researcher bases his choice on a specific \textit{projection value}, which aggregates these direct and indirect effects. In this case, misalignment can give rise to effects that cannot arise in single-agent framework, and thus the equivalence established in Theorem \ref{thm:strategic_solution} generally breaks down.\footnote{For instance, imagine a researcher who, in the absence of correlation, never wants to learn about attribute $k$ (i.e., $\lambda_k<0$), but who derives positive marginal value of learning about another attribute $j\neq k$ (i.e., $\lambda_j>0$). If these two attributes are correlated, learning about $j$ provides a direct benefit to the researcher but also an indirect cost, since the decision-maker—through correlation—also gains information about $k$. Such a mechanism cannot arise in a single-agent setting, where learning about any attribute (in the absence of correlation) always has a \textit{positive} marginal value.}

Nevertheless, as long as the correlation is not too strong, focusing on the direct effects provides a useful approximation of the researcher's incentives. Furthermore, the qualitative insights from Theorem \ref{thm:strategic_solution} remain robust even in the presence of correlation: the marginal value of learning about an attribute declines under misalignment about that attribute, as the researcher perceives the information about that attribute to be either insufficiently utilized or excessively acted upon. Consequently, the researcher may choose to reduce learning about, or even completely abstain from learning about, certain attributes when the perceived overreaction is sufficiently strong.

    \subsection{Equivalent payoff specifications} \label{sec:equivalent_models} 

    This section introduces alternative payoff structures that result in the same equilibrium test allocation as the baseline model. The objective of presenting these frameworks is to offer alternatives that may be better tailored to particular economic applications, thus demonstrating the adaptability of our model and the analytical tools detailed in Sections~\ref{sec:single_player} and \ref{sec:strategic_players}.

	\paragraph{Baseline model.} Introduced in Section~\ref{sec:model}, the baseline model features a decision-maker taking a single decision, $d\in \mathbb{R}$. Given a decision $d$ and a realized state $\theta=(\theta_1,\hdots,\theta_K)$, the utility of player $i=R,D$ is
	\begin{eqnarray} \label{eq:util_spec_baseline}
		u^i(d,\theta) = -(d-b^i(\theta))^2 = - \left( d - \sum_k \alpha^i_k \theta_k  \right)^2.
	\end{eqnarray}
	
	\paragraph{Framework A.} In this framework, the decision-maker also takes a single decision, $d\in \mathbb{R}$. Given a decision $d$ and a realized state $\theta=(\theta_1,\hdots,\theta_K)$, the utility of player $i=R,D$ is
	\begin{eqnarray} \label{eq:util_spec_weighted}
		u^i_A(d,\theta) =  - \sum_k \alpha^i_k \left( d - \theta_k  \right)^2,
	\end{eqnarray}
	where $\sum_k \alpha^i_k = 1$ for both players.
	
	\paragraph{Framework B.} This framework features the decision-maker simultaneously taking $K$ distinct decisions $d_1,\hdots,d_K \in \mathbb{R}$.  Given decisions $d=(d_1,\hdots,d_K)$ and a realized state $\theta=(\theta_1,\hdots,\theta_K)$, the utility of player $i=R,D$ is
	\begin{eqnarray} \label{eq:util_spec_multiaction}
		u^i_B(d,\theta) = -\sum_k (d_k -\alpha^i_k \theta_k)^2.
	\end{eqnarray}

	In all three models: (i) players aim to minimize a loss given by a quadratic distance, and (ii) the weights $\alpha^i$ determine the relative importance of different attributes. Beyond these similarities, the frameworks differ in how the players aggregate losses across attributes and decisions. Despite these differences, the following proposition (proved in Online Appendix \ref{OA:frameworks}) shows that Frameworks A and B are equivalent to the baseline model in terms of the equilibrium test allocation.
	\begin{proposition}\label{prop:equivalent_models}
		Given the decision-maker's weights $\alpha^{D}$, the researcher's weights $\alpha^R$, the test budget $T$, and the prior distribution of the state $\tilde{\theta}$, the researcher's equilibrium test allocations in the baseline model, Framework A, and Framework B are identical.
	\end{proposition}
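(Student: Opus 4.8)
The plan is to treat each framework with the same two‑step argument. First, solve the decision‑maker's interim problem to get her equilibrium decision rule (in every case her objective is a strictly concave quadratic in the decision, so a first‑order condition pins it down uniquely). Second, substitute this rule into the researcher's ex ante payoff and show that, as a function of $\tau$, it equals the baseline objective $V^R(\tau)$ up to an additive constant that does not depend on $\tau$. Since the equilibrium tie‑breaking rule only concerns indifference between some $\tau$ and abstention, adding a $\tau$‑independent constant leaves both $\arg\max_{\tau\in\mathcal T}$ and the set $\{\tau\in\mathcal T : V(\tau)=V(\emptyset)\}$ unchanged; hence equality of objectives up to such a constant yields the same equilibrium test allocation. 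Throughout I use that the posterior variances $\hat\Sigma_k(\tau_k)$ in \eqref{eq:posterior_var_attributes} are deterministic, and I write $\psi_k(\tau_k) := \Sigma^0_k - \hat\Sigma_k(\tau_k) = \Sigma^0_k\hat\Sigma_k(\tau_k)\tau_k$, which by the law of total variance equals $\mathrm{Var}(\hat\mu_k(\tilde s_k,\tau_k))$, so that by the tower property $\mathbb{E}[\hat\mu_k(\tilde s_k,\tau_k)\tilde\theta_k] = \mathbb{E}[\hat\mu_k(\tilde s_k,\tau_k)^2] = (\mu^0_k)^2 + \psi_k(\tau_k)$. I also record, using Lemma \ref{lemma:strategic_player_maximization} together with $\Sigma^0_k\hat\Sigma_k(\tau_k)\tau_k = \psi_k(\tau_k)$ and $\sigma^{2,D}_0 - \hat\sigma^{2,D}(\tau) = \sum_k(\alpha^{D}_k)^2\psi_k(\tau_k)$ (from \eqref{eq:bliss_posterior_var}), that the baseline objective is $V^R(\tau) = \sum_k\lambda_k\psi_k(\tau_k) + V^R(\emptyset)$ with $\lambda_k$ as in \eqref{eq:alpha_n_lambda}.

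\emph{Framework A.} The decision‑maker maximizes $-\sum_k\alpha^{D}_k\mathbb{E}[(d-\tilde\theta_k)^2\mid s]$; the first‑order condition and the normalization $\sum_k\alpha^{D}_k = 1$ give $d^*(s,\tau) = \sum_k\alpha^{D}_k\hat\mu_k(s_k,\tau_k)$, identical to the baseline decision rule in \eqref{eq:optimal_decision}. Substituting and expanding $V^R_A(\tau) = -\sum_k\alpha^R_k\,\mathbb{E}[(d^*(\tilde s,\tau)-\tilde\theta_k)^2]$ term by term, the normalization $\sum_k\alpha^R_k = 1$ collapses the quadratic terms into $-\mathbb{E}[(d^*)^2]$, the cross terms become $2\mathbb{E}[d^*\sum_k\alpha^R_k\tilde\theta_k] = 2\mathbb{E}[d^*\tilde b^R]$, and $\sum_k\alpha^R_k\mathbb{E}[\tilde\theta_k^2]$ is a constant. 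Since $V^R(\tau) = -\mathbb{E}[(d^*)^2] + 2\mathbb{E}[d^*\tilde b^R] - \mathbb{E}[(\tilde b^R)^2]$ with the same $d^*$, we get $V^R_A(\tau) - V^R(\tau) = \mathbb{E}[(\tilde b^R)^2] - \sum_k\alpha^R_k\mathbb{E}[\tilde\theta_k^2]$, independent of $\tau$.

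\emph{Framework B.} Here the decision‑maker's problem $-\sum_k\mathbb{E}[(d_k-\alpha^{D}_k\tilde\theta_k)^2\mid s]$ separates across $k$, yielding $d^*_k(s,\tau) = \alpha^{D}_k\hat\mu_k(s_k,\tau_k)$, so $V^R_B(\tau) = -\sum_k\mathbb{E}[(\alpha^{D}_k\hat\mu_k(\tilde s_k,\tau_k)-\alpha^R_k\tilde\theta_k)^2]$. Expanding the $k$‑th square and substituting $\mathbb{E}[\hat\mu_k\tilde\theta_k] = \mathbb{E}[\hat\mu_k^2] = (\mu^0_k)^2+\psi_k(\tau_k)$ gives $\mathbb{E}[(\alpha^{D}_k\hat\mu_k-\alpha^R_k\tilde\theta_k)^2] = \big((\alpha^{D}_k)^2 - 2\alpha^{D}_k\alpha^R_k\big)\big((\mu^0_k)^2+\psi_k(\tau_k)\big) + (\alpha^R_k)^2\mathbb{E}[\tilde\theta_k^2] = -\lambda_k\psi_k(\tau_k) + (\text{constant in }\tau)$, using $\lambda_k = 2\alpha^{D}_k\alpha^R_k - (\alpha^{D}_k)^2$ from \eqref{eq:alpha_n_lambda}. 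Therefore $V^R_B(\tau) = \sum_k\lambda_k\psi_k(\tau_k) + (\text{constant in }\tau) = V^R(\tau) + (\text{constant in }\tau)$ by the expression for $V^R$ recorded above, which establishes the claim for both frameworks.

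\emph{Where the work lies.} None of the steps is a genuine obstacle; each reduces to elementary Gaussian algebra once the decision rules are in hand. The points requiring care are: tracking which terms are $\tau$‑independent — in particular exploiting that $\mathbb{E}[(d^*)^2]$ and $\mathbb{E}[d^*\tilde b^R]$ need never be computed, since they are common to the baseline and Framework A; recognizing in Framework B that, although the researcher's payoff is not itself a squared distance between a single decision and his bliss point, its expansion still has the additively separable form $\sum_k\lambda_k\psi_k(\tau_k)$ that matches Lemma \ref{lemma:strategic_player_maximization}; and noting that Framework A is only well‑defined when $\sum_k\alpha^i_k = 1$, precisely the normalization that makes its decision rule coincide with the baseline and collapses the quadratic terms.
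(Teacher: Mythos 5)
Your proposal is correct and follows essentially the same route as the paper's proof: derive the decision-maker's equilibrium rule in each framework and show the researcher's objective coincides with the baseline $V^R(\tau)$ up to a $\tau$-independent constant (the paper's only cosmetic difference is that for Framework A it observes the \emph{ex post} utilities $u^i$ and $u^i_A$ differ by a $d$-independent term, which handles the decision rule and the value comparison in one stroke, while for Framework B it runs the same variance--covariance computation you do). Your explicit remark that a constant shift preserves the tie-breaking set $\{\tau : V(\tau)=V(\emptyset)\}$ is a small point the paper leaves implicit, and is worth having.
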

	The displayed flexibility allows us to study a rich set of economic problems. For instance, we use Framework~A  to study discrimination (Section~\ref{sec:discrimination}) and media polarization (Online Appendix~\ref{sec:media}). To illustrate an application of Framework~B, consider
    a portfolio choice problem, where an investor makes decisions $d_k$ on how much to invest in each asset $k=1,\hdots,K$. Here, $\tilde{\theta}_k$ represents the uncertain future return on asset $k$. The investor evaluates the future returns objectively ($\alpha^{D}_k = 1$ for all $k$), while an advisor may have an incentive to steer the investor towards some assets and away from others: $\alpha^{R}_k \neq 1$ for some $k$. 
	\section{Applications}\label{sec:applications}
		
	In this section, we illustrate how our framework can be used to study preference misalignment in two applications, each with a tailored parametrization of the misalignment. First, we adopt an organizational setting to showcase how biases across hierarchical layers impact learning in organizations. Next, we analyze a policy-making model with a discriminatory politician and explore how introducing independent researchers to inform policy decisions can mitigate negative effects of discrimination on welfare and inequality.  
 
	In the applications, we adopt the simplest specification, a two-attribute case, which is already sufficiently rich to demonstrate the tension between the players arising in the multi-attribute context. Throughout, we assume that both weights of the decision-maker are strictly positive, $\alpha^{D}_k>0$ for $k=1,2$.

	\subsection{Diversity in organizations}\label{sec:organizations}
		
	In this application, we consider a stylized   model of an organization, comprising a manager (decision-maker) and an analyst (researcher). This setup reflects an organizational hierarchy where the manager holds authority over the analyst and takes decisions based on the analyst's information. Given a diverse pool of analysts with varying weights, the manager selects one to provide information. The key question is which analyst the manager will choose.

    First, it is immediate that the manager's ideal choice would be an analyst whose weights match hers. However, our focus is on identifying the type of analyst the manager would prefer when, due to an inherent hierarchical structure, perfect alignment is unachievable. To analyze this question, we begin by establishing a suitable parametrization of the misalignment between the players' weights. 

 \begin{figure}
		\begin{center}
  			\caption{The bias decomposition of an analyst}
			\label{fig:analyst_bias_decomposition}
			\begin{tikzpicture}[scale=1.2]
				
				\draw[->, line width=0.7pt] (0,0) -- (0,4.2) node[left] {$\alpha^{R}_2$};
				\draw[->, line width=0.7pt] (0,0) -- (5.2,0) node[below] {$\alpha^{R}_1$};
				
				
				
				\draw[->] (0,0) -- (5,2.5);
				\draw[] (5,2.6) node[above,left]{$\beta$};
				\draw[->] (2.5,0) -- (0.5,4) node[right]{$\gamma$};
				\draw[dashed] (0,1) -- (5,3.5);
				\filldraw[color=black] (1.6,1.8) circle(0.03);
				\draw[] (1.4,1.8) node[above,left]{$\gamma>0$};
				\filldraw[color=red] (2,1) circle(0.05);
				\draw[color=red] (1.8,1.4) node[above,right]{$\alpha^{D}$};
				
			\end{tikzpicture}
		\end{center}
		\footnotesize \emph{Notes:} The dashed line represents different $\alpha^{R}$ values where $\gamma$ is fixed at a positive value and various values of $\beta$ are considered.
	\end{figure}
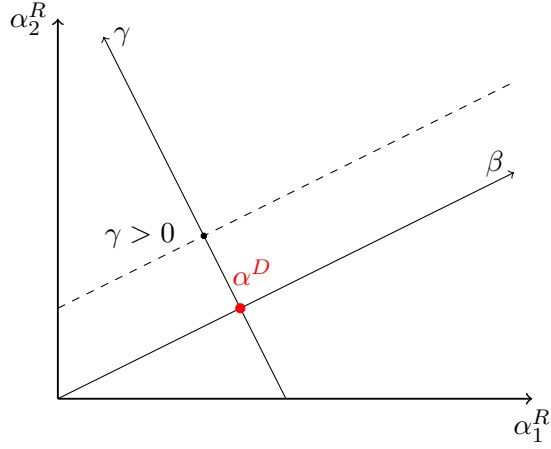
		
	We fix the manager's weight vector, $\alpha^{D}=(\alpha^{D}_1,\alpha^{D}_2)$, and define an orthogonal vector $\bar{\alpha}^{D} := (-\alpha_2^{D},\alpha_1^{D})$.
	Next, we express the analyst's weight vector $\alpha^R$ as a linear combination of the manager's weight vector and its orthogonal counterpart as 
	\begin{eqnarray}\label{eq:decomposition_binary}
		\alpha^{R}(\beta,\gamma) := \beta \alpha^{D} + \gamma \bar{\alpha}^{D}
	\end{eqnarray}
	for some $\beta \in \mathbb{R}_+$ and $\gamma \in \Gamma(\beta) := \left[ -\beta \frac{\alpha_2^{D}}{\alpha_1^{D}}, \beta \frac{\alpha_1^{D}}{\alpha_2^{D}} \right]$.\footnote{
		The constraint $\gamma \in \Gamma(\beta)$ follows from the requirement that $\alpha_k^{R}\geq 0$ for both $k=1,2$. As noted in Footnote~\ref{foot:alpha_positive}, this restriction is without loss and only for ease of exposition.}
	Here, $\beta$ represents the absolute bias, referred to as \emph{sensitivity} (to new information), and $\gamma$ represents the relative bias, referred to as \emph{distortion}. We say that \emph{the analyst becomes more sensitive} when $\beta$ increases.\footnote{Note that when $\beta=1$, the analyst has the same sensitivity as the manager.} We further say \emph{the analyst becomes more distorted} when $\gamma\geq 0$ increases or $\gamma\leq 0$ decreases. Specifically, when $\gamma >0$ ($\gamma<0$), the analyst exhibits bias towards attribute $\tilde{\theta}_2$ (attribute $\tilde{\theta}_1$). When $\gamma=0$, we call the analyst \emph{undistorted}. Figure \ref{fig:analyst_bias_decomposition} visualizes this bias decomposition.\footnote{
		Alternative decompositions of the analyst's weights in terms of ``absolute" and ``relative" bias exist. For instance, one could present both $\alpha^{R}$ and $\alpha^{D}$ in polar coordinates and let $\beta$ be the difference in distances from the origin and $\gamma$ be the difference in angles. The insights from our Propositions \ref{prop:CS_manager_gamma} and \ref{prop:CS_manager_beta} would extend to such alternative representations. 
		The same applies to settings with more than two attributes, where $\gamma$ could be a multiplier on any vector orthogonal to $\alpha^D$.
		}  
    In the context of our organizational application, we posit that analysts who are separated by more layers of hierarchy from the manager are less sensitive.

	First, we examine how the manager's payoff changes with the analyst's distortion $\gamma$. Proposition \ref{prop:CS_manager_gamma} below shows that the manager is worse off as the analyst becomes more distorted, provided the analyst is sufficiently sensitive (high $\beta$). However, the converse is true when the analyst is too insensitive (low $\beta$).
	
	\begin{figure}
		\centering
  			\caption{Equilibrium test allocation}			\label{fig:organization_distortion}
		\begin{subfigure}[t]{0.48\textwidth}
        \centering
        \caption{Panel (a)}
        \begin{tikzpicture}[scale=1.45]
            \draw[->, thick] (0,0) -- (0,3.5) node[left] {$\alpha^R_2$};
            \draw[->, thick] (0,0) -- (4,0) node[below] {$\alpha^R_1$};
            
            \draw (1.5,0.1) -- (1.5,-0.1) node[below]{$\frac{\alpha^{D}_1}{2}$};
            \draw (-0.1,0.75) -- (0.1,0.75) node[left]{$\frac{\alpha^{D}_2}{2}\ \ $};

            \draw[black, dashed, opacity=0.5] (1.5,0.0) -- (1.5,0.75);
            \draw[black, dashed, opacity=0.5] (0,0.75) -- (1.5,0.75);

            \draw[thick, red] (0,0) -- (1.5,0.75);
            \draw[->, thick, blue] (1.5,0.75) -- (4,2);
            \node[below right] at (4,2) {$\beta$};
            
            \draw[decorate, decoration={brace, amplitude=13pt}, red] (0,0) -- (1.5,0.75) node[midway, yshift=27pt, red, font=\footnotesize, align=center] {no testing\\ for $\beta\leq 1/2$};
            \draw[decorate, decoration={brace, amplitude=13pt}, blue] (1.5,0.75) -- (3.9,1.95) node[midway, yshift=27pt, xshift=-10pt, blue, font=\footnotesize, align=center] {manager's first\\  best for $\beta>1/2$};

            \filldraw (3,1.5) circle(1.5pt);
            \node[below right] at (3,1.7) {$\alpha^{D}$};
                   \end{tikzpicture}
    \end{subfigure}
    \hfill
    \begin{subfigure}[t]{0.48\textwidth}
        \centering
        \caption{Panel (b)}
        \begin{tikzpicture}[scale=1.45]
            \filldraw[color=blue!15!white] (1.5,0.75) -- (4,1.52) -- (4,0) -- (1.5,0);  
            \filldraw[color=yellow!30!white] (1.5,0.75) -- (1.98,3.5) -| (0,0.75);       
            \filldraw[color=green!20!white] (1.5,0.75) -- (4,1.52) -- (4,3.5) -- (1.98,3.5);  
            \filldraw[color=gray!30!white] (0,0) -- (0,0.75) -- (1.5,0.75) -- (1.5,0);  
            
            \draw[->, line width=0.7pt] (0,0) -- (0,3.5) node[left] {$\alpha^R_2$};
            \draw[->, line width=0.7pt] (0,0) -- (4,0) node[below] {$\alpha^R_1$};
            
            \draw[black] (1.5,0.04) -- (1.5,-0.04) node[below]{$\frac{\alpha^{D}_1}{2}$};
            \draw[black] (-0.04,0.75) -- (0.04,0.75) node[left]{$\frac{\alpha^{D}_2}{2}\ $};
            
            \draw[color=black!80!white, line width=0.3pt] (1.5,0.75) -- (4,1.52); 
            \draw[color=black!80!white, line width=0.3pt] (1.5,0.75) -- (1.98,3.5);
            \draw[color=black!80!white, line width=0.3pt] (1.5,0.75) -- (1.5,0);
            \draw[color=black!80!white, line width=0.3pt] (1.5,0.75) -- (0,0.75);
            
            \draw[-,dashed,opacity=0.5] (0,0) -- (4,2);
            \filldraw[color=black] (3,1.5) circle(0.05);
            \draw (3,1.8) node{$\alpha^{D}$};
            
            \draw (3.7,0.2) node[text=black,font=\footnotesize]{$L_1$};
            \draw (0.9,3.2) node[text=black,font=\footnotesize]{$L_2$};
            \draw (3.7,3.2) node[text=black,font=\footnotesize]{$L_{12}$};
            \draw (0.9,0.2) node[text=black,font=\footnotesize]{$L_0$};
            
            \draw[-> ,thick,black] (2.4,1.2) -- (1.25,3.5);
            \draw (1.5,3.4) node{$\mathbf{\gamma}$};
            \draw (2.5,2.3) node{\small{$\tau_1^* \downarrow, \tau_2^* \uparrow$}};
            \draw[-> , thick, black] (4/5,2/5) -- (0,2) node[right]{$\gamma$};
        \end{tikzpicture}
    \end{subfigure}
	
		\footnotesize \flushleft \emph{Notes:} Panel (a) depicts the undistorted analyst's equilibrium test allocation:  
        no testing if $\beta\leq 1/2$, and manager's first best if $\beta >1/2$. Panel (b) details the equilibrium test allocation for different $(\beta,\gamma)$ pairs: the analyst learns (i) only about attribute $\tilde{\theta}_1$ if $\alpha^R \in L_1$; (ii) only about attribute $\tilde{\theta}_2$ if $\alpha^R\in L_2$; (iii) about both attributes if $\alpha^R\in L_{12}$; (iv) about no attribute if $\alpha^R\in L_0$. 
        The black thick lines captures analysts with increasing distortion towards attribute $\tilde{\theta}_2$ from an undistorted analyst (dashed line) for fixed sensitivity at $\beta<1/2$ (line closer to the origin) and $\beta>1/2$ (line further away from the origin).
	\end{figure}
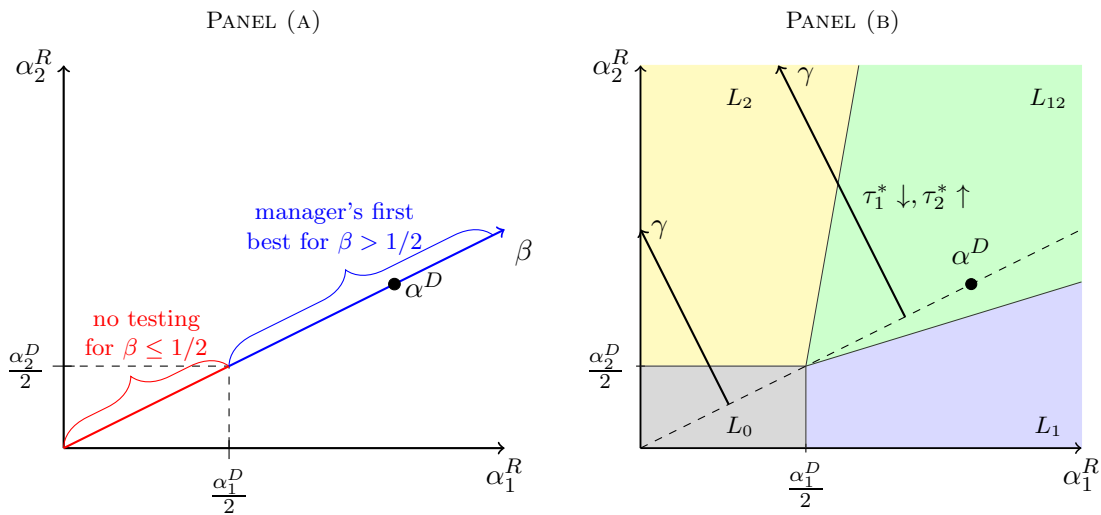
	
	\begin{proposition}\label{prop:CS_manager_gamma} Fix the analyst's level of sensitivity $\beta \in \mathbb{R}_+$. \nopagebreak
	\begin{enumerate}[label=(\roman{enumi})]
		\item If $\beta \leq 1/2$ (the analyst is too insensitive), then the manager's expected equilibrium payoff weakly \emph{increases} as the analyst becomes more distorted. 
		\item If $\beta > 1/2$ (the analyst is sufficiently sensitive), then the manager's expected equilibrium payoff weakly \emph{decreases} as the analyst becomes more distorted.
	\end{enumerate}
	\end{proposition}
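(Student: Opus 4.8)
The plan is to collapse the manager's problem to a single scalar and then read off how the analyst's equilibrium allocation moves with $\gamma$ using Theorems~\ref{thm:strategic_solution} and~\ref{thm:benchmark}. First I would note that the manager's expected equilibrium payoff equals $-\hat{\sigma}^{2,D}(\tau^*)$: the equilibrium decision is the posterior mean $\mathbb{E}[\tilde{b}^{D}\mid s]$, so by the law of total variance the payoff is minus the posterior variance of $\tilde{b}^{D}$, namely $-\big[(\alpha^{D}_1)^2\hat{\Sigma}_1(\tau^*_1)+(\alpha^{D}_2)^2\hat{\Sigma}_2(\tau^*_2)\big]$ with $\hat{\Sigma}_k$ as in~\eqref{eq:posterior_var_attributes}. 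Whenever the analyst learns at all, $\tau^*$ exhausts the budget (Theorem~\ref{thm:benchmark}), so along the budget line the payoff is a function of $t:=\tau^*_1$ alone, $W(t):=-(\alpha^{D}_1)^2\hat{\Sigma}_1(t)-(\alpha^{D}_2)^2\hat{\Sigma}_2(T-t)$. Since each $-\hat{\Sigma}_k$ is strictly concave, $W$ is strictly concave with a unique maximizer $t^{FB}\in[0,T]$ --- the manager's first-best split --- and both endpoint values $W(0),W(T)$ strictly exceed $-\sigma^{2,D}_0$, so any learning strictly beats abstention and $W>-\sigma^{2,D}_0$ on $[0,T]$.

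Next I would feed the parametrization $\alpha^{R}=\beta\alpha^{D}+\gamma\bar{\alpha}^{D}$ into~\eqref{eq:alpha_n_lambda} to get $\lambda_1=\alpha^{D}_1[(2\beta-1)\alpha^{D}_1-2\gamma\alpha^{D}_2]$ and $\lambda_2=\alpha^{D}_2[(2\beta-1)\alpha^{D}_2+2\gamma\alpha^{D}_1]$, so that $\lambda_1$ is decreasing and $\lambda_2$ increasing in $\gamma$, and both equal $(2\beta-1)(\alpha^{D}_k)^2$ at $\gamma=0$. For part~(i), $\beta\le 1/2$: then $\lambda_1<0$ for every $\gamma>0$, so $\hat{\alpha}_1=0$, while $\lambda_2\le 0$ up to a threshold $\gamma^*\ge 0$ and positive beyond it. Hence for $\gamma\in[0,\gamma^*]$ the analyst abstains (by the tie-breaking rule) and the payoff is the constant $-\sigma^{2,D}_0$, whereas for $\gamma>\gamma^*$ only $\hat{\alpha}_2>0$, so Theorem~\ref{thm:benchmark} with a single positive weight ($\bar{T}=+\infty$) gives $\tau^*=(0,T)$ and the payoff is the constant $W(0)>-\sigma^{2,D}_0$ --- i.e., weakly increasing in $\gamma$. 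The case $\gamma\le 0$ is symmetric with the attributes' roles swapped (eventually $\tau^*=(T,0)$, payoff $W(T)$), and since ``more distorted'' only moves $\gamma$ further from $0$ on one side, part~(i) follows.

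For part~(ii), $\beta>1/2$: now $\lambda_1,\lambda_2>0$ at $\gamma=0$, so the auxiliary weights are proportional to $\alpha^{D}$ and $\tau^*(0)$ coincides with the manager's first-best split $t^{FB}$. On $\gamma\ge 0$ I would write the single-player objective as $\max_{t\in[0,T]}\big[\hat{\alpha}_1^2\, g_1(t)+\hat{\alpha}_2^2\, g_2(T-t)\big]$ with $g_k:=\Sigma^0_k-\hat{\Sigma}_k$ increasing and concave; dividing by $\hat{\alpha}_2^2$, this objective is supermodular in $(t,\hat{\alpha}_1^2/\hat{\alpha}_2^2)$, so by Topkis' monotonicity theorem the maximizer $\tau^*_1$ is weakly increasing in the ratio $\lambda_1/\lambda_2$, hence weakly decreasing in $\gamma$, reaching $0$ exactly when $\lambda_1$ does and staying there. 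Thus $\tau^*_1(\gamma)$ remains in $[0,t^{FB}]$, an interval on which the concave $W$ is weakly increasing, so $W(\tau^*_1(\gamma))$ is weakly decreasing in $\gamma$. The case $\gamma\le 0$ is symmetric: $\tau^*_1(\gamma)\in[t^{FB},T]$, weakly increasing in $\gamma$, while $W$ is weakly decreasing there, so the payoff again falls as $\gamma$ decreases; combining the two sides yields part~(ii).

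The hard part will be the $\beta>1/2$ case: one must establish the monotone comparative statics of the single-player split in the auxiliary-weight ratio --- where invoking supermodularity/Topkis is much cleaner than differentiating the piecewise closed form of Theorem~\ref{thm:benchmark} --- pin down $\tau^*(0)=t^{FB}$, and carefully handle the transition where $\hat{\alpha}_1$ drops to zero so that the composition of the monotone map $\gamma\mapsto\tau^*_1(\gamma)$ with the concave $W$ is genuinely monotone rather than merely bounded above by $W(t^{FB})$. The $\beta\le 1/2$ case, by contrast, is essentially bookkeeping once one observes that at most one auxiliary weight can be positive, which forces an all-or-nothing allocation on a single attribute.
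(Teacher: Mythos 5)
Your proof is correct and follows the same overall architecture as the paper's: reduce the manager's payoff to a concave function of the split along the budget line, show the analyst's equilibrium allocation at $\gamma=0$ is either no learning ($\beta\le 1/2$) or the manager's first best ($\beta>1/2$), and then track how $\tau_1^*$ moves monotonically away from $t^{FB}$ (or from abstention toward a corner) as $|\gamma|$ grows. The one place you genuinely diverge is the comparative statics of the single-player split in the auxiliary weights: the paper (Lemma~\ref{lemma:characterization_learning_high}) differentiates the closed-form interior solution $\tilde{\tau}_1(\beta,\gamma)$ from Theorem~\ref{thm:benchmark} and signs the derivative explicitly, whereas you invoke supermodularity of $r\,g_1(t)+g_2(T-t)$ in $(t,r)$ with $r=\hat{\alpha}_1^2/\hat{\alpha}_2^2$ and apply Topkis. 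Your route is cleaner and avoids the piecewise algebra, at the cost of having to verify separately that $\hat{\alpha}_2>0$ throughout $\gamma\ge 0$ when $\beta>1/2$ (which holds) and that the map $\gamma\mapsto r$ is monotone (which it is, since $\lambda_1$ falls and $\lambda_2$ rises in $\gamma$). Two minor bookkeeping points the paper handles that you gloss over: the feasibility constraint $\gamma\in\Gamma(\beta)$ can make the threshold $\gamma^*$ in part~(i) lie outside the admissible range (the paper's $\beta_1,\beta_2$ cutoffs), in which case the payoff is simply constant on the whole range --- still weakly increasing, so your conclusion survives; and in part~(ii) one should note that ``more distorted'' is a one-sided comparison ($\gamma\ge 0$ increasing or $\gamma\le 0$ decreasing), which you do use correctly when splitting into the two symmetric cases.
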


	The manager and an undistorted analyst share the same weight ratios, so the analyst's equilibrium learning strategy aligns with the manager's first-best solution, provided the analyst's sensitivity is high enough to motivate learning ($\beta>1/2$). In this case, if the analyst's distortion increases, his equilibrium learning strategy deviates further from the manager's first best, making the manager worse off.
    Conversely, when $\beta \leq 1/2$, an undistorted analyst abstains from learning, perceiving the manager's reaction to any information as excessive. Then, if the analyst's distortion increases, his preferences align more closely with the manager's on one particular attribute, potentially prompting him to learn about it. While the added distortion does increase misalignment on the other attribute, it does not negatively impact learning since the initial misalignment was already sufficient to deter the analyst from learning about it. Overall, since any acquired information is preferable to none, the manager ultimately benefits.
    
	The results, illustrated in Figure \ref{fig:organization_distortion}, highlight the importance of both diversity and uniformity within organizations. The State of the Global Workplace report \citep{gallup2025_state_global_workplace} documents a substantial engagement gap between managers and individual workers. One plausible interpretation is that lower engagement among individuals further down the hierarchy makes them less sensitive to new information. In organizations with many hierarchical layers---and therefore large differences in sensitivity---our findings suggest that having substantial diversity at the lower levels, reflected in relatively high degrees of distortion $|\gamma|$ relative to managers, can be beneficial. Conversely, in smaller, less hierarchical organizations, a more uniform workforce with minimal distortion is preferable, as variation in engagement (and sensitivity) between managers and analysts is less problematic in these settings.
 
    The above underscores the nuanced role of distortion in shaping the analyst's alignment with the manager's objectives. However, when we shift focus from distortion to sensitivity, the effects become more straightforward.
	\begin{proposition}\label{prop:CS_manager_beta}
		Fix the analyst's distortion $\gamma\in \mathbb{R}$. Then, the manager's expected equilibrium payoff is weakly increasing in the analyst's sensitivity $\beta$. 
	\end{proposition}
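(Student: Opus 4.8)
I would first reduce the manager's expected equilibrium payoff to $-\hat{\sigma}^{2,D}(\tau^*)$, where $\tau^*$ is the equilibrium test allocation. Since in equilibrium the decision-maker sets $d^*(\tilde{s},\tau^*)=\mathbb{E}[\tilde{b}^{D}\mid\tilde{s}]$, her ex ante expected payoff is $\mathbb{E}\big[u^{D}(d^*(\tilde{s},\tau^*),\tilde{\theta})\big]=-\mathbb{E}\big[(\mathbb{E}[\tilde{b}^{D}\mid\tilde{s}]-\tilde{b}^{D})^2\big]=-\hat{\sigma}^{2,D}(\tau^*)$, using that the Gaussian posterior variance $\hat{\sigma}^{2,D}(\tau^*)$ does not depend on the signal realization. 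It thus suffices to show $\hat{\sigma}^{2,D}(\tau^*)$ is weakly decreasing in $\beta$. By Theorem~\ref{thm:strategic_solution}, $\tau^*$ is the single-player solution for the auxiliary weights $\hat{\alpha}=(\hat{\alpha}_1,\hat{\alpha}_2)$. Substituting $\alpha^{R}=\beta\alpha^{D}+\gamma\bar{\alpha}^{D}$ into \eqref{eq:alpha_n_lambda} gives $\lambda_1=\alpha^{D}_1\big((2\beta-1)\alpha^{D}_1-2\gamma\alpha^{D}_2\big)$ and $\lambda_2=\alpha^{D}_2\big((2\beta-1)\alpha^{D}_2+2\gamma\alpha^{D}_1\big)$, each strictly increasing in $\beta$, so each $\hat{\alpha}_k=\sqrt{\max\{\lambda_k,0\}}$ is weakly increasing in $\beta$.

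Next I would split the admissible $\beta$-range by the sign pattern of $(\lambda_1,\lambda_2)$. If $\lambda_1,\lambda_2\le 0$, the researcher abstains and $\hat{\sigma}^{2,D}(\tau^*)=\sigma^{2,D}_0$. If exactly one $\lambda_k>0$, Theorem~\ref{thm:benchmark} assigns the whole budget to attribute $k$, so $\hat{\sigma}^{2,D}(\tau^*)$ is a $\beta$-independent constant (either $(\alpha^{D}_1)^2\hat{\Sigma}_1(T)+(\alpha^{D}_2)^2\Sigma^0_2$ or its mirror $(\alpha^{D}_1)^2\Sigma^0_1+(\alpha^{D}_2)^2\hat{\Sigma}_2(T)$). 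If $\lambda_1,\lambda_2>0$, the single-player solution depends on $\hat{\alpha}$ only through the ratio $\rho:=\hat{\alpha}_1/\hat{\alpha}_2=\sqrt{\lambda_1/\lambda_2}$; substituting the explicit solution of Theorem~\ref{thm:benchmark} into \eqref{eq:bliss_posterior_var}, a short computation gives, on the range of $\rho$ for which both attributes are tested,
\begin{equation*}
	\hat{\sigma}^{2,D}(\tau^*)=g(\rho):=\tfrac{1}{C}\Big((\alpha^{D}_1)^2+(\alpha^{D}_2)^2+\tfrac{(\alpha^{D}_1)^2}{\rho}+(\alpha^{D}_2)^2\rho\Big),\qquad C:=T+\tfrac{1}{\Sigma^0_1}+\tfrac{1}{\Sigma^0_2},
\end{equation*}
with $g(\rho)$ equal to one of the two constant corner values above outside that range; the three pieces agree at the switching points, so $g$ is continuous. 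The middle piece is strictly convex with global minimum at $\rho=\alpha^{D}_1/\alpha^{D}_2$ (the manager's own weight ratio), and the flat pieces attach continuously, so $g$ is non-increasing on $(0,\alpha^{D}_1/\alpha^{D}_2]$ and non-decreasing on $[\alpha^{D}_1/\alpha^{D}_2,\infty)$.

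It then remains to locate $\rho(\beta)=\sqrt{\lambda_1(\beta,\gamma)/\lambda_2(\beta,\gamma)}$ relative to $\alpha^{D}_1/\alpha^{D}_2$ on the region $\lambda_1,\lambda_2>0$. A direct computation shows $\frac{d}{d\beta}(\lambda_1/\lambda_2)$ has the sign of $\gamma$, and cross-multiplying shows $\rho(\beta)<\alpha^{D}_1/\alpha^{D}_2$ when $\gamma>0$, $\rho(\beta)>\alpha^{D}_1/\alpha^{D}_2$ when $\gamma<0$, and $\rho(\beta)=\alpha^{D}_1/\alpha^{D}_2$ when $\gamma=0$. With the unimodality of $g$: for $\gamma>0$, $\rho(\beta)$ increases within $(0,\alpha^{D}_1/\alpha^{D}_2]$ and $g(\rho(\beta))$ weakly decreases; for $\gamma<0$, $\rho(\beta)$ decreases within $[\alpha^{D}_1/\alpha^{D}_2,\infty)$ and $g(\rho(\beta))$ again weakly decreases; for $\gamma=0$, $g(\rho(\beta))$ is constant at the first best. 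Lastly I would check the region transitions go the right way: passing from the abstention region into any learning region weakly lowers $\hat{\sigma}^{2,D}$ (because $\hat{\Sigma}_k(\tau_k)\le\Sigma^0_k$ gives $\hat{\sigma}^{2,D}(\tau)\le\sigma^{2,D}_0$ for every $\tau$), and the boundary between a one-attribute region and the ``both positive'' region is crossed continuously, since the corner value of $\hat{\sigma}^{2,D}$ there equals the limit of $g(\rho)$ as $\rho\to 0$ or $\rho\to\infty$. Chaining the piecewise monotonicity yields that $-\hat{\sigma}^{2,D}(\tau^*)$ is weakly increasing in $\beta$.

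The main obstacle is the conjunction in the last two steps: showing that $g$ is unimodal with minimum exactly at the manager's ratio $\alpha^{D}_1/\alpha^{D}_2$, and that raising $\beta$ always pushes $\rho(\beta)$ monotonically toward that ratio from a fixed side (below it when $\gamma>0$, above it when $\gamma<0$). The sign computations for $\lambda_1/\lambda_2$ pin down the latter; the former needs the explicit closed form of $g$ and care at the corner regions (where the single-player solution spends the whole budget on one attribute), together with the continuity checks that let one chain the monotone pieces into a single conclusion. The backward-induction reduction and the case bookkeeping over sign patterns are routine.
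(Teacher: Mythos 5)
Your proof is correct and follows essentially the same route as the paper's: both arguments reduce the problem to tracking the auxiliary-weight ratio $\hat{\alpha}_1/\hat{\alpha}_2$, use the same sign computations on $\lambda_1,\lambda_2$ to show it is monotone in $\beta$ and approaches $\alpha^{D}_1/\alpha^{D}_2$ from a fixed side determined by $\sgn(\gamma)$, and handle the abstention and single-attribute regions by the same case split over the thresholds where $\lambda_k$ changes sign. The only cosmetic difference is that you establish the final monotonicity via the explicit convex reduced form $g(\rho)$ in ratio space, whereas the paper invokes concavity of $V^{D}$ in $\tau$ (Lemma~\ref{lemma:binary_Vp_concavity}) together with the monotone convergence of $\tau^*$ along the budget line to the manager's optimal allocation.
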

	
	The manager always prefers a more sensitive analyst, as this leads to an equilibrium test allocation that aligns more closely with her first-best solution. In other words, the manager seeks highly engaged employees who are sensitive to new information relevant to the organization. To prove this result, we show that for any distortion $\gamma$, the equilibrium test allocation $\tau^*$ increasingly aligns with the manager's first best as $\beta$ increases, effectively neutralizing the impact of distortion.

	\subsection{Discrimination, welfare and inequality}\label{sec:discrimination}
	
	In this application, we explore a scenario where a politician decides on a policy affecting two social groups and favors one group. We investigate how appointing an advisor with different preferences, who strategically curates the information provided to the politician, can mitigate the negative impact of this type of discrimination on utilitarian welfare and inequality.

	Suppose there are two social groups $k=1,2$. Each group $k$ has an unknown optimal policy $\tilde{\theta}_k \sim \mathcal{N}(0,1)$, and so the two groups are a priori identical.	A politician (decision-maker) decides on a common policy, $d\in \mathbb{R}$. The utility of group $k$ is given by $u_k(d,\theta_k) = - (d-\theta_k)^2$.	A budget $T=1$ of tests is available to inquire about the optimal policies of both groups. An advisor (researcher) chooses a test allocation $\tau=(\tau_1,\tau_2) \in \mathcal{T}$. The learning process and notation are the same as in the baseline model.

	The politician and the advisor are utilitarian with particular weights. The politician cares about both groups but favors group $k=1$, her payoff being
	\begin{eqnarray*}
		u^{D}(d, \theta_1, \theta_2; \delta) = \underbrace{\frac{1+\delta}{2}}_{\alpha^{D}_1(\delta):=} u_1(d,\theta_1) + \underbrace{\frac{1-\delta}{2}}_{\alpha^{D}_2(\delta):=}u_2(d,\theta_2)
	\end{eqnarray*}
    for a given level of discrimination $\delta \in (0,1)$. On the other hand, the advisor (possibly) favors group $k=2$, his payoff being
	\begin{eqnarray*}
		u^{R}(d,\theta_1,\theta_2;p) = \underbrace{\frac{1-p}{2}}_{\alpha^{R}_1(p):=}u_1(d,\theta_1) + \underbrace{\frac{1+p}{2}}_{\alpha^{R}_2(p):=} u_2(d,\theta_2)
	\end{eqnarray*}
    for a given level of partiality $p\in [0,1)$. When $p=0$, indicating the advisor cares equally about both groups, we call the advisor \emph{impartial}.\footnote{For expositional clarity, we focus on $p\in[0,1]$. However, all results continue to hold for $p\in [-1,1]$, i.e., also when including advisors who favor the same group as the politician.}

	Our focus in this application is on \emph{welfare} $W(p, \delta)$, defined as the sum of ex ante equilibrium expected payoffs of both groups (and is hence utilitarian with equal weights), and \emph{inequality} $I(p, \delta)$, defined as their difference (and is hence an egalitarian measure):
	\begin{eqnarray} \label{eq:def_welfare}
		W(p, \delta) &:= & \mathbb{E}\left[u_1\left(\tilde{d}^{\ast}(\tau^*),\tilde{\theta}_1\right) \right]+\mathbb{E}\left[u_2\left(\tilde{d}^{\ast}(\tau^*),\tilde{\theta}_2\right) \right],
		\\ \label{eq:def_inequality}
		I(p, \delta) &:= & \left|\mathbb{E}\left[u_1\left(\tilde{d}^{\ast}(\tau^*),\tilde{\theta}_1\right) \right]-\mathbb{E}\left[u_2\left(\tilde{d}^{\ast}(\tau^*),\tilde{\theta}_2\right) \right]\right|.
		\label{eq:inequality}
	\end{eqnarray}
	In the expressions above, the right-hand side depends on $p$ and $\delta$ via the advisor's choice of test allocation $\tau^*$ and the politician's policy choice $\tilde{d}^{*}(\tau^*)$, which can be more fully described as $\tilde{d}^{*} \left( \tau^*(p,\delta), \delta \right)$.
	When $I(p, \delta)=0$, we say there is \emph{equality} in equilibrium.\footnote{This model fits Framework A in Section~\ref{sec:equivalent_models}. By Proposition~\ref{prop:equivalent_models}, Theorem~\ref{thm:strategic_solution} can be used to find the equilibrium test allocation.}

    As a benchmark, we consider the case of \emph{unchecked discrimination}, when the politician controls both learning and decision-making. Here, welfare and inequality are defined analogously to equations \eqref{eq:def_welfare} and \eqref{eq:def_inequality}, with the politician selecting the optimal test allocation instead of an advisor doing so. In this context, increased discrimination negatively impacts both welfare and inequality: welfare declines while inequality rises. We aim to understand how appointing an advisor can mitigate these adverse effects. We analyze a scenario where welfare and inequality are influenced solely through the advisor's role in the learning stage, while the politician retains full control over decision-making.
   
	First, we ask which advisor maximizes welfare. Since both welfare and the preferences of the impartial advisor are utilitarian with equal weights, it immediately follows that appointing an impartial advisor maximizes welfare, regardless of the politician's level of discrimination. In contrast, appointing a partial advisor or allowing unchecked discrimination yields lower welfare. The next lemma describes the (welfare-maximizing) learning strategy of the impartial advisor.

	\begin{lemma}\label{prop:discrimination_impartial}
    For every level of discrimination $\delta$, welfare is maximized by appointing the impartial advisor ($p=0$). Moreover, the impartial advisor's equilibrium test allocation $\tau^*(0,\delta)=(1/2,1/2)$ is \emph{independent} of $\delta$.
	\end{lemma}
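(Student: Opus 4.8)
The plan is to prove the two parts of the lemma separately: first that $p=0$ maximizes welfare for every $\delta$, and then that the impartial advisor's equilibrium test allocation equals $(1/2,1/2)$ for every $\delta$.

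For the welfare part, I would exploit that in this application the impartial advisor's ex ante objective is, up to a positive scalar, welfare evaluated along the politician's decision rule. Since the model is an instance of Framework A, the impartial advisor ($p=0$, $\alpha^{R}=(1/2,1/2)$) has ex post utility $u^{R}(d,\theta)=-\tfrac12\big[(d-\theta_1)^2+(d-\theta_2)^2\big]$, so for every $\tau\in\mathcal{T}$,
\[
V^{R}(\tau)=\mathbb{E}\big[u^{R}\big(\tilde{d}^*(\tau,\delta),\tilde{\theta}\big)\big]=\tfrac12\,\widetilde{W}(\tau,\delta),
\]
where $\tilde{d}^*(\cdot,\delta)$ is the politician's equilibrium decision rule (fixed once $\delta$ is fixed) and $\widetilde{W}(\tau,\delta)$ denotes welfare as a function of the test allocation, so that $W(p,\delta)=\widetilde{W}(\tau^*(p,\delta),\delta)$. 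By the equilibrium definition the impartial advisor's allocation $\tau^*(0,\delta)$ maximizes $V^{R}$ over $\mathcal{T}$, hence maximizes $\widetilde{W}(\cdot,\delta)$; therefore $W(0,\delta)=\max_{\tau\in\mathcal{T}}\widetilde{W}(\tau,\delta)\ge W(p,\delta)$ for every $p$. The same inequality also delivers weak dominance over unchecked discrimination, where the politician merely chooses some particular allocation.

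For the allocation part, I would use Proposition \ref{prop:equivalent_models} to work in the baseline model with $\alpha^{D}=\big(\tfrac{1+\delta}{2},\tfrac{1-\delta}{2}\big)$ and $\alpha^{R}=\big(\tfrac12,\tfrac12\big)$, and then invoke Theorem \ref{thm:strategic_solution}. The key computation is the auxiliary weights via $\lambda_k=\alpha^{D}_k\big(2\alpha^{R}_k-\alpha^{D}_k\big)$: for $k=1$ this equals $\tfrac{1+\delta}{2}\cdot\tfrac{1-\delta}{2}$ and for $k=2$ it equals $\tfrac{1-\delta}{2}\cdot\tfrac{1+\delta}{2}$, so $\lambda_1=\lambda_2=\tfrac{1-\delta^2}{4}>0$ for $\delta\in(0,1)$ and $\hat{\alpha}_1=\hat{\alpha}_2=\tfrac12\sqrt{1-\delta^2}$. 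The associated single-player problem then has equal attribute weights and equal priors ($\Sigma^0_1=\Sigma^0_2=1$), so by symmetry its unique optimum is symmetric and, since $\hat{\alpha}_k>0$, it exhausts the budget: $\tau^*(0,\delta)=(T/2,T/2)=(1/2,1/2)$, independent of $\delta$. Equivalently, Theorem \ref{thm:benchmark} gives $\bar{T}=\hat{\alpha}_1/(\hat{\alpha}_2\Sigma^0_2)-1/\Sigma^0_1=0<1=T$ and hence $\tau^*_1=\tfrac{\hat{\alpha}_1}{\hat{\alpha}_1+\hat{\alpha}_2}\,T=\tfrac12$.

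I expect both computations to be routine; the only step that needs genuine care is the welfare identification---one must be careful that the object maximized by the impartial advisor is welfare under the \emph{politician's} decision rule, not welfare under the advisor's own counterfactual decision. Once that point is made explicit the revealed-preference inequality $W(0,\delta)\ge W(p,\delta)$ is immediate, and no further obstacle arises.
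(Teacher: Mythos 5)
Your proposal is correct and follows essentially the same route as the paper: the welfare claim via the observation that the impartial advisor's objective is a positive scalar multiple of welfare under the politician's decision rule, and the allocation claim via the auxiliary weights $\hat{\alpha}_1=\hat{\alpha}_2=\tfrac12\sqrt{1-\delta^2}$ from Theorem \ref{thm:strategic_solution}, whose ratio is independent of $\delta$. Your explicit evaluation of $\bar T=0$ in Theorem \ref{thm:benchmark} just makes the final step $(1/2,1/2)$ slightly more explicit than the paper's ratio argument.
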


	As the level of discrimination $\delta$ increases, two effects emerge. 
	From the advisor's perspective, information about group 1 is increasingly overemphasized in the politician's decision, while information about group 2 is increasingly underused. Each effect independently \emph{reduces} the advisor's incentive to learn about the respective group. However, for the impartial advisor, these effects are equal in magnitude and cancel each other out. This leads him to be unresponsive to changes in $\delta$ and always choose an equal test allocation. 
    Consequently, as $\delta$ increases, inequality increases under an impartial advisor.

	The above raises the question of whether a partial advisor can counterbalance the politician's discrimination and restore equality. If so, does the required level of partiality $p$ increase with the level of discrimination $\delta$? The next lemma addresses these questions.

	\begin{lemma}\label{prop:discrimination_equalizing_bias}
		For every politician's level of discrimination $\delta$, there exists a unique advisor's level of partiality $\hat{p}(\delta) > 0$ that ensures equality in equilibrium: $I(\hat{p}(\delta), \delta) = 0$. The function $\hat{p}(\delta)$ is continuous and \emph{non-monotone}:
		there exists a unique $\overline{\delta}\in (0,1)$ such that $\hat{p}(\delta)$ is strictly increasing for $\delta <\overline{\delta}$ and strictly decreasing for $\delta >\overline{\delta}$.
	\end{lemma}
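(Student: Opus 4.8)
\textbf{Proof proposal for Lemma~\ref{prop:discrimination_equalizing_bias}.}

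The plan is to derive an explicit formula for the equilibrium inequality $I(p,\delta)$ as a function of $p$ and $\delta$, and then to solve $I(p,\delta)=0$ for $p$, obtaining $\hat p(\delta)$ in closed form and studying its monotonicity directly. First I would apply Theorem~\ref{thm:strategic_solution} (via Proposition~\ref{prop:equivalent_models}, since this is Framework~A): the relevant auxiliary weights are $\hat\alpha_k = \sqrt{\max\{\lambda_k,0\}}$ with $\lambda_k = 2\alpha_k^R\alpha_k^D - (\alpha_k^D)^2$. Plugging in $\alpha_1^D = \tfrac{1+\delta}{2}$, $\alpha_2^D = \tfrac{1-\delta}{2}$, $\alpha_1^R = \tfrac{1-p}{2}$, $\alpha_2^R = \tfrac{1+p}{2}$, and $\Sigma_1^0 = \Sigma_2^0 = 1$, $T=1$, I get $\lambda_1 = \tfrac{1+\delta}{2}\left(\tfrac{1-2p-\delta}{2}\right)$ and $\lambda_2 = \tfrac{1-\delta}{2}\left(\tfrac{1+2p-\delta}{2}\right)$; since $p,\delta\in[0,1)$, $\lambda_2>0$ always, while $\lambda_1>0$ iff $1-2p-\delta>0$. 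Using Theorem~\ref{thm:benchmark} with $K=2$ and these $\hat\alpha_k$ (noting $\hat\alpha_k\Sigma_k^0=\hat\alpha_k$), I get the equilibrium test allocation $\tau^*(p,\delta)$: either an interior split proportional to $(\hat\alpha_1,\hat\alpha_2)$ when the budget exceeds the threshold $\bar T$, or a corner where all tests go to group~2 when $\lambda_1\le 0$ (equivalently $p\ge\tfrac{1-\delta}{2}$) or when $T\le\bar T$.

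Next I would compute the inequality. Since the politician's decision is $d^*(s,\tau)=\alpha_1^D\hat\mu_1+\alpha_2^D\hat\mu_2$, each group's ex ante payoff is $\mathbb{E}[u_k(\tilde d^*,\tilde\theta_k)] = -\mathrm{Var}(\tilde d^* - \tilde\theta_k)$, which decomposes into the prior variance term, a learning-reduction term, and a bias term; the key point is that $I(p,\delta)$ will be an affine-type expression in the posterior variances $\hat\Sigma_k(\tau_k^*) = \tfrac{1}{1+\tau_k^*}$, with coefficients that are polynomials in $\delta$. Concretely I expect the difference of the two groups' expected losses to take the form (up to constants coming from the $\delta$-dependent bias in $d^*$) something like $I(p,\delta) = \big| c_1(\delta) - g_1(\tau_1^*) + g_2(\tau_2^*) + c_2(\delta)\big|$ where the $g_k$ are the learning gains. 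Setting this to zero pins down a relationship between $\tau_1^*$ and $\tau_2^*$ and hence, through the proportionality $\tau_1^*/\tau_2^* = \hat\alpha_1/\hat\alpha_2$, an equation determining $\hat p(\delta)$. I would solve this, check $\hat p(\delta)>0$ (the impartial advisor leaves strictly positive inequality by Lemma~\ref{prop:discrimination_impartial} and its surrounding discussion, so one needs $p>0$ to equalize), and verify that the required $\hat p(\delta)$ stays in the interior regime $\lambda_1>0$ so the interior formula is the relevant one; a boundary check that $\hat p(\delta)<\tfrac{1-\delta}{2}$ confirms we never hit the corner.

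For monotonicity and non-monotonicity, once $\hat p(\delta)$ is in closed form I would differentiate. I expect $\hat p(\delta)$ to be a ratio/difference of low-degree polynomials in $\delta$ (possibly involving a square root from the $\tau^*$ proportionality constant $\tfrac{\hat\alpha_1}{\hat\alpha_1+\hat\alpha_2}$ combined with $\bar T$), so $\hat p'(\delta)$ will have a numerator that is a polynomial whose unique sign change in $(0,1)$ gives $\overline\delta$. The intuition to confirm: for small $\delta$, increasing discrimination requires more partiality to offset it, so $\hat p$ rises; but for large $\delta$, the politician relies so little on the already-meager information about group~2 that the marginal leverage of advisor partiality changes character, and less $p$ suffices, so $\hat p$ falls. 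Continuity is immediate from the closed form (and continuity of $\tau^*$ in parameters, which follows from Theorem~\ref{thm:benchmark}). The main obstacle I anticipate is the bookkeeping in the inequality computation: correctly tracking the $\delta$-dependent bias term in $d^*$ (since $\alpha^D\neq(1/2,1/2)$, the decision is a biased estimator of each $\theta_k$) together with the regime-dependent form of $\tau^*$, and then showing the resulting $\hat p'(\delta)$ has exactly one root in $(0,1)$ — this last step may require a careful argument (e.g., showing the numerator is concave or has a single sign change) rather than brute factorization. A secondary subtlety is confirming uniqueness of $\hat p(\delta)$: I would show $I(p,\delta)$ is strictly monotone in $p$ on the relevant range (inequality without the absolute value changes sign exactly once as $p$ increases from $0$), which follows from $\tau_1^*$ being strictly decreasing and $\tau_2^*$ strictly increasing in $p$ in the interior regime.
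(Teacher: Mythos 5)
Your proposal follows essentially the same route as the paper: obtain the equilibrium allocation $\tau^*(p,\delta)$ in closed form from Theorems~\ref{thm:benchmark} and \ref{thm:strategic_solution}, express the groups' payoff difference as a function of the allocation, solve for the equality-restoring configuration, verify it lies in the interior regime $p<\tfrac{1-\delta}{2}$, and then differentiate the resulting closed form $\hat p(\delta)$ to locate the unique $\overline{\delta}$ (the paper factors this slightly more cleanly by first solving for the equality-restoring allocation $\hat\tau_2(\delta)$, which is independent of $p$, and then inverting the strictly monotone map $p\mapsto\tau_2^*(p,\delta)$; your uniqueness argument via monotonicity of $\tau_2^*$ in $p$ is the same). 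Two local slips would derail the computation if carried through literally: your $\lambda_2$ should be $\tfrac{1-\delta}{2}\cdot\tfrac{1+2p+\delta}{2}$ (you have $-\delta$ where $+\delta$ belongs), and the interior allocation satisfies $\tau_2^*=\tfrac{2\hat\alpha_2-\hat\alpha_1}{\hat\alpha_1+\hat\alpha_2}$, so $\tau_1^*/\tau_2^*\neq\hat\alpha_1/\hat\alpha_2$ --- only the budget in excess of the threshold $\bar T$ is split in proportion to the auxiliary weights. With those corrected, the computation yields $\hat p(\delta)=\delta\left(\tfrac{1}{\sqrt{1+3\delta^2}}-\tfrac12\right)$, whose derivative $\tfrac{1}{(1+3\delta^2)^{3/2}}-\tfrac12$ is monotone in $\delta$, so the single sign change you anticipate needing a careful argument for is in fact immediate.
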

 
	As noted earlier, increases in $\delta$ reduce the advisor's incentive to learn about both groups, as information about group 1 is increasingly overemphasized in the politician's decision, while information about group 2 is increasingly underused. For an advisor with partiality $p>0$, the former effect dominates the latter, prompting him to learn more about group 2 as $\delta$ increases. Moreover, the disparity between the two effects is greater for advisors with higher partiality $p$, as they prioritize group 2 over group 1 more. As such, advisors with higher partiality adjust their learning strategies more sharply in response to changes in $\delta$ than do those with lower partiality.
 
	On the other hand, as the level of discrimination $\delta$ increases,
	restoring equality requires more wasted information: the advisor must learn less about group~1 and more about group~2, even though the politician becomes less responsive to information about group~2. Hence, the politician's decision increasingly relies on her prior beliefs---where there is no disagreement between groups---rather than the advisor's information.  
	As the amount of ``effectively'' utilized information diminishes with higher $\delta$, the adjustments required in the learning strategy to restore equality become progressively smaller.
	
	In summary, two effects are at play: advisors with higher partiality $p$ respond more strongly to changes in $\delta$, while the degree of adjustment in tests to achieve equality diminishes with higher $\delta$. These dynamics then lead to the non-monotonicity result in Lemma~\ref{prop:discrimination_equalizing_bias}. Additionally, since restoring equality involves wasting more information as $\delta$ increases, it results in a welfare loss. Indeed, welfare $W(\hat{p}(\delta),\delta)$ decreases with $\delta$.

    As we have seen, welfare is maximized with an impartial advisor, but then inequality increases with $\delta$.
    Conversely, with partial advisors who restore equality, welfare then decreases with $\delta$. Hence, welfare maximization and inequality minimization are misaligned objectives, as captured in the following proposition. 
    \begin{proposition}\label{prop:discrimination_frontier}
		For any level of discrimination $\delta$, a welfare-inequality Pareto frontier is formed by advisors having partiality levels $p\in [0,\hat{p}(\delta)]$, where both welfare and inequality strictly decrease in $p$. Moreover, welfare with the equality-restoring advisor $\hat{p}(\delta)$ is strictly higher than under unchecked discrimination.
	\end{proposition}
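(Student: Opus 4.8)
The plan is to establish Proposition \ref{prop:discrimination_frontier} in three parts: (i) compute the equilibrium test allocation $\tau^*(p,\delta)$ explicitly as a function of $p$ using Theorem \ref{thm:strategic_solution}; (ii) derive closed-form expressions for $W(p,\delta)$ and $I(p,\delta)$ on the relevant parameter range and show both are strictly decreasing in $p$ on $[0,\hat{p}(\delta)]$; and (iii) compare $W(\hat{p}(\delta),\delta)$ to the unchecked-discrimination benchmark. For part (i), I would apply the auxiliary-weight formula \eqref{eq:auxiliary_weights}: with $\alpha^D = \left(\tfrac{1+\delta}{2}, \tfrac{1-\delta}{2}\right)$ and $\alpha^R = \left(\tfrac{1-p}{2}, \tfrac{1+p}{2}\right)$, compute $\lambda_1, \lambda_2$ and hence $\hat\alpha_1, \hat\alpha_2$, then feed these into Theorem \ref{thm:benchmark} (with $T=1$, $\Sigma^0_1 = \Sigma^0_2 = 1$) to get $\tau^*_1, \tau^*_2$. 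The key qualitative fact is that as $p$ rises, $\hat\alpha_1$ falls and $\hat\alpha_2$ rises (eventually $\hat\alpha_1$ hits zero), so $\tau^*_1$ decreases monotonically and $\tau^*_2$ increases, up until $p = \hat p(\delta)$ where equality is restored; one must check that the relevant $p$-range stays within the regime where the Theorem \ref{thm:benchmark} formula has a given form, or partition it into subcases accordingly.

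For part (ii), note that group $k$'s ex ante expected payoff under decision $\tilde d^*(\tau^*) \sim \mathcal N(0, \psi^D(\tau^*))$ and state $\tilde\theta_k \sim \mathcal N(0,1)$ equals $-\mathbb{E}[(\tilde d^* - \tilde\theta_k)^2] = -\big(\psi^D(\tau^*) + 1 - 2\cov(\tilde d^*(\tau^*), \tilde\theta_k)\big)$, and $\cov(\tilde d^*(\tau^*),\tilde\theta_k) = \alpha^D_k \Sigma^0_k \hat\Sigma_k(\tau^*_k)\tau^*_k = \alpha^D_k \cdot \tfrac{\tau^*_k}{1+\tau^*_k}$ since $\Sigma^0_k = 1$. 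Then $W$ and $I$ become explicit functions of $(\tau^*_1,\tau^*_2)$, and since $\tau^*_k$ are themselves explicit (piecewise-rational) in $p$, differentiating in $p$ is routine. For welfare: appealing to Lemma \ref{prop:discrimination_impartial}, $W$ is maximized at $p=0$, and I would argue $W$ is single-peaked/monotone by showing $\partial W/\partial p < 0$ on $(0,\hat p(\delta)]$ — intuitively because moving $p$ away from $0$ shifts $\tau^*$ away from the welfare-optimal equal split. For inequality: $I$ is the absolute difference of the two group payoffs; since at $p=0$ inequality is strictly positive (group 1 favored, $\tau^*_1 > \tau^*_2$ or the politician weights group 1 more) and at $p=\hat p(\delta)$ it is zero by definition of $\hat p$, and $I$ is continuous and (once we verify the sign of the difference does not flip on the interval) monotone in $p$, it strictly decreases. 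The monotonicity of $I$ should follow from the fact that increasing $p$ monotonically reallocates tests toward group 2, monotonically narrowing the payoff gap.

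For part (iii), under unchecked discrimination the politician solves the single-player problem with her own weights $\alpha^D$, giving some allocation $\tau^{\mathrm{UD}}$ and welfare $W^{\mathrm{UD}}(\delta) := W$ evaluated at $\tau^{\mathrm{UD}}$. I must show $W(\hat p(\delta),\delta) > W^{\mathrm{UD}}(\delta)$. The cleanest route: note that an impartial advisor gives welfare $W(0,\delta) = -2\left(\psi^D(\tfrac12,\tfrac12) + 1 - \tfrac12\big(\tfrac12/\tfrac32\big)(1+\delta) - \tfrac12(\tfrac12/\tfrac32)(1-\delta)\right)$ — wait, more carefully $W(0,\delta)$ exceeds $W^{\mathrm{UD}}$ because $\tau^{\mathrm{UD}}$ is skewed toward group 1 whereas $(\tfrac12,\tfrac12)$ is the welfare-optimal split; and then I need $W(\hat p(\delta),\delta)$ — which lies between $W(0,\delta)$ and potentially below $W^{\mathrm{UD}}$ — to still beat $W^{\mathrm{UD}}$. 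This is the step I expect to be the main obstacle: it requires a genuine inequality comparison rather than a monotonicity argument, since $W$ decreasing in $p$ means $W(\hat p,\delta) < W(0,\delta)$, so dominance over the benchmark is not automatic. I would handle it by computing all three welfare values in closed form (using the explicit $\tau^*(p,\delta)$ from part (i) and $\tau^{\mathrm{UD}}$ from Theorem \ref{thm:benchmark} applied to $\alpha^D$), reducing the claim to a polynomial inequality in $\delta$ on $(0,1)$, and verifying it — possibly after checking it holds at $\delta \to 0$ (where all three coincide in the limit) and $\delta \to 1$, plus monotonicity of the gap, or simply factoring the polynomial. An alternative, more robust argument: the equality-restoring advisor still induces strictly positive learning (since $\hat p(\delta) < 1$ keeps $\hat\alpha_2 > 0$), and any strictly positive learning by an advisor whose weights are not wildly misaligned strictly improves on $\tau^{\mathrm{UD}}$ in welfare terms — but making this precise still seems to require the explicit computation, so I would fall back on the polynomial-inequality verification.
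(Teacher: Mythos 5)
Your plan is correct in outline and, for the first claim (monotonicity of welfare and inequality on $[0,\hat{p}(\delta)]$), follows essentially the same route as the paper: compute $\tau^*(p,\delta)$ via the auxiliary weights, observe that $\tau^*_2$ increases in $p$ past the welfare-optimal split $1/2$, and combine continuity of the group-payoff difference with the fact that it does not change sign before $\hat{p}(\delta)$ (the paper packages this as Lemma \ref{lemma:pareto_frontier}). Where you diverge is the final comparison $W(\hat{p}(\delta),\delta) > W^{\mathrm{UD}}(\delta)$, which you correctly flag as the main obstacle but propose to settle by computing all welfare values in closed form and verifying a polynomial inequality in $\delta$. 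The paper avoids that computation with a structural observation you did not exploit: once the budget is exhausted, welfare is a function $\omega(\tau_2,\delta)$ of $\tau_2$ alone that is \emph{symmetric} about $\tau_2=1/2$ (i.e., $\omega(\tau_2,\delta)=\omega(1-\tau_2,\delta)$) and strictly concave, hence strictly decreasing in $|\tau_2-1/2|$. The welfare comparison therefore reduces to the single elementary inequality $|\bar{\tau}_2(\delta)-1/2| > |\hat{\tau}_2(\delta)-1/2|$ (Lemma \ref{lemma:discr_tests_unchecked_equalizing}), where $\bar{\tau}_2(\delta)=\max\{0,(1-3\delta)/2\}$ and $\hat{\tau}_2(\delta)$ is given in closed form by \eqref{eq:equality_restoring_tau2}; for $\delta\geq 1/3$ it is immediate from $\hat{\tau}_2\in(1/2,1)$, and for $\delta<1/3$ it collapses to $1+3\delta^2>\sqrt{1+3\delta^2}$. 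Your brute-force route would in principle reach the same conclusion, but the decisive inequality is left unexecuted in your proposal, and that is precisely where the content of the second claim lies; I would recommend adopting the symmetry-plus-concavity reduction, which turns the hard step into a one-line check. One further small caution for your part (ii): you should confirm explicitly that $\hat{p}(\delta)<p^{\mathrm{aux}}(\delta)$, i.e., that the equality-restoring advisor does not yet allocate the whole budget to group 2, since otherwise the strict monotonicity of $\tau^*_2$ in $p$ (and hence of $W$ and $I$) would fail on part of the interval.
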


	Proposition~\ref{prop:discrimination_frontier} illustrates a trade-off between welfare and inequality. Reducing inequality requires wasting more information, achieved by advisors with higher levels of partiality, while increasing welfare requires using information more efficiently, achieved by less partial advisors. This dynamic gives rise to the Pareto frontier formed by advisors ranging from impartial to those restoring equality. Moreover,
    any advisor along this frontier unambiguously improves both welfare and inequality compared to unchecked discrimination. Notably, even an advisor who restores equality results in higher welfare than does unchecked discrimination. 
	This is because welfare is concave in test allocation, and the learning strategy of an unchecked politician is heavily skewed towards learning about the needs of group 1, whereas an ``equalizing'' advisor learns more evenly about both groups (even though his learning is skewed towards group 2).


	\section{Extensions} \label{sec:extensions}

In this section, we outline two extensions, presented in more detail in Online Appendix \ref{sec:online_app_B}. The first examines media polarization by introducing two researchers into the model, representing media outlets competing to influence a voter (decision-maker). Each outlet and the voter seek implementation of their preferred policy mix on two policy issues. In the model, first, the voter allocates her attention between the two outlets. Second, the outlets simultaneously decide how much coverage to devote to the two policy issues. The combined attention and coverage provide the voter with signals about the two policy issues. Finally, the voter casts her ballot. We provide conditions under which media outlets become fully polarized, with each outlet covering only a single issue.\footnote{Research shows that individuals tend to seek information that confirms their prior beliefs, interpreting ambiguous information in a way that aligns with these beliefs (see \citealp{fryer_updating_2019,nimark_inattention_2019,olszewski_preferences_2021,damico_disengaging_2022} for some examples). Additionally, it has been suggested that information providers, be it media or peers on social media, may contribute to polarization by strategically selecting the information they supply \citep{pogorelskiy_news_2019,germano_crowding_2022,aina_tailored_2024}.} This polarization benefits the voter, however, as she can acquire her first-best information by splitting her attention appropriately between the two outlets. Hence, when media outlets care about the implemented policy mix, polarization occurs in equilibrium but helps, rather than harms, the voter. The voter thus prefers a polarized media duopoly to a monopoly of either media outlet (which is then more moderate in its coverage).

In the second extension, we introduce uncertainty about the decision-maker's preferences within a dual-self model. A single agent controls both learning and decision-making, but her preferences might change between the two stages due to factors like job loss, illness, or self-control issues.
We study how this uncertainty shapes the learning strategy and compare welfare outcomes between na\"ive agents (unaware of potential preference changes) and 
 sophisticated agents (aware of potential changes)  as in \citet{odonoghue1999doing}.
We identify conditions under which the sophisticate may opt for \emph{strategic ignorance}, completely avoiding learning about an attribute to prevent excessive overreaction by the future self. Further, we delineate when the expected utilities of the naif and the sophisticate differ and in which direction. Notably, if the welfare criterion is measured by the changed utility, the naif can outperform the sophisticate.  This happens when the preference change increases the weight on the affected attribute, making the sophisticate's ``hedging'' strategy---effectively assigning less weight to that attribute due to possible overreaction by future self---counterproductive, regardless of whether preferences ultimately change.

	\section{Conclusion} \label{sec:conclusion}
	
	We present a model of delegated learning in a multi-attribute context. We explore the strategic  interaction between a decision-maker, who makes a decision influenced by the state of the world, and a researcher, who allocates a budget of tests to procure information about various attributes of the state. The players have different preferences for the final decision, as both seek to align the decision with the ``weighted state" but assign different weights to the attributes. Our insights shed light on how the researcher's preference misalignment affects the equilibrium learning strategy and the players' resulting payoffs.
		
	We characterize the equilibrium test allocation in this strategic interaction. We show it coincides with the solution to the single-player problem for certain attribute weights, which generally differ from both the decision-maker's and the researcher's weights. 
    A key insight is that the researcher may forgo learning altogether if he anticipates an excessive  overreaction of the decision-maker to new information. More broadly, preference misalignment leads to reduced emphasis on attributes that spark greater disagreement between the two players. We also show that our characterization applies beyond our baseline model and extends to other models and preference formulations. Finally, while it is possible to include correlation between attributes, the economic insights remain unchanged in their essence while rendering the model substantially less tractable.

    The model provides a fruitful foundation for applied work, as demonstrated by the applications and extensions explored herein. Our framework is both rich and tractable, making it adaptable to a wide variety of settings. In addition to the problems considered in our applications and extensions, our model can be applied to many other problems. These include but are not limited to strategic management under uncertainty, biased portfolio advising, and optimal delegation (of learning and decision rights).

	\appendix
	
	\section{Main Proofs}

	\subsection{Proofs for Section \ref{sec:single_player}: \nameref{sec:single_player}}
	
	\subsubsection{Proof of Lemma \ref{lemma:single_player_maximization}}
	
	Given the optimal decision strategy $d^*(s,\tau)$ given by equation \eqref{eq:optimal_decision}, the agent chooses $\tau \in \mathbb{R}^K_+$ with $\sum_k \tau_k \leq T$ to maximize
	    \begin{align*}
		V(\tau) &= \mathbb{E} \left[ - \left( \tilde{d}^*(\tau) - \tilde{b} \right)^2 \right] = -\mathbb{E} \left[ \left(\tilde{d}^*(\tau)\right)^2 \right] + 2\mathbb{E} \left[ \tilde{d}^*(\tau) \tilde{b} \right] - \mathbb{E} \left[ \tilde{b}^2 \right] \\
		&= - \left( var \left(\tilde{d}^*(\tau)\right) + \left(\mathbb{E} \left[\tilde{d}^*(\tau)\right] \right)^2 \right)
				+ 2 \left( \cov \left( \tilde{d}^*(\tau), \tilde{b} \right) + \mathbb{E} \left[\tilde{d}^*(\tau)\right]  \mathbb{E} \left[ \tilde{b} \right] \right) 
		\\ &\quad
		- \left( var \left( \tilde{b} \right) + \left(\mathbb{E} \left[\tilde{b}\right] \right)^2 \right) \\
		&= - (\psi(\tau) + (b_0)^2) + 2 (\psi(\tau)+ b_0 b_0) - (\sigma^2_0+(b_0)^2) = \psi(\tau) - \sigma^2_0 =  - \hat{\sigma}^2(\tau),
	\end{align*} 
	where $\hat{\sigma}^2(\tau)$ is given by equation \eqref{eq:bliss_posterior_var}.
    \footnote{
In the second equality, we used $\operatorname{var}(\tilde X)=\mathbb{E}[\tilde X^2]-(\mathbb{E}[\tilde X])^2$ and $\operatorname{cov}(\tilde X,\tilde Y)=\mathbb{E}[\tilde X\tilde Y]- \mathbb{E}[\tilde X]\mathbb{E}[\tilde Y]$. 
In the third equality, note that $\tilde d^*(\tau)=\mathbb{E}[\tilde b\mid s,\tau]$ is the conditional expectation of $\tilde b$ given the signals. Therefore, $\operatorname{cov}(\tilde d^*(\tau),\tilde b)=\operatorname{cov}(\tilde d^*(\tau),\tilde d^*(\tau)+\tilde\nu)=\operatorname{var}(\tilde d^*(\tau))=\psi(\tau)$, where $\tilde\nu:=\tilde b-\mathbb{E}[\tilde b\mid s,\tau]$ with $\mathbb{E}[\tilde\nu\mid s,\tau]=0$. Finally, we use the law of total variance: $\psi(\tau)=\sigma_0^2-\hat{\sigma}^2(\tau)$, with $\hat{\sigma}^2(\tau)$ given by \eqref{eq:bliss_posterior_var}.}

	\qed

	\subsubsection{Proof of Theorem \ref{thm:benchmark}} \label{proof:LMS_theorem2}
    We prove a general characterization of the optimal test allocation for $K\geq 2$, from which Theorem~\ref{thm:benchmark} immediately follows.
    \begin{theorem}\label{thm:benchmark_general}
		In the single-player case, without loss of generality, let $\alpha_1 \Sigma^0_{1} \geq \hdots \geq  \alpha_K \Sigma^0_{K}$.
		Then, the unique optimal test allocation $\tau^*$ is as follows:
		if $\alpha_k = 0$ for all $k$, then $\tau^* = (0, \ldots, 0)$;
		otherwise,\footnote{We apply the convention $\sum_{j=k}^{l} x_j = 0$ when $l<k$.}
		\begin{align*}
			\tau_k^* = \left( T - \bar{T}_{J} \right) \frac{\alpha_k}{\sum_{l=1}^{J} \alpha_l} + \sum_{j=k}^{J-1} \left( \left( \bar{T}_{j+1} - \bar{T}_{j} \right) \frac{\alpha_k}{\sum_{l=1}^{j} \alpha_l} \right)
		\end{align*}
		for all $k =1, \ldots, J$ and $\tau^*_k = 0$ for all $k =J+1, \ldots, K$,
		where the budget thresholds $\bar{T}_1 = 0 \leq \bar{T}_2 \leq \ldots \leq \bar{T}_{K+1}$ are given by
		\begin{align*}
			\bar{T}_k &:= \frac{\sum_{l=1}^{k-1}\alpha_l}{\alpha_{k}\Sigma_{k}^0} - \sum_{l=1}^{k-1} \frac{1}{\Sigma_l^{0}}
		\end{align*} 
		for all $k=1, \ldots, K$, $\bar{T}_{K+1} := +\infty$,
		and $J =1, \ldots, K+1$ is such that $T \in \left[\right. \bar{T}_{J}, \bar{T}_{J+1} \left.\right)$.
	\end{theorem}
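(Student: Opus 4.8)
The plan is to solve the convex program on the right-hand side of \eqref{eq:single_player_maximization2}: by Lemma~\ref{lemma:single_player_maximization} and equations \eqref{eq:posterior_var_attributes}--\eqref{eq:bliss_posterior_var}, the objective is $\hat\sigma^2(\tau)=\sum_k (\alpha_k)^2\Sigma^0_k/(1+\tau_k\Sigma^0_k)$, a sum of terms each of which is strictly convex and strictly decreasing in $\tau_k$ when $\alpha_k>0$ and constant when $\alpha_k=0$. Hence $\hat\sigma^2$ is continuous and convex on the compact set $\mathcal{T}$, so a minimizer exists. For uniqueness I would argue: if $\alpha_k=0$ for all $k$ the objective is constant and the tie-breaking rule selects $\tau^*=(0,\dots,0)$; otherwise, at every minimizer the budget binds (the marginal value $(\alpha_k\Sigma^0_k/(1+\tau_k\Sigma^0_k))^2$ of any attribute with $\alpha_k>0$ is strictly positive, so slack could be used to strictly decrease $\hat\sigma^2$) and, for the same reason, no tests are placed on attributes with $\alpha_k=0$ (which, under the ordering $\alpha_1\Sigma^0_1\geq\dots\geq\alpha_K\Sigma^0_K$, are exactly the trailing indices); restricted to the coordinates with $\alpha_k>0$ the objective is strictly convex, so the minimizer is unique.

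I would then write down the Karush--Kuhn--Tucker conditions (necessary since $\mathcal{T}$ has nonempty interior, so Slater holds, and sufficient by convexity): with multiplier $\nu=:\rho^2\geq 0$ on the budget constraint, $\tau^*$ is optimal iff $\alpha_k\Sigma^0_k/(1+\tau^*_k\Sigma^0_k)\leq\rho$ for all $k$, with equality whenever $\tau^*_k>0$. Solving the equality case gives the water-filling form $\tau^*_k=\alpha_k/\rho-1/\Sigma^0_k$ on the active set and $\tau^*_k=0$ off it. Using the ordering, the active set is an initial segment $\{1,\dots,J\}$: $\tau^*_k>0$ forces $\alpha_k\Sigma^0_k>\rho$, hence $\alpha_m\Sigma^0_m>\rho$ for $m<k$, hence $\tau^*_m>0$.

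Substituting the active-set formula into the binding budget constraint yields $\rho=\big(\sum_{l=1}^{J}\alpha_l\big)\big/\big(T+\sum_{l=1}^{J}1/\Sigma^0_l\big)$ and thereby a closed form for $\tau^*_k$ given $J$. The index $J$ is pinned down by $\tau^*_J\geq 0$ together with $\alpha_{J+1}\Sigma^0_{J+1}\leq\rho$; a direct computation shows these are equivalent to $T\geq\bar T_J$ and $T\leq\bar T_{J+1}$ respectively, where $\bar T_k=\sum_{l=1}^{k-1}\big(\alpha_l/(\alpha_k\Sigma^0_k)-1/\Sigma^0_l\big)$ is precisely the budget level at which attribute $k$ begins to receive tests. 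I would check $0=\bar T_1\leq\bar T_2\leq\dots\leq\bar T_{K+1}$ from the ordering $\alpha_k\Sigma^0_k\geq\alpha_{k+1}\Sigma^0_{k+1}$ (the difference $\bar T_{k+1}-\bar T_k$ splits into manifestly nonnegative pieces), so $T$ lies in a unique interval $[\bar T_J,\bar T_{J+1})$. Finally, to recover the telescoping expression in the statement, note that on each $[\bar T_j,\bar T_{j+1})$ the map $T\mapsto\tau^*_k(T)$ is affine with slope $\alpha_k/\sum_{l=1}^{j}\alpha_l$ for $k\leq j$ (differentiate $\tau^*_k=\alpha_k/\rho-1/\Sigma^0_k$ and $T=\sum_{l\leq j}\alpha_l/\rho-\sum_{l\leq j}1/\Sigma^0_l$ with respect to $1/\rho$) and $\tau^*_k(\bar T_k)=0$; integrating this slope from $\bar T_k$ up to $T$ across the intervals gives exactly $\tau^*_k=(T-\bar T_J)\,\alpha_k/\sum_{l=1}^{J}\alpha_l+\sum_{j=k}^{J-1}(\bar T_{j+1}-\bar T_j)\,\alpha_k/\sum_{l=1}^{j}\alpha_l$. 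Theorem~\ref{thm:benchmark} is the specialization to $K=2$ with $\bar T=\bar T_2$.

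The bookkeeping is where I expect the real work to lie: translating the Lagrange ``water level'' $\rho$ into the thresholds $\bar T_k$, verifying their monotonicity, and matching the telescoping formula --- all routine but error-prone. The one genuinely delicate point is the treatment of ties (e.g.\ $T=\bar T_{J+1}$, or $\alpha_k\Sigma^0_k=\rho$ with $\tau^*_k=0$): there both candidate active sets yield the same $\tau^*$, so the half-open-interval convention is consistent and uniqueness is not threatened.
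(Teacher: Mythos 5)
Your proposal is correct and follows essentially the same route as the paper's own proof: reduce to minimizing $\hat\sigma^2(\tau)$ via Lemma~\ref{lemma:single_player_maximization}, apply KKT to obtain the water-filling solution $\tau^*_k=\alpha_k/\rho-1/\Sigma^0_k$ on an initial-segment active set, pin down the water level from the binding budget, and recover the telescoping formula by tracking how the solution grows affinely in $T$ across the threshold intervals (the paper does this last step by the fundamental theorem of calculus in the variable $\zeta=1/\sqrt{\xi}$, which is your $1/\rho$). The only cosmetic difference is that the paper establishes uniqueness through the strict monotonicity of the aggregate demand $B(\xi)$ and the intermediate value theorem rather than through strict convexity on the active coordinates.
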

    \begin{proof}
	By Lemma \ref{lemma:single_player_maximization}, the agent's objective function when choosing a test allocation $\tau \in \mathcal{T}$ is given by
	\begin{align} \label{eq:sp_objfn_as_sum}
		V(\tau)= - \hat{\sigma}^2(\tau) = - \sum_{k=1}^K \left[ \alpha_k^2 \left( \frac{1}{\Sigma^0_k} + \tau_k \right)^{-1} \right].
	\end{align}
	If $\alpha=(0,\hdots,0)$, then $V(\tau)=0$ for all $\tau$. Hence, any test allocation $\tau \in \mathcal{T}$ is optimal. By the assumption that in case of indifference, the agent abstains from learning, we get the first part of the statement.  
 
	For the remainder of the proof, assume $\alpha_1>0$.
	It is immediate that \eqref{eq:sp_objfn_as_sum} is concave in $\tau$ (and all constraints are convex in $\tau$), hence we can use the Karush-Kuhn-Tucker technique to find the solution. 
	Let $\xi$ denote the Lagrange multiplier on the budget constraint and $\eta_k$ denote the Lagrange multiplier on the non-negativity constraint for $\tau_k$, with $\eta := (\eta_1, \ldots, \eta_K)$. Then the KKT first order conditions can be written as follows:
	\begin{align} 
		\label{eq:sp_prb_foc1}
		\alpha^2_k \left( \frac{1}{\Sigma^0_k} + \tau_k \right)^{-2} &= \xi - \eta_k
		\ \text{ for all } k =1, ..., K,
		\\
		\label{eq:sp_prb_foc2}
		0 &= \eta_k \tau_k
		\ \text{ for all } k =1, ..., K,
		\\
		\label{eq:sp_prb_foc3}
		\eta_k &\geq 0,\ \tau_k \geq 0
		\ \text{ for all } k =1, ..., K,
		\\
		\label{eq:sp_prb_foc4}
		0 &= \xi \left( T - \sum_{k=1}^K \tau_k \right),
		\\
		\label{eq:sp_prb_foc5}
		T &\geq \sum_{k=1}^K \tau_k,\ \xi \geq 0.
	\end{align}
	
	Observe that the marginal benefit of increasing $\tau_1$ is $\alpha^2_1 \left( \frac{1}{\Sigma^0_1} + \tau_1 \right)^{-2}>0$ since $\alpha_1 > 0$. Hence, for $k=1$, the LHS of condition \eqref{eq:sp_prb_foc1} is strictly positive, and so must be the RHS, implying $\xi>0$ (since $\eta_1\geq 0$). Therefore, from conditions \eqref{eq:sp_prb_foc4} and \eqref{eq:sp_prb_foc5}, the budget constraint must be binding in the optimum: $T = \sum_{k=1}^K \tau_k$.

	Consider an arbitrary $\xi>0$. Then for every $k$ there exists a unique pair $(\tau_k(\xi),\eta_k(\xi))$ satisfying conditions \eqref{eq:sp_prb_foc1}-\eqref{eq:sp_prb_foc3}. This pair is given by
	\begin{align}\label{eq:sp_prb_partsoln}
		\tau_k(\xi) &= 
		\begin{cases}
			\frac{\alpha_k}{\sqrt{\xi}}-\frac{1}{\Sigma^0_k} & \text{if} \ \xi \leq \left(\alpha_k\Sigma^0_k\right)^2 \\
			0 & \text{otherwise} 
\		\end{cases},
		&
		\eta_k(\xi) &=
		\begin{cases}
			0 & \text{if} \  \xi \leq \left(\alpha_k\Sigma^0_k\right)^2 \\
			\xi - \left( \alpha_k \Sigma^0_k  \right)^2 & \text{otherwise}
		\end{cases}
	\end{align}
	To find a solution, next we need to find value $\xi^*>0$ under which also the budget constraint $T=\sum_k \tau_k (\xi^*)$ is satisfied. Note that 	
 	$\tau_k(\xi)$ are continuous and weakly decreasing in $\xi$ for all $k$, and strictly decreasing whenever $\tau_k(\xi) > 0$.

	Further, for all $k$ we have $\lim\limits_{\xi \to +\infty} \tau_k(\xi) = 0$ and $\lim\limits_{\xi \to 0} \tau_k(\xi) = +\infty$. The above imply that function $B(\xi) := \sum_{k=1}^K \tau_k(\xi)$ is continuous, strictly decreasing in $\xi$ whenever $B(\xi)>0$ (i.e. whenever there exist $k$ with $\tau_k(\xi)>0$), and has the same limit properties. By the intermediate value theorem, there exists unique $\xi^*>0$ such that $B(\xi^*) = T>0$ (the budget constraint binds). This value $\xi^*$ together with the corresponding pairs $\left( \tau_k(\xi^*), \eta_k(\xi^*) \right)$ for every $k$ then solve the entire FOC system \eqref{eq:sp_prb_foc1}--\eqref{eq:sp_prb_foc5}, and thus uniquely solve the player's problem from Lemma \ref{lemma:single_player_maximization}. 
	
	It remains to show that the solution $\tau^* := \left(\tau_1(\xi^*),\hdots,\tau_K(\xi^*)\right)$ satisfies the representation of the statement. Let us use the substitution $\zeta := \frac{1}{\sqrt{\xi}}$, and let us redefine the functions $\tau_k(\xi)$ and $B(\xi)$ to be functions of $\zeta$. Hence,  let $\hat{\tau}_k(\zeta) = \max \left\{ 0, \alpha_k \zeta - \frac{1}{\Sigma^0_k} \right\}$ and $\hat{B}(\zeta) = \sum_{k=1}^K \hat{\tau}_k(\zeta)$. Like before, let $\zeta^*=\frac{1}{\sqrt{\xi^*}}$ be the unique solution to $\hat{B}(\zeta^*) = T$. Let further $\zeta_k := \frac{1}{\alpha_k \Sigma^0_k}$ for all $k$, and note that $\hat{\tau}_k(\zeta) > 0$ if and only if $\zeta > \zeta_k$, as well as the fact that by assumption, $\zeta_1 \leq \ldots \leq \zeta_K$.
	
	To calculate $\tau^*$, we can use the fundamental theorem of calculus:
	\begin{align*}
		\tau^*_k = \hat{\tau}_k(\zeta^*) &= \int\limits^{\zeta^*}_{\min \left\{ \zeta_k, \zeta^* \right\}} \frac{d\hat{\tau}_k(\zeta)}{d\zeta} d\zeta 
		= \int\limits^{\zeta^*}_{\min \left\{ \zeta_k, \zeta^* \right\}} \alpha_k d\zeta 
		= \alpha_k \left( \zeta^* - \min \left\{ \zeta_k, \zeta^* \right\} \right).
	\end{align*}
	Let $J =1, \ldots, K+1$ be such that $\zeta^* \in \left[ \zeta_{J}, \zeta_{J+1} \right)$, with $\zeta_{K+1} := +\infty$.\footnote{Note that it must hold that $\zeta^*>\zeta_1$, which implies that there exists $J=1,\hdots,K $ with $\zeta^* \in \left[ \zeta_J, \zeta_{J+1} \right]$. Otherwise, if $\zeta^*\leq \zeta_1$, it would imply that $\hat{\tau}_k(\zeta^*)=0$ for all $k$, hence $\hat{B}(\zeta^*)=0$, which is a contradiction to $\hat{B}(\zeta^*)=T>0$.}
	Then for all $k =J+1, \ldots, K$ we have $\tau^*_k = 0$, and for all $k =1, \ldots, J$ we have:
	\begin{align}
		\nonumber
		\zeta^* - \zeta_k &= \left( \zeta^* - \zeta_J \right) + \left( \zeta_J - \zeta_{J-1} \right) + \ldots + \left( \zeta_{k+1} - \zeta_k \right)
		\\
		\label{eq:sp_opt_tau_part}
		\Rightarrow
		\tau^*_k &= \left( \zeta^* - \zeta_{J} \right) \alpha_k + \sum_{j=k}^{J-1} \left( \zeta_{j+1} - \zeta_j \right) \alpha_k .
	\end{align}
	
	To find $\left( \zeta^* - \zeta_{J} \right)$ and $\left(\zeta_{j+1} - \zeta_j \right)$ in this case, we again invoke the fundamental theorem of calculus: 
	\begin{align}
		\nonumber
		T = \hat{B}(\zeta^*) &= \int\limits^{\zeta^*}_{\zeta_1} \frac{d\hat{B}(\zeta)}{d\zeta} d\zeta 
		= \int\limits^{\zeta^*}_{\zeta_1} \sum_{k=1}^K \frac{d\hat{\tau}_k(\zeta)}{d\zeta} d\zeta \\ \label{eq:sp_FTC}
	&= \int\limits^{\zeta_2}_{\zeta_1} \alpha_1 d\zeta 
		+\int\limits^{\zeta_3}_{\zeta_2} \left( \alpha_1 + \alpha_2 \right) d\zeta
		+\ldots 
		+\int\limits^{\zeta^*}_{\zeta_{J}} \left( \sum_{l=1}^{J} \alpha_l \right) d\zeta
	= \sum_{l=1}^{J} \left( \zeta^* - \zeta_l \right) {\alpha}_l
	\end{align}
	Specifically, then, if we let
	\begin{align*}
		\bar{T}_k := \hat{B} \left( \zeta_k \right) 
		&= \sum_{l=1}^{k-1} \left( \zeta_k - \zeta_l \right) {\alpha}_l
		= \frac{\sum_{l=1}^{k-1}{\alpha}_l}{{\alpha}_{k}\Sigma_{k}^0} - \sum_{l=1}^{k-1} \frac{1}{\Sigma_l^{0}},
	\end{align*}
	expression \eqref{eq:sp_FTC} implies that $T - \bar{T}_{J} = \left( \zeta^* - \zeta_{J} \right) \sum_{l=1}^{J} \alpha_l \iff \zeta^* - \zeta_{J} = \frac{T - \bar{T}_{J}}{\sum_{l=1}^{J} \alpha_l}$ and, similarly, $\bar{T}_{k+1} - \bar{T}_k = \left( \zeta_{k+1} - \zeta_k \right) \sum_{l=1}^k {\alpha}_l \iff \zeta_{k+1} - \zeta_k = \frac{\bar{T}_{k+1} - \bar{T}_k}{\sum_{l=1}^k {\alpha}_l}$ for any $k =1, \ldots, K-1$. Plugging these into \eqref{eq:sp_opt_tau_part} yields the solution presented in the statement.
	\end{proof} 

	\subsection{Proofs for Section \ref{sec:strategic_players}: \nameref{sec:strategic_players}}
	
	\subsubsection{Proof of Lemma~\ref{lemma:strategic_player_maximization}}
	
	Given the test allocation $\tau$ and the decision-maker's equilibrium decision strategy $d^*(s,\tau)=\mathbb{E} \left[ \tilde{b}^{D}(s,\tau) \right]$, the researcher's expected payoff is
	\begin{align*}
		V^R(\tau) 
        &= \mathbb{E}\left[-\left( \tilde{d}^*(\tau) - \tilde{b}^R  \right)^2 \right] 
        = 
        - \mathbb{E}\left[ \left(\tilde{d}^*(\tau)\right)^2 \right] + 2 \mathbb{E}\left[ \tilde{d}^*(\tau) \tilde{b}^R \right] - \mathbb{E}\left[\left( \tilde{b}^R \right)^2 \right] \\
        &=
        - \left( \psi^D(\tau) + \left( b^D_0 \right)^2 \right) + 2 \left( \cov\left[ \tilde{d}^*(\tau),\tilde{b}^R \right] + \mathbb{E}\left[\tilde{d}^*(\tau) \right] \mathbb{E}\left[ \tilde{b}^R\right]  \right)
        - \left( \sigma^{2,R}_0 + \left( b^R_0 \right)^2  \right) \\
        &=
         - \left( \psi^D(\tau) + \left( b^D_0 \right)^2 \right) + 2 \left( \cov\left[ \tilde{d}^*(\tau),\tilde{b}^R \right] + b^D_0 b^R_0  \right)
        - \left( \sigma^{2,R}_0 + \left( b^R_0 \right)^2  \right) \\
        &=       
        2\cov\left[ \tilde{d}^*(\tau), \tilde{b}^R \right] - \psi^{D}(\tau) + V^R(\emptyset)
	\end{align*}
	where $V^R(\emptyset) = - \left(b^{D}_0 -b^R_0 \right)^2 - \sigma^{2,R}_0$ is the researcher's expected utility under no learning.	To complete the proof, let us show that
	\begin{eqnarray*}
		\cov \left( \tilde{d}^*(\tau),\tilde{b}^{R} \right) = \sum_k \alpha^{D}_k \alpha^R_k \Sigma^0_k \hat{\Sigma}_k(\tau_k) \tau_k.
	\end{eqnarray*}
	The researcher's bliss point satisfies $\tilde{b}^R = \sum_k \alpha^R_k \tilde{\theta}_k$. The random variable capturing the equilibrium decision-maker's decision satisfies
	\begin{eqnarray*}
		\tilde{d}^*(\tau) &=&
		\sum_k \alpha^{D}_k \hat{\mu}_k (\tilde{s}_k,\tau_k) =
		\sum_k \alpha^{D}_k \left( \frac{1}{1+\tau_k\Sigma^0_k} \mu^0_k + \frac{\tau_k \Sigma^0_k}{1+\tau_k\Sigma^0_k} \tilde{s}_k \right) \\
		&=&
		\sum_k \alpha^{D}_k \left( \frac{1}{1+\tau_k\Sigma^0_k} \mu^0_k + \frac{\tau_k \Sigma^0_k}{1+\tau_k\Sigma^0_k} \left( \tilde{\theta}_k + \tilde{\varepsilon}_k  \right) \right)
	\end{eqnarray*}
	Hence,
	\begin{eqnarray*}
		\cov(\tilde{d}^*(\tau),\tilde{b}^{R}) &=&
		\cov \left( \sum_k \alpha^{D}_k \frac{\tau_k \Sigma^0_k}{1+\tau_k\Sigma^0_k} \tilde{\theta}_k, \sum_k \alpha^{R}_k \tilde{\theta}_k    \right)
		=
		\sum_k \alpha^{D} \alpha^R_k \tau_k \frac{\Sigma^0_k}{1+\tau_k\Sigma^0_k} var(\tilde{\theta}_k) \\
		&=& \sum_k \alpha^{D}_k \alpha^R_k \tau_k \hat{\Sigma}_k(\tau_k) \Sigma^0_k,
	\end{eqnarray*}
	where we used (i) $\cov(\tilde{\theta}_k,\tilde{\varepsilon}_j)=0$ for all $k$ and $j$, since the noise terms are independent of all other random variables; (ii) $\cov(\tilde{\theta}_k,\mu^0_j) = 0$ for all $k$ and $j$ since $\mu^0_j$ is a constant; and (iii) $\cov(\tilde{\theta}_k,\tilde{\theta}_j)=0$ for all $k\neq j$ since the attributes are independent.
	\qed

	\subsubsection{Proof of Theorem \ref{thm:strategic_solution}}

    Recall that $\psi^{D}(\tau) = \sigma^{2,D}_0 - \hat{\sigma}^{2,D}(\tau)$	is the variance of the decision-maker's decision, where $\hat{\sigma}^{2,D}(\tau)$ is given by \eqref{eq:bliss_posterior_var}. Then, from Lemma~\ref{lemma:strategic_player_maximization}, 
    the researcher chooses the test allocation to maximize his expected payoff
    \begin{align*}
	V^{R}(\tau) 
	&= 2 \cov\left( \tilde{d}^*(\tau), \tilde{b}^R  \right) + \hat{\sigma}^{2,D}(\tau) + const_1 \\
	&= -\sum_k \left( 2\alpha^D_k \alpha^R_k - \left(\alpha^D_k\right)^2  \right) \hat{\Sigma}_k(\tau_k) + \underbrace{\sum_k 2\alpha^D_k\alpha^R_k \Sigma^0_k}_{:=const_2} + const_1 \\
	&= -\sum_k \lambda_k \hat{\Sigma}_k(\tau_k) + const_1 + const_2
\end{align*}
subject to the non-negativity and the budget constraints. Note that $const_1$ and $const_2$ are constant in $\tau$, $\lambda_k := 2 \alpha^R_k \alpha^D_k - \left( \alpha^{D}_k \right)^2$ for every $k$, and in the second step we used the fact that $\Sigma^0_k\tau_k \hat{\Sigma}_k(\tau_k) = \Sigma^0_k \frac{\tau_k \Sigma^0_k}{1+\tau_k\Sigma^0_k}=\Sigma^0_k\frac{\tau_k \Sigma^0_k+1-1}{1+\tau_k\Sigma^0_k}=\Sigma^0_k-\frac{\Sigma^0_k}{1+\tau_k\Sigma^0_k}=\Sigma^0_k - \hat{\Sigma}_k(\tau_k)$. 

The partial derivatives of $V^{R}(\tau)$ with respect to $\tau_k \geq 0$ are
\begin{eqnarray} \label{eq:foc_strategic}
	\frac{\partial V^{R}(\tau)}{\partial \tau_k}
	= \lambda_k \left( \frac{\Sigma^0_k}{1+\tau_k\Sigma^0_k} \right)^2.
\end{eqnarray}
Hence, the sign of $\lambda_k$ determines whether the objective function $V^R(\tau)$ is increasing, decreasing or constant in $\tau_k\geq 0$. Note that when $\lambda_k \leq 0$ the solution dictates $\tau^*_k = 0$ for such $k$.\footnote{When $\lambda_k \leq 0$ for all $k$ and there exists at least one $j$ such that $\lambda_j=0$, we use the assumption that in case of indifference the researcher abstains from learning, thus yielding $\tau^*=(0,0,\hdots,0)$. Otherwise, setting $\tau^*_k=0$ whenever $\lambda_k\leq 0$ is uniquely optimal: (i) either there exists at least one $j$ with $\lambda_j>0$ (and thus the function $V^{R}(\tau)$ is strictly increasing in $\tau_j$), or (ii) $\lambda_k <0$ for all $k$ (and thus the function $V^{R}(\tau)$ is \emph{strictly} decreasing in every $\tau_k$).} Hence, the maximizers of function $V^{R}(\tau)$ are the same as the maximizers of an adjusted function $V^{R,adj}(\tau):=V^{R}(\tau)$ where we set $\alpha^{D}_k=\alpha^R_k=0$ for all $k$ with $\lambda_k \leq 0$ (and hence $V^{R,adj}(\tau)$ is independent of $\tau_k$ for such $k$). 	

For every $k$, denote $\hat{\alpha}_k = \sqrt{\max\{0,\lambda_k\}}$. Recall the single-player objective function \eqref{eq:sp_objfn_as_sum} for an agent with weights $\hat{\alpha}$ is given by
\begin{align*}
	V(\tau) = - \sum_k \hat{\alpha}^2_k \hat{\Sigma}_k(\tau_k).
\end{align*}
Then, we get $V^{R,adj}(\tau) = V(\tau)+ const$. Hence, the maximizers of both of these functions, and thus also the maximizer of $V^R(\tau)$, coincide.
\qed

	\subsection{Proofs for Section \ref{sec:organizations}: \nameref{sec:organizations}}\label{appendix:organizations}

       We  first state four lemmas (proved in Online Appendix \ref{OA:organizations}) that we use to prove both Propositions  \ref{prop:CS_manager_gamma} and \ref{prop:CS_manager_beta}. Lemma~\ref{lemma:characterization_learning_low} and Lemma~\ref{lemma:characterization_learning_high} characterize the equilibrium learning strategy $\tau^*(\beta,\gamma)$ as a function of $\gamma$ when $\beta\leq 1/2$ and $\beta>1/2$, respectively. Lemma \ref{lemma:binary_Vp_concavity} establishes that the decision-maker's interim expected payoff is strictly increasing and concave in $(\tau_1,\tau_2)$. Lemma \ref{lemma:optimal_allocation_gamma_zero} derives the researcher's equilibrium test allocation for $\gamma=0$.		
	
	\begin{lemma}\label{lemma:characterization_learning_low}
		Let $\alpha^{D}_1\geq \alpha^{D}_2>0$ and assume the analyst's weight vector $\alpha^{R}$ is given by decomposition \eqref{eq:decomposition_binary}. For any $\beta \in [0, 1/2]$ there exist unique $\gamma_1(\beta),\gamma_2(\beta) \in \Gamma(\beta)$ such that $\gamma_1(\beta) \leq 0 \leq \gamma_2(\beta)$ and:
		\begin{eqnarray}
			\tau^*(\beta,\gamma) =
			\begin{cases}
				(0,T) & \text{if}\ \gamma>\gamma_2(\beta), \\
				(0,0) & \text{if}\ \gamma \in [\gamma_1(\beta),\gamma_2(\beta)], \\
				(T,0) & \text{if}\ \gamma< \gamma_1(\beta).
			\end{cases}
		\end{eqnarray}
		Further, there exist $\beta_1,\beta_2 \in \mathbb{R}$ with $0 < \beta_2 \leq \beta_1 < 1/2$ such that $\gamma_1(\beta) \in \text{int}\left(\Gamma(\beta)\right)$ if and only if $\beta\in (\beta_1,1/2]$, and $\gamma_2(\beta) \in \text{int}\left(\Gamma(\beta)\right)$ if and only if $\beta\in (\beta_2,1/2]$.
	\end{lemma}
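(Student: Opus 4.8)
\textbf{Proof plan for Lemma~\ref{lemma:characterization_learning_low}.}

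The plan is to apply Theorem~\ref{thm:strategic_solution}: the analyst's equilibrium test allocation equals the single-player solution with auxiliary weights $\hat\alpha_k=\sqrt{\max\{\lambda_k,0\}}$, where $\lambda_k=2\alpha^R_k\alpha^D_k-(\alpha^D_k)^2$. With only two attributes, the single-player solution from Theorem~\ref{thm:benchmark} is a corner allocation whenever at most one auxiliary weight is positive: if $\hat\alpha_1>0=\hat\alpha_2$ the agent puts the whole budget on attribute~1, if $\hat\alpha_2>0=\hat\alpha_1$ it goes entirely to attribute~2, and if $\hat\alpha_1=\hat\alpha_2=0$ the analyst abstains. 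So the first step is to show that for $\beta\le 1/2$ we never have \emph{both} $\lambda_1>0$ and $\lambda_2>0$, so that the allocation is always one of $(T,0)$, $(0,T)$, $(0,0)$. Substituting the decomposition $\alpha^R(\beta,\gamma)=\beta\alpha^D+\gamma\bar\alpha^D$, i.e.\ $\alpha^R_1=\beta\alpha^D_1-\gamma\alpha^D_2$ and $\alpha^R_2=\beta\alpha^D_2+\gamma\alpha^D_1$, gives
\begin{align*}
\lambda_1 &= \alpha^D_1\big(2\beta\alpha^D_1 - 2\gamma\alpha^D_2 - \alpha^D_1\big) = (\alpha^D_1)^2(2\beta-1) - 2\gamma\alpha^D_1\alpha^D_2,\\
\lambda_2 &= \alpha^D_2\big(2\beta\alpha^D_2 + 2\gamma\alpha^D_1 - \alpha^D_2\big) = (\alpha^D_2)^2(2\beta-1) + 2\gamma\alpha^D_1\alpha^D_2.
\end{align*}
When $\beta\le 1/2$, the term $(2\beta-1)\le 0$, so $\lambda_1>0$ forces $\gamma<0$ and $\lambda_2>0$ forces $\gamma>0$; hence at most one is positive, confirming the three-case structure.

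The second step is to read off the thresholds. From $\lambda_1>0\iff \gamma<\gamma_1(\beta):=\dfrac{\alpha^D_1(2\beta-1)}{2\alpha^D_2}$ and $\lambda_2>0\iff \gamma>\gamma_2(\beta):=\dfrac{\alpha^D_2(1-2\beta)}{2\alpha^D_1}=-\dfrac{\alpha^D_2(2\beta-1)}{2\alpha^D_1}$. Since $2\beta-1\le 0$ we get $\gamma_1(\beta)\le 0\le\gamma_2(\beta)$, with both equal to $0$ exactly when $\beta=1/2$; for $\gamma$ strictly between them both $\lambda_k\le 0$ and the analyst abstains, which matches the stated case distinction (using the paper's tie-breaking rule on the boundary where one $\lambda_k=0$). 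Here I would also note these thresholds are uniquely pinned down because $\lambda_k$ is strictly monotone (affine, with nonzero slope $\mp2\alpha^D_1\alpha^D_2$) in $\gamma$.

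The third step is the interiority claim about $\gamma_1,\gamma_2$ relative to $\Gamma(\beta)=\big[-\beta\tfrac{\alpha^D_2}{\alpha^D_1},\ \beta\tfrac{\alpha^D_1}{\alpha^D_2}\big]$. For $\gamma_1(\beta)$ to lie in the interior we need $-\beta\tfrac{\alpha^D_2}{\alpha^D_1}<\gamma_1(\beta)<\beta\tfrac{\alpha^D_1}{\alpha^D_2}$; the right inequality holds automatically since $\gamma_1(\beta)\le 0<\beta\tfrac{\alpha^D_1}{\alpha^D_2}$ (for $\beta>0$), so the binding condition is the left one, which upon substituting $\gamma_1(\beta)$ reduces to a single inequality linear in $\beta$, defining a cutoff $\beta_1\in[0,1/2)$ with $\gamma_1(\beta)\in\mathrm{int}(\Gamma(\beta))\iff\beta\in(\beta_1,1/2]$; symmetrically, $\gamma_2(\beta)\in\mathrm{int}(\Gamma(\beta))$ reduces to an inequality defining $\beta_2\in[0,1/2)$. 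The assumption $\alpha^D_1\ge\alpha^D_2$ then gives $\beta_2\le\beta_1$, and one checks $\beta_1<1/2$ directly from the cutoff formula. The main obstacle, such as it is, is purely bookkeeping in this last step: carefully tracking which endpoint of $\Gamma(\beta)$ binds and verifying the ordering $0<\beta_2\le\beta_1<1/2$ under $\alpha^D_1\ge\alpha^D_2$; everything else follows mechanically from Theorem~\ref{thm:strategic_solution} and the affine form of $\lambda_k$ in $\gamma$.
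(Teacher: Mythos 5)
Your proposal is correct and follows essentially the same route as the paper: both apply Theorem~\ref{thm:strategic_solution}, observe that for $\beta\le 1/2$ the affine functions $\lambda_1(\gamma)$ and $\lambda_2(\gamma)$ cannot be simultaneously positive, identify $\gamma_1(\beta)$ and $\gamma_2(\beta)$ as their zeros (equivalently, where $\alpha^R_k=\tfrac12\alpha^D_k$), and obtain $\beta_1,\beta_2$ by intersecting these thresholds with the binding endpoints of $\Gamma(\beta)$, yielding $\beta_k=\frac{(\alpha^D_k)^2}{2\left((\alpha^D_1)^2+(\alpha^D_2)^2\right)}$. The bookkeeping you defer in the last step works out exactly as you anticipate.
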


	\begin{lemma}\label{lemma:characterization_learning_high}
    Fix the manager's weight vector $\alpha^{D} = (\alpha^{D}_1,\alpha^{D}_2)$ with $\alpha^{D}_k >0$ for both $k=1,2$. Let the analyst's weight vector $\alpha^{R}(\beta,\gamma)$ be given by the decomposition \eqref{eq:decomposition_binary}. Fix $\beta>1/2$. Then there exist $\gamma_I(\beta), \gamma_{II}(\beta) \in \text{int}(\Gamma(\beta))$ with $\gamma_I(\beta)< \gamma_{II}(\beta)$ such that the analyst's equilibrium test allocation is
		\begin{eqnarray}
			\tau^*(\beta,\gamma) =
			\begin{cases}
				(T,0)	& \text{if}\ \gamma \leq \gamma_I(\beta) \\
				\left( \tilde{\tau}_1(\beta,\gamma), \tilde{\tau}_2(\beta,\gamma)  \right)		
				& \text{if} \ \gamma \in \left( \gamma_I(\beta), \gamma_{II}(\beta)  \right) \\
				(0,T)	& \text{if} \ \gamma \geq \gamma_{II}(\beta)
			\end{cases}
		\end{eqnarray}
		where 
		\begin{equation}\label{eq:tilde_tau_gamma}
		\begin{aligned}
			\tilde{\tau}_1(\beta,\gamma) 
			:=& \frac{\hat{\alpha}_1(\beta,\gamma) \Sigma^0_{1}-\hat{\alpha}_2(\beta,\gamma) \Sigma^0_{2}}{\Sigma^0_{1} \Sigma^0_{2} (\hat{\alpha}_1(\beta,\gamma) + \hat{\alpha}_2(\beta,\gamma))} + \frac{\hat{\alpha}_1(\beta,\gamma)}{\hat{\alpha}_1(\beta,\gamma)+\hat{\alpha}_2(\beta,\gamma)} T \\ 
			\tilde{\tau}_2(\beta,\gamma) 
			:=& T-\tilde{\tau}_1(\beta,\gamma)
		\end{aligned}
		\end{equation}
		and
		\begin{equation}\label{eq:tilde_alpha_gamma}
		\begin{aligned}
			\hat{\alpha}_1 (\beta,\gamma) 
			&:=& \sqrt{ \max \left\{ (2\beta-1)\left(\alpha^{D}_1\right)^2-2\gamma \alpha^{D}_1 \alpha^{D}_2, 0 \right\} } \\
			\hat{\alpha}_2 (\beta,\gamma)
			&:=& \sqrt{ \max \left\{ (2\beta-1)\left( \alpha^{D}_2 \right)^2 + 2\gamma \alpha^{D}_1 \alpha^{D}_2, 0 \right\} }
		\end{aligned}
		\end{equation}
		Moreover, $\tilde{\tau}_1(\beta,\gamma)$ ($\tilde{\tau}_2(\beta,\gamma)$) is strictly decreasing (strictly increasing) in $\gamma$ for $\gamma \in \left(\gamma_I(\beta), \gamma_{II}(\beta) \right)$.
	\end{lemma}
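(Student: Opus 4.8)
The plan is to invoke Theorem~\ref{thm:strategic_solution} to identify the analyst's equilibrium test allocation with the single-player solution of Theorem~\ref{thm:benchmark} at the auxiliary weights $\hat\alpha_k=\sqrt{\max\{\lambda_k,0\}}$, and then to track how that solution moves as $\gamma$ varies, keeping $\beta>1/2$ fixed. Substituting the decomposition $\alpha^R(\beta,\gamma)=\beta\alpha^D+\gamma\bar\alpha^D$ from \eqref{eq:decomposition_binary} into $\lambda_k=2\alpha^R_k\alpha^D_k-(\alpha^D_k)^2$ gives $\lambda_1=(2\beta-1)(\alpha^D_1)^2-2\gamma\alpha^D_1\alpha^D_2$ and $\lambda_2=(2\beta-1)(\alpha^D_2)^2+2\gamma\alpha^D_1\alpha^D_2$, i.e.\ exactly the quantities under the square roots in \eqref{eq:tilde_alpha_gamma}. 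Note that $\lambda_1$ is affine and strictly decreasing in $\gamma$, $\lambda_2$ is affine and strictly increasing, and since $\beta>1/2$ both are strictly positive at $\gamma=0$. Write $\underline\gamma(\beta)$ and $\overline\gamma(\beta)$ for the values at which $\lambda_2$ and $\lambda_1$, respectively, vanish; an elementary comparison shows $[\underline\gamma(\beta),\overline\gamma(\beta)]\subset\text{int}(\Gamma(\beta))$, and I set $I_\beta:=(\underline\gamma(\beta),\overline\gamma(\beta))$.

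Outside $I_\beta$ the characterization is immediate: for $\gamma\le\underline\gamma(\beta)$ we have $\hat\alpha_2=0$ and $\hat\alpha_1>0$, so Theorem~\ref{thm:benchmark} yields $\tau^*=(T,0)$, and symmetrically $\tau^*=(0,T)$ for $\gamma\ge\overline\gamma(\beta)$. For $\gamma\in I_\beta$ both auxiliary weights are strictly positive; I apply Theorem~\ref{thm:benchmark} after ordering the attributes by $\hat\alpha_k\Sigma^0_k$. A short algebraic step shows that the interior formula $\bar T+\tfrac{\hat\alpha_1}{\hat\alpha_1+\hat\alpha_2}(T-\bar T)$ of that theorem (with $\bar T=\tfrac{\hat\alpha_1}{\hat\alpha_2\Sigma^0_2}-\tfrac{1}{\Sigma^0_1}$) collapses to $\tilde\tau_1(\beta,\gamma)$ in \eqref{eq:tilde_tau_gamma}, and a sign check—splitting on which of $\hat\alpha_1\Sigma^0_1$, $\hat\alpha_2\Sigma^0_2$ is larger and on whether the budget exceeds the relevant threshold—shows that the optimal allocation equals $(\tilde\tau_1,\tilde\tau_2)$ precisely when $\tilde\tau_1\in(0,T)$, equals $(T,0)$ when $\tilde\tau_1\ge T$, and equals $(0,T)$ when $\tilde\tau_1\le 0$. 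Thus on $I_\beta$ the whole characterization reduces to locating $\tilde\tau_1$ relative to $0$ and $T$ as $\gamma$ ranges over $I_\beta$.

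The core step is then to prove that $\gamma\mapsto\tilde\tau_1(\beta,\gamma)$ is continuous and strictly decreasing on $I_\beta$, with $\tilde\tau_1\to T+\tfrac{1}{\Sigma^0_2}$ as $\gamma\downarrow\underline\gamma(\beta)$ and $\tilde\tau_1\to-\tfrac{1}{\Sigma^0_1}$ as $\gamma\uparrow\overline\gamma(\beta)$. Writing $r:=\hat\alpha_1/\hat\alpha_2$, one has $\tilde\tau_1=\tfrac{1}{\Sigma^0_2}\cdot\tfrac{r}{r+1}-\tfrac{1}{\Sigma^0_1}\cdot\tfrac{1}{r+1}+\tfrac{r}{r+1}T$, which is strictly increasing in $r$; and $r$ is strictly decreasing in $\gamma$ on $I_\beta$ because $\hat\alpha_1$ is strictly decreasing and $\hat\alpha_2$ strictly increasing there. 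The endpoint limits follow from $\hat\alpha_1\to0$ (resp.\ $\hat\alpha_2\to0$) while the other weight stays bounded away from $0$. Given this, the intermediate value theorem produces a unique $\gamma_I(\beta)\in I_\beta$ with $\tilde\tau_1(\beta,\gamma_I)=T$ and a unique $\gamma_{II}(\beta)\in I_\beta$ with $\tilde\tau_1(\beta,\gamma_{II})=0$; strict monotonicity together with $T>0$ forces $\gamma_I(\beta)<\gamma_{II}(\beta)$, and since they lie strictly inside $I_\beta\subset\text{int}(\Gamma(\beta))$ they are in the interior as claimed. Feeding the three cases back into the analysis above and gluing with the region outside $I_\beta$ gives exactly the stated piecewise form of $\tau^*(\beta,\gamma)$, and the final strict monotonicity of $\tilde\tau_1$ (resp.\ $\tilde\tau_2$) on $(\gamma_I(\beta),\gamma_{II}(\beta))$ is an immediate corollary of the monotonicity just established. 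The main obstacle is the bookkeeping in the middle step—correctly matching the corner cases of Theorem~\ref{thm:benchmark} across the two orderings of $\hat\alpha_1\Sigma^0_1$ versus $\hat\alpha_2\Sigma^0_2$ and verifying that a single expression $\tilde\tau_1$ governs all the switches—while the substitution $r=\hat\alpha_1/\hat\alpha_2$ is what makes the strict-monotonicity argument clean.
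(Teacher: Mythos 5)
Your proposal is correct and follows essentially the same route as the paper's proof: reduce to the single-player problem via Theorem~\ref{thm:strategic_solution}, identify the interval where both auxiliary weights are positive, show $\tilde{\tau}_1$ is continuous and strictly decreasing with the limits $T+1/\Sigma^0_2$ and $-1/\Sigma^0_1$ at the endpoints, and apply the intermediate value theorem to locate $\gamma_I(\beta)<\gamma_{II}(\beta)$. The only difference is cosmetic: the paper establishes monotonicity of $\tilde{\tau}_1$ by directly differentiating with respect to $\gamma$ and signing the resulting expression, whereas your substitution $r=\hat{\alpha}_1/\hat{\alpha}_2$ achieves the same conclusion by monotone composition, which is arguably cleaner.
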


	\begin{lemma}\label{lemma:binary_Vp_concavity}
		The decision-maker's interim expected equilibrium payoff is a strictly increasing and strictly concave function of $(\tau_1,\tau_2)$.
	\end{lemma}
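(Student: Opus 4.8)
The plan is to write the decision-maker's interim expected equilibrium payoff explicitly as a separable function of $(\tau_1,\tau_2)$ and then read off monotonicity and concavity from the signs of the first and second derivatives. First I would note that at the decision stage the decision-maker plays $d^*(s,\tau)=\mathbb{E}[\tilde b^{D}(s,\tau)]$, so repeating the computation in the proof of Lemma~\ref{lemma:single_player_maximization} with $\alpha=\alpha^{D}$, her interim expected payoff equals minus the posterior variance of her bliss point,
\[
V^{D}(\tau) \;:=\; \mathbb{E}\big[u^{D}(d^*(s,\tau),\tilde\theta)\mid s,\tau\big] \;=\; -\,\hat\sigma^{2,D}(\tau) \;=\; -\sum_{k=1}^{2}(\alpha^{D}_k)^2\,\frac{\Sigma^0_k}{1+\tau_k\Sigma^0_k},
\]
using \eqref{eq:bliss_posterior_var} and \eqref{eq:posterior_var_attributes}; by the properties of the normal distribution this quantity is deterministic given $\tau$, so it also equals the ex ante expected payoff $\mathbb{E}[u^{D}(d^*(\tilde s,\tau),\tilde\theta)]$. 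Crucially, $V^{D}$ is additively separable: $V^{D}(\tau)=\phi_1(\tau_1)+\phi_2(\tau_2)$ with $\phi_k(x):=-(\alpha^{D}_k)^2\Sigma^0_k/(1+x\Sigma^0_k)$.

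Next I would differentiate each summand. For every $k$ and all $x\ge 0$,
\[
\phi_k'(x)=\frac{(\alpha^{D}_k)^2(\Sigma^0_k)^2}{(1+x\Sigma^0_k)^2}>0,\qquad \phi_k''(x)=-\frac{2(\alpha^{D}_k)^2(\Sigma^0_k)^3}{(1+x\Sigma^0_k)^3}<0,
\]
where both strict inequalities use $\Sigma^0_k>0$ together with the maintained assumption of this section that $\alpha^{D}_k>0$ for $k=1,2$. Positivity of $\phi_1'$ and $\phi_2'$ gives that $V^{D}$ is strictly increasing in each coordinate, hence strictly increasing on $\mathcal{T}$ in the coordinatewise order. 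For strict concavity, separability implies that the Hessian of $V^{D}$ is the diagonal matrix $\operatorname{diag}\!\big(\phi_1''(\tau_1),\phi_2''(\tau_2)\big)$, which has two strictly negative diagonal entries and is therefore negative definite on $\mathcal{T}$; equivalently, for any $\tau\ne\tau'$ at least one coordinate differs and strict concavity of the corresponding $\phi_k$ yields the strict inequality in the definition of concavity for the sum. This establishes the claim.

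I do not expect a genuine obstacle here: once the payoff is reduced to $-\hat\sigma^{2,D}(\tau)$ the statement is a one-variable calculus exercise. The only points needing a word of care are (i) identifying that, under the equilibrium decision rule and Gaussian priors, the interim expected payoff collapses to the signal-independent quantity $-\hat\sigma^{2,D}(\tau)$, and (ii) the observation that additive separability plus coordinatewise strict concavity gives \emph{joint} strict concavity---a one-line diagonal-Hessian argument---together with the explicit invocation of $\alpha^{D}_k>0$ for both attributes.
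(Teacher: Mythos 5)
Your proposal is correct and follows essentially the same route as the paper: both write the interim expected payoff explicitly as the separable sum $-\sum_k(\alpha^{D}_k)^2\Sigma^0_k/(1+\tau_k\Sigma^0_k)$ and read off strict monotonicity and strict concavity from the signs of the first and second derivatives (with vanishing cross-partials). Your added remark that the diagonal Hessian with strictly negative entries yields \emph{joint} strict concavity makes explicit a step the paper leaves implicit, but it is the same argument.
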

\begin{lemma}\label{lemma:optimal_allocation_gamma_zero}
		Fix the test budget $T>0$ and the decision-maker's weight vector $\alpha^{D} \in \mathbb{R}^2_{++}$.
		Suppose $\gamma=0$ (the agent is not distorted).
		\begin{itemize}
			\item[(i)] If $\beta\leq 1/2$ (the agent is too insensitive), then the agent optimally chooses no testing: $\tau^*(\beta,0) = (0,0)$.
			\item[(ii)] If $\beta>1/2$ (the agent is sufficiently sensitive), then the agent optimally chooses the decision-maker's most preferred test allocation: $\tau^*(\beta,0) = \tau^*(1,0)$. 
		\end{itemize}
	\end{lemma}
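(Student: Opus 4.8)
The plan is to reduce both parts to Theorem~\ref{thm:strategic_solution} and Theorem~\ref{thm:benchmark} by simply evaluating the auxiliary weights at $\gamma=0$. Setting $\gamma=0$ in the decomposition \eqref{eq:decomposition_binary} gives $\alpha^{R}(\beta,0)=\beta\alpha^{D}$, i.e.\ $\alpha^R_k(\beta,0)=\beta\alpha^D_k$ for $k=1,2$. Substituting into the definition of $\lambda_k$ in \eqref{eq:alpha_n_lambda} yields $\lambda_k=2\alpha^R_k\alpha^D_k-(\alpha^D_k)^2=(2\beta-1)(\alpha^D_k)^2$, so by \eqref{eq:auxiliary_weights} the auxiliary weights are $\hat{\alpha}_k(\beta,0)=\sqrt{\max\{2\beta-1,\,0\}}\,\alpha^D_k$.

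First I would handle part (i). If $\beta\le 1/2$, then $2\beta-1\le 0$, hence $\lambda_k\le 0$ for both $k$ and $\hat{\alpha}_1=\hat{\alpha}_2=0$. By Theorem~\ref{thm:strategic_solution} the equilibrium test allocation coincides with the single-player solution for these zero weights, which by Theorem~\ref{thm:benchmark} is the degenerate case in which the agent is indifferent across all allocations; the tie-breaking rule then yields $\tau^*(\beta,0)=(0,0)$. (Equivalently, from \eqref{eq:foc_strategic}: if $\beta<1/2$ then $\partial V^R/\partial\tau_k<0$ for both $k$, so $(0,0)$ is uniquely optimal; if $\beta=1/2$, the researcher is indifferent and abstains by the tie-breaking rule. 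Either way the conclusion holds.)

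Next, for part (ii), if $\beta>1/2$ then $2\beta-1>0$ and $\hat{\alpha}_k(\beta,0)=\sqrt{2\beta-1}\,\alpha^D_k>0$ for both $k$. The key observation is that this auxiliary weight vector is a strictly positive scalar multiple of $\alpha^{D}$, so the ratio $\hat{\alpha}_1/\hat{\alpha}_2=\alpha^D_1/\alpha^D_2$ does not depend on $\beta$, and the ordering of $\hat{\alpha}_k\Sigma^0_k$ coincides with that of $\alpha^D_k\Sigma^0_k$. Since Theorem~\ref{thm:benchmark} shows the single-player optimum depends on the weights only through their ratios (the threshold $\bar T$ and the splitting proportion $\alpha_1/(\alpha_1+\alpha_2)$ are homogeneous of degree zero in the weights), the single-player solution with weights $\hat{\alpha}(\beta,0)$ is the same as the single-player solution with weights $\alpha^{D}$, for every $\beta>1/2$. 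In particular it coincides with the case $\beta=1,\gamma=0$, where $\alpha^{R}(1,0)=\alpha^{D}$ reduces the game to the single-player problem with weights $\alpha^{D}$; hence $\tau^*(\beta,0)=\tau^*(1,0)$. Finally, to justify the interpretation, $\tau^*(1,0)$ is the decision-maker's most preferred allocation: with the analyst's weights equal to hers, the equilibrium allocation solves the decision-maker's own single-player problem of maximizing ex ante payoff $-\hat{\sigma}^{2,D}(\tau)$ over $\tau\in\mathcal{T}$ (cf.\ Lemma~\ref{lemma:single_player_maximization} with $\alpha=\alpha^{D}$, and Lemma~\ref{lemma:binary_Vp_concavity} for strict monotonicity/concavity), which is exactly what she would choose if she controlled learning.

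I do not expect a genuine obstacle: given Theorems~\ref{thm:benchmark} and \ref{thm:strategic_solution}, the lemma is a direct computation. The only places that require a little care are the tie-breaking at $\beta=1/2$ in part (i), and making explicit in part (ii) the scale-invariance (degree-zero homogeneity in the weights) of the single-player solution, so that the factor $\sqrt{2\beta-1}$ drops out.
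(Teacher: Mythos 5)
Your proof is correct and follows essentially the same route as the paper's: part (i) reduces to $\lambda_k=(2\beta-1)(\alpha^D_k)^2\leq 0$ plus the tie-breaking rule, and part (ii) rests on the observation that $\hat{\alpha}(\beta,0)=\sqrt{2\beta-1}\,\alpha^D$ is a positive scalar multiple of $\alpha^D$, so the factor cancels because the single-player solution is homogeneous of degree zero in the weights---exactly the cancellation the paper exhibits when it evaluates $\tilde{\tau}_1(\beta,0)$ in its proof. The only cosmetic difference is that the paper routes the two parts through Lemmas~\ref{lemma:characterization_learning_low} and \ref{lemma:characterization_learning_high}, whereas you invoke Theorems~\ref{thm:benchmark} and \ref{thm:strategic_solution} directly; the underlying computation is identical.
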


	\subsubsection{Proof of Proposition \ref{prop:CS_manager_gamma}}

	Suppose the premise of Proposition~\ref{prop:CS_manager_gamma} holds. Proposition~\ref{prop:CS_manager_gamma}(i) is a direct implication of Lemma~\ref{lemma:characterization_learning_low}. The manager's interim expected payoff is strictly increasing in $\tau_1$ and $\tau_2$ by Lemma~\ref{lemma:binary_Vp_concavity}. Hence, the manager always strictly prefers any positive test allocation $\tau \neq (0,0)$ to abstaining from learning.
	
	Now take $\beta>1/2$. Lemma~\ref{lemma:optimal_allocation_gamma_zero} shows that (given the constraint $\tau\in \mathcal{T}$) the manager's interim expected payoff $V^{D}(\tau)$ is maximized when the agent's relative bias is $\gamma=0$. Furthermore, since $V^{D}(\tau)$ is strictly increasing in $\tau_1$ and $\tau_2$, the constraint $\tau_1+\tau_2\leq T$ must be binding at the manager's most preferred test allocation. Furthermore, Lemma~\ref{lemma:characterization_learning_high} shows when $\beta>1/2$, the agent's equilibrium test allocation $\tau^*(\beta,\gamma)$ is such that $\tau^*_1(\beta,\gamma)$ and $\tau^*_2(\beta,\gamma)$ are weakly decreasing and weakly increasing in $\gamma$, respectively, and $\tau^*_1(\beta,\gamma) + \tau^*_2(\beta,\gamma) = T$.  The claim of Proposition~\ref{prop:CS_manager_gamma}(ii) then follows from the strict concavity of $V^{D}(\tau)$ (shown in Lemma~\ref{lemma:binary_Vp_concavity}).
	\qed

	\subsubsection{Proof of Proposition \ref{prop:CS_manager_beta}}
 
	Suppose $\gamma=0$. Then, Lemma \ref{lemma:optimal_allocation_gamma_zero} shows that the researcher's learning strategy is to not learn for $\beta\leq 1/2$ and then to implement the decision-maker's preferred learning strategy. As the decision-maker's interim expected payoff $V^{D}(\tau)$ is strictly increasing and concave in $\tau_1$ and $\tau_2$ (Lemma \ref{lemma:binary_Vp_concavity}), this implies that $V^{D}(\tau)$ is weakly increasing in $\beta$ for $\gamma=0$.

	Suppose from this point onwards that $\gamma>0$ (case $\gamma<0$ is completely analogous). 
	Then, for the researcher's weights to be weakly positive, we must have $\beta \geq \underline{\beta}:= \gamma \frac{\alpha_2^{D}}{\alpha_1^{D}}$.
	The equilibrium test allocation is given by Theorems~\ref{thm:benchmark} and ~\ref{thm:strategic_solution}, 
    where we can express $\hat{\alpha}_k$ for $k=1,2$ as \eqref{eq:tilde_alpha_gamma},\footnote{While Lemma \ref{lemma:characterization_learning_high} is only stated for $\beta > 1/2$, expression \eqref{eq:tilde_alpha_gamma} is well-defined for all $\beta > \underline{\beta}$.}
	and so
	\begin{align*}
		\lambda_1 &= (2\beta - 1) (\alpha^{D}_1)^2 - 2 \gamma \alpha^{D}_1 \alpha^{D}_2,
		\\
		\lambda_2 &= (2\beta - 1) (\alpha^{D}_2)^2 + 2 \gamma \alpha^{D}_1 \alpha^{D}_2.
	\end{align*}
	Note that $\lambda_k$ for $k=1,2$ is strictly increasing in $\beta$. Therefore, there exist $\beta_1$ and $\beta_2$ such that $\hat{\alpha}_k = 0$ for $\beta \leq \beta_k$, and $\hat{\alpha}_k$ is strictly positive and strictly increasing for $\beta > \beta_k$. 
	These values are given by the respective roots of $\lambda_k = 0$:\footnote{Note that these $\beta_k$ are different from the ones defined in the proof of Lemma \ref{lemma:characterization_learning_low}.}
	\begin{align*}
		\beta_1 := \frac{1}{2} + \gamma \frac{\alpha^{D}_2}{\alpha^{D}_1}, 
		&& 
		\beta_2 := \frac{1}{2} - \gamma \frac{\alpha^{D}_1}{\alpha^{D}_2}.
	\end{align*}
	Observe that $\beta_2 < 1/2 < \beta_1$ and $\underline{\beta} < \beta_1$. 
	Theorems~\ref{thm:benchmark} and \ref{thm:strategic_solution} 
    then imply that if $\beta > \beta_2$, then $\exists k: \tau^*_k (\beta,\gamma) > 0$ and $\tau^*_1 (\beta,\gamma) + \tau^*_2 (\beta,\gamma) = T$ (since at least one of the auxiliary weights $\hat{\alpha}_k$ is then strictly positive). 
	Therefore, we can conclude that if $\underline{\beta} > \beta_2$, then $\tau^*(\beta, \gamma) = (0,T)$ for all $\beta \in [\underline{\beta}, \beta_1]$, and if $\underline{\beta} < \beta_2$, then
	\begin{align*}
		\tau^*(\beta, \gamma) = \begin{cases}
			(0,0) & \text{ for } \beta \in [\underline{\beta}, \beta_2],
			\\
			(0,T) & \text{ for } \beta \in (\beta_2, \beta_1].
		\end{cases}
	\end{align*}
	As $V^{D}(\tau)$ is strictly increasing in $\tau^*_2$, it follows that $V^{D}(\tau)$ is weakly increasing in $\beta$ in either of the two cases.
	
	Consider now the case when $\beta > \beta_1$, so we have $\hat{\alpha}_1,\hat{\alpha}_2 > 0$. 
	Theorems~\ref{thm:benchmark} and \ref{thm:strategic_solution}imply that the equilibrium test allocation is given by
	\begin{equation}
		\begin{aligned}
			\tau^*_1 
			&= \max \left\{ 0, \min \left\{ \frac{ \frac{\hat{\alpha}_1}{\hat{\alpha}_2} \Sigma^0_{1} - \Sigma^0_{2}}{\Sigma^0_{1}\Sigma^0_{2}( \frac{\hat{\alpha}_1}{\hat{\alpha}_2} + 1 )} + 
			\frac{ \frac{\hat{\alpha}_1}{\hat{\alpha}_2} }{ \frac{\hat{\alpha}_1}{\hat{\alpha}_2} + 1 } T, T  \right\}    \right\}, \\
			\tau^*_2
			&= T- \tau^*_1.
		\end{aligned}
	\end{equation}
	Note that the expression above (and, hence, $V^{D}(\tau)$) only depends on $\beta$ through $\frac{\hat{\alpha}_1}{\hat{\alpha}_2}$. 
	Observe that $\lim\limits_{\beta \to +\infty} \frac{\hat{\alpha}_1}{\hat{\alpha}_2} = \frac{\alpha^{D}_1}{\alpha^{D}_2}$, so that $\lim \limits_{\beta \to +\infty} \tau^*(\gamma,\beta) = \tau^*(\alpha^{D})$, so the equilibrium test allocation converges to the decision-maker's optimal (payoff-maximizing) test allocation as $\beta \to \infty$. 	Further, $\frac{\hat{\alpha}_1}{\hat{\alpha}_2}$ is strictly increasing in $\beta$ in the case considered:
	\begin{align*}
		\frac{\partial}{\partial \beta} \frac{\hat{\alpha}_1}{\hat{\alpha}_2} 
		= \frac{1}{\hat{\alpha}_2} \frac{\partial \hat{\alpha}_1}{\partial \beta} - \frac{\hat{\alpha}_1}{\hat{\alpha}_2^2} \frac{\partial \hat{\alpha}_2}{\partial \beta}
		= \frac{\hat{\alpha}_1}{\hat{\alpha}_2} \left( \left( \frac{\alpha^{D}_1}{\hat{\alpha}_1} \right)^2 - \left( \frac{\alpha^{D}_2}{\hat{\alpha}_2} \right)^2 \right) > 0,
	\end{align*}
	where the inequality follows from $\hat{\alpha}_k = \sqrt{ \lambda_k }$ and
	\begin{align*}
		\left( \frac{\alpha^{D}_1}{\alpha^{D}_2} \right)^2 > \left( \frac{\hat{\alpha}_1}{\hat{\alpha}_2} \right)^2
		&= \frac{ (2\beta - 1) (\alpha^{D}_1)^2 - 2 \gamma \alpha^{D}_1 \alpha^{D}_2 }{ (2\beta - 1) (\alpha^{D}_2)^2 + 2 \gamma \alpha^{D}_1 \alpha^{D}_2 }= \frac{ (\alpha^{D}_1)^2 - \frac{2 \gamma}{2 \beta - 1} \alpha^{D}_1 \alpha^{D}_2 }{ (\alpha^{D}_2)^2 + \frac{2 \gamma}{2 \beta - 1} \alpha^{D}_1 \alpha^{D}_2 }.
	\end{align*}
	Together with the limit result above, this implies that $\frac{\hat{\alpha}_1}{\hat{\alpha}_2}$ and $\tau^*(\gamma,\beta)$ \emph{monotonically} converge to $\frac{\alpha^{D}_1}{\alpha^{D}_2}$ and $\tau^*(\alpha^{D})$, respectively, as $\beta \to \infty$. 
	As $V^{D}(\tau)$ is concave in $\tau$ (see Lemma \ref{lemma:binary_Vp_concavity}) and maximized by $\tau^*(\alpha^{D})$, it must be weakly increasing in $\beta \in (\beta_1, +\infty)$. 
	\qed

	\subsection{Proofs for Section \ref{sec:discrimination}: \nameref{sec:discrimination}}\label{appendix:discrimination}

	\subsubsection{Proof of Lemma \ref{prop:discrimination_impartial}}
    For the first part of the statement, observe that the maximization problem of an impartial advisor is
    \begin{eqnarray*} 
    	\max_{\tau \in \mathcal{T}} \frac{1}{2} \mathbb{E}\left[u_1 \left( \tilde{d}^{*}(\tau),\tilde{\theta}_1 \right) + u_2 \left( \tilde{d}^{*}(\tau),\tilde{\theta}_2 \right) \right]
    	.
    \end{eqnarray*}
    By the definition of welfare, an impartial advisor thus chooses the welfare-maximizing test allocation in equilibrium, since the objectives are scaled versions of each other.

    Let us turn to the second part of lemma.
	Fix $\delta \in (0,1)$. Using Theorem \ref{thm:strategic_solution}, we can solve the discrimination model as a solution to a single-agent model with weights
	\begin{eqnarray} \label{app:alpha_1_tilde_p_d}
		\hat{\alpha}_1(p,\delta) &:=& 
		\begin{cases}
			\frac{1}{2} \sqrt{(1+\delta)(1-2p-\delta)} & \text{if}\ p<\frac{1-\delta}{2} \\ 
			0 & \text{otherwise} 
		\end{cases} \\ \label{app:alpha_2_tilde_p_d}
		\hat{\alpha}_2(p,\delta) &:=&
		\begin{cases}
			\frac{1}{2} \sqrt{(1-\delta)(1+2p+\delta)} & \text{if}\ p > - \frac{1+\delta}{2} \\
			0 & \text{otherwise}
		\end{cases}
	\end{eqnarray}
	Let $p=0$. Then $\hat{\alpha}_1(0,\delta) = \hat{\alpha}_2(0,\delta) = \frac{1}{2} \sqrt{1-\delta^2}$. The equilibrium test allocation depends on the weights only through their ratios $\frac{\hat{\alpha}_1(0,\delta)}{\hat{\alpha_1}(0,\delta)+\hat{\alpha}_2(0,\delta)} = \frac{1}{2}$ and $\frac{\hat{\alpha}_2(0,\delta)}{\hat{\alpha_1}(0,\delta)+\hat{\alpha}_2(0,\delta)} = \frac{1}{2}$. These ratios are independent of the level of discrimination $\delta$, which completes the proof.\qed 

	\subsubsection{Proof of Lemma \ref{prop:discrimination_equalizing_bias}}
We first define several new notions.  Given the politician's discrimination $\delta \in (0,1)$ and the advisor's partiality $p\geq 0$, let $\tau^*(p,\delta)=\left( \tau^*_1(p,\delta), \tau^*_2(p,\delta)  \right)$ denote the respective equilibrium test allocation. Note it holds $\tau^*_1(p,\delta)+\tau^*_2(p,\delta)=1$, i.e., the budget is always exhausted in equilibrium.\footnote{This result follows from the assumption that the sums of the weights of each player are equal: $\sum_k \alpha^{R}_k(p)=\sum_k \alpha^{D}_k(\delta)$. Thus, it cannot simultaneously hold $\alpha_k^R(p)\leq \alpha_k^{D}$ for both $k$ (which is necessary and sufficient for $\hat{\alpha}_1(p,\delta)=\hat{\alpha}_2(p,\delta)=0$), and there is thus always learning in equilibrium.}

Given a particular test allocation $\tau$ and equilibrium decision strategy, 
let $V^k(\tau,\delta)$ denote the ex ante expected utility of group $k$, i.e.,
 \begin{align*}
		V^k(\tau,\delta) =&\ \mathbb{E}\left[ -(\tilde{d}^{*D}(\tau;\delta) - \tilde{\theta}_k)^2 \right].
\end{align*}

Under the additional constraint that the budget is exhausted, and so the test allocation satisfies $\tau_1=1-\tau_2$, let us define the difference between the expected utilities of group 1 from group 2 as
\begin{align}\label{eq:discrimination_Cap_Delta_fn}
		\Delta(\tau_2,\delta) := V^1((1-\tau_2,\tau_2),\delta) - V^2((1-\tau_2,\tau_2),\delta).
	\end{align}
 Note that inequality is then given by $I(p,\delta) = |\Delta(\tau_2^*(p,\delta),\delta) |$.

We now state two intermediary results (proved in Online Appendix \ref{OA:discrimination}), used to prove Lemma~\ref{prop:discrimination_equalizing_bias}.

 \begin{lemma} \label{lemma:discrimination_exists_partiality}
     For each $\delta \in (0,1)$, there exists a threshold partiality level $p^{aux}(\delta)\in \left(0,\frac{1-\delta}{2}\right)$ such that the equilibrium learning strategy satisfies $\tau^*_2(p,\delta)=1$ for advisors with $p \geq p^{aux}(\delta)$ and 
     \begin{align*}
    \tau^*_2(p,\delta)=\tau^{aux}_2(p,\delta):= \frac{2\hat{\alpha}_2(p,\delta) - \hat{\alpha}_1(p,\delta)}{\hat{\alpha}_1(p,\delta)+\hat{\alpha}_2(p,\delta)}\ \ \ \ \text{for advisors with }\ p \in [0,p^{aux}(\delta)],
     \end{align*}
     where $\hat{\alpha}_1(p,\delta)$ and $\hat{\alpha}_2(p,\delta)$ are given by equations \eqref{app:alpha_1_tilde_p_d} and \eqref{app:alpha_2_tilde_p_d}.
\end{lemma}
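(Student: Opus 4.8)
The plan is to reduce the equilibrium allocation to the single-player benchmark via Theorem~\ref{thm:strategic_solution} and then read off the two regimes---interior versus corner---directly from Theorem~\ref{thm:benchmark}. Recall that in this application $\Sigma^0_1=\Sigma^0_2=1$ and $T=1$, and the auxiliary weights $\hat{\alpha}_1(p,\delta),\hat{\alpha}_2(p,\delta)$ are those in \eqref{app:alpha_1_tilde_p_d}--\eqref{app:alpha_2_tilde_p_d}, obtained from $\lambda_1=\tfrac14(1+\delta)(1-2p-\delta)$ and $\lambda_2=\tfrac14(1-\delta)(1+2p+\delta)$. First I would record the monotonicity in $p$: $\lambda_1$ (hence $\hat{\alpha}_1$) is weakly decreasing and $\lambda_2$ (hence $\hat{\alpha}_2$) is strictly increasing, so that for every $p>0$ one has $\hat{\alpha}_2>\hat{\alpha}_1$. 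Thus group~2 is the attribute with the larger $\hat{\alpha}_k\Sigma^0_k$, which fixes the labeling needed to invoke Theorem~\ref{thm:benchmark} (the tie $\hat{\alpha}_1=\hat{\alpha}_2$ occurs only at $p=0$, where the interior formula gives $\tau^{aux}_2=1/2$).

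Second, I would apply Theorem~\ref{thm:benchmark} with group~2 in the role of the dominant attribute. With unit variances and $T=1$ its threshold budget is $\bar{T}=\hat{\alpha}_2/\hat{\alpha}_1-1$, so the corner allocation $\tau^*_2=1$ obtains exactly when $T=1\le\bar{T}$, i.e.\ when $\hat{\alpha}_2\ge 2\hat{\alpha}_1$, while for $\hat{\alpha}_2<2\hat{\alpha}_1$ the interior branch applies. Plugging $\bar{T}$ into that interior branch and simplifying gives $\tau^*_2=\bar{T}+\tfrac{\hat{\alpha}_2}{\hat{\alpha}_1+\hat{\alpha}_2}\,(1-\bar{T})=\tfrac{2\hat{\alpha}_2-\hat{\alpha}_1}{\hat{\alpha}_1+\hat{\alpha}_2}=\tau^{aux}_2$; the numerator collapses to $\hat{\alpha}_1(2\hat{\alpha}_2-\hat{\alpha}_1)$ after clearing the common factor $\hat{\alpha}_1$. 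Hence the two regimes are separated precisely by the condition $\hat{\alpha}_2=2\hat{\alpha}_1$.

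Third, I would locate the threshold. Since $\hat{\alpha}_2\ge 2\hat{\alpha}_1\iff\lambda_2\ge 4\lambda_1$ whenever $\lambda_1\ge 0$, I define $h(p):=\lambda_2(p,\delta)-4\lambda_1(p,\delta)$, which is \emph{linear} and \emph{strictly increasing} in $p$ (slope $\tfrac{1-\delta}{2}+2(1+\delta)>0$). The endpoints of the relevant interval give $h(0)=-\tfrac34(1-\delta^2)<0$ and $h\!\left(\tfrac{1-\delta}{2}\right)=\tfrac{1-\delta}{2}>0$ (at the right endpoint $\lambda_1=0$). By the intermediate value theorem there is a unique root $p^{aux}(\delta)\in\left(0,\tfrac{1-\delta}{2}\right)$, the required threshold (solving $h(p)=0$ even yields the closed form $p^{aux}(\delta)=\tfrac{3(1-\delta^2)}{2(5+3\delta)}$, though existence and the range suffice). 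Monotonicity of $h$ gives $\hat{\alpha}_2<2\hat{\alpha}_1$ for $p<p^{aux}$ and $\hat{\alpha}_2\ge 2\hat{\alpha}_1$ for $p\in[p^{aux},\tfrac{1-\delta}{2})$, while for $p\ge\tfrac{1-\delta}{2}$ one has $\hat{\alpha}_1=0<\hat{\alpha}_2$, so the corner $\tau^*_2=1$ holds trivially and covers the remaining range. Combining the three steps yields $\tau^*_2=\tau^{aux}_2$ on $[0,p^{aux}(\delta)]$ and $\tau^*_2=1$ for $p\ge p^{aux}(\delta)$.

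The step I expect to be the main obstacle is the bookkeeping in the second step: Theorem~\ref{thm:benchmark} is stated under the ordering convention $\alpha_1\Sigma^0_1\ge\alpha_2\Sigma^0_2$, so one must consistently relabel group~2 as the dominant attribute (justified for all $p>0$ by Step~1) and then verify that the algebraic reduction of the interior branch indeed collapses to $\tfrac{2\hat{\alpha}_2-\hat{\alpha}_1}{\hat{\alpha}_1+\hat{\alpha}_2}$ and that the corner condition $T\le\bar{T}$ coincides with $\hat{\alpha}_2\ge 2\hat{\alpha}_1$. Everything else is either a short computation or a direct IVT argument on the linear function $h$.
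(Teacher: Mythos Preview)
Your proof is correct and follows the same high-level route as the paper: reduce to the single-player problem via Theorem~\ref{thm:strategic_solution}, read off the interior/corner regimes from Theorem~\ref{thm:benchmark}, and use the intermediate value theorem to locate the threshold $p^{aux}(\delta)$.

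The one place where you and the paper differ is in how the threshold is pinned down. The paper works directly with $\tau^{aux}_2(p,\delta)$, computes $\partial\tau^{aux}_2/\partial p$ (which involves derivatives of the square roots $\hat{\alpha}_1,\hat{\alpha}_2$), shows it is strictly positive on $\bigl[0,\tfrac{1-\delta}{2}\bigr)$, and then applies the IVT to the equation $\tau^{aux}_2=1$. You instead square the boundary condition $\hat{\alpha}_2=2\hat{\alpha}_1$ to obtain $\lambda_2=4\lambda_1$ and work with the \emph{linear} function $h(p)=\lambda_2-4\lambda_1$. This is cleaner: linearity makes monotonicity and uniqueness of the root immediate and yields the closed form $p^{aux}(\delta)=\tfrac{3(1-\delta^2)}{2(5+3\delta)}$, which the paper does not record. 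The paper's derivative calculation, on the other hand, buys a slightly stronger intermediate fact (strict monotonicity of $\tau^*_2$ in $p$ on the whole interior range), which it later reuses in the proof of Lemma~\ref{lemma:pareto_frontier}; your argument establishes only that the regime switches once, which is all that the present lemma requires.
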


\begin{lemma}\label{lem:discrimination_restore_equality}
For each $\delta$ there exists a unique value of $\hat{\tau}_2(\delta) \in \left( 1/2, 1 \right)$ for which it holds $\Delta(\hat{\tau}_2(\delta);\delta)=0$. 
\end{lemma}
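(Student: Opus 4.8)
The plan is to reduce $\Delta(\tau_2,\delta)$ to an explicit one-dimensional function of $\tau_2$ and then analyze it directly. The crucial simplification is that all priors are centered at zero ($\mu_k^0=0$), so $\tilde{d}^{*D}(\tau;\delta)$ and each $\tilde{\theta}_k$ have mean zero, and the group-$k$ payoff collapses to $V^k(\tau,\delta)=-\operatorname{var}\!\big(\tilde{d}^{*D}(\tau;\delta)-\tilde{\theta}_k\big)$. Expanding this variance produces three pieces: $\operatorname{var}(\tilde{d}^{*D}(\tau;\delta))$, the term $-2\cov(\tilde{d}^{*D}(\tau;\delta),\tilde{\theta}_k)$, and the unit prior variance $\operatorname{var}(\tilde{\theta}_k)=1$. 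The first and third pieces are identical for $k=1$ and $k=2$, so they cancel when forming the difference, leaving $\Delta(\tau_2,\delta)=2\big[\cov(\tilde{d}^{*D},\tilde{\theta}_1)-\cov(\tilde{d}^{*D},\tilde{\theta}_2)\big]$.

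Each covariance can be read off the computation in the proof of Lemma~\ref{lemma:strategic_player_maximization}: treating group $k$ as a ``player'' whose weight vector is the $k$-th standard basis vector, that formula gives $\cov(\tilde{d}^{*D}(\tau;\delta),\tilde{\theta}_k)=\alpha^{D}_k\,\Sigma^0_k\,\hat{\Sigma}_k(\tau_k)\,\tau_k=\alpha^{D}_k\frac{\tau_k}{1+\tau_k}$, using $\Sigma^0_k=1$. Substituting $\alpha^{D}_1(\delta)=\frac{1+\delta}{2}$, $\alpha^{D}_2(\delta)=\frac{1-\delta}{2}$ and the budget-exhausting constraint $\tau_1=1-\tau_2$ yields the closed form
\[
\Delta(\tau_2,\delta)=(1+\delta)\,\frac{1-\tau_2}{2-\tau_2}-(1-\delta)\,\frac{\tau_2}{1+\tau_2}.
\]

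For existence I would evaluate the endpoints of $[1/2,1]$: one computes $\Delta(1/2,\delta)=\frac{2\delta}{3}>0$ and $\Delta(1,\delta)=-\frac{1-\delta}{2}<0$ for every $\delta\in(0,1)$. Since $\Delta(\cdot,\delta)$ is continuous there, the intermediate value theorem delivers a root $\hat{\tau}_2(\delta)\in(1/2,1)$, which already places the root strictly inside the claimed interval. For uniqueness, differentiation gives
\[
\frac{\partial \Delta}{\partial \tau_2}=-\frac{1+\delta}{(2-\tau_2)^2}-\frac{1-\delta}{(1+\tau_2)^2}<0
\]
for all $\tau_2\in[0,1]$, so $\Delta(\cdot,\delta)$ is strictly decreasing and the root is unique.

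The argument is routine once the cancellation of the common terms is recognized. The step I would be most careful with is exactly that cancellation: verifying that $\operatorname{var}(\tilde{d}^{*D})$ and $\operatorname{var}(\tilde{\theta}_k)=1$ enter $V^1$ and $V^2$ symmetrically, so that $\Delta$ depends on $\tau_2$ only through the two covariance terms, and confirming that $\partial_{\tau_2}\Delta$ is negative \emph{uniformly} in $\delta\in(0,1)$ rather than only on part of the range — both coefficients $1\pm\delta$ being strictly positive is what secures monotonicity, hence uniqueness, throughout.
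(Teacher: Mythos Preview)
Your proof is correct and follows essentially the same route as the paper: both derive $\Delta(\tau_2,\delta)=(1+\delta)\frac{1-\tau_2}{2-\tau_2}-(1-\delta)\frac{\tau_2}{1+\tau_2}$ by identifying group~$k$'s payoff with that of a researcher with weight vector $e_k$ and invoking the covariance formula from Lemma~\ref{lemma:strategic_player_maximization}, then check $\Delta(1/2,\delta)>0$, $\Delta(1,\delta)<0$, and $\partial_{\tau_2}\Delta<0$. Your version is slightly more streamlined in that you observe the cancellation of $\operatorname{var}(\tilde d^{*D})$ and $\operatorname{var}(\tilde\theta_k)$ upfront rather than computing the full $V^k$ and then subtracting, but the substance is identical.
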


 Given the politician's level of discrimination $\delta$ and her equilibrium decision strategy, the value $\hat{\tau}_2(\delta)$ is the amount of tests allocated to learn about group 2 that guarantees that the expected utilities of both social groups are the same. From the proof of Lemma \ref{lem:discrimination_restore_equality}, we obtain  
 \begin{align*}
  \Delta(\tau_2,\delta) =  \frac{(1+\delta)(1-\tau_2)}{1+1-\tau_2} - \frac{(1-\delta)\tau_2}{1+\tau_2}.
 \end{align*}
 Solving $\Delta(\hat{\tau}_2(\delta),\delta)=0$ for $\hat{\tau}_2(\delta)\geq 0$, we obtain
	\begin{align} \label{eq:equality_restoring_tau2}
		\hat{\tau}_2(\delta)
		&=
		\frac{-(1-\delta)+ \sqrt{1+3\delta^2}}{2\delta}.
	\end{align}
	
	From Lemma~\ref{lemma:discrimination_exists_partiality}, the equilibrium test allocation satisfies: $\tau^*_2(p,\delta)=1/2$ at $p=0$, $\tau^*_2(p,\delta)=1$ at $p \geq p^{aux}(\delta)$, $\tau^*_2(p,\delta)$ is continuous on $p\in [0,1]$ and it is strictly increasing in $p \in [0,p^{aux}(\delta))$. Since $\hat{\tau}_2(\delta) \in (1/2,1)$, we thus have that for any level of discrimination $\delta \in (0,1)$ there exists a unique partiality level $\hat{p}(\delta) \in (0,p^{aux}(\delta))$ such that 
	$\hat{\tau}_2(\delta) = \tau^*_2(\hat{p}(\delta),\delta)$.
	Solving this equation, we obtain
	\begin{eqnarray*}
		\hat{p}(\delta)
		&=& \delta \left( \frac{1}{\sqrt{1+3\delta^2}} - \frac{1}{2}  \right),
	\end{eqnarray*}
    which is continuous on $\delta\in (0,1)$. Taking the first derivative, we get
	\begin{eqnarray*}
		\hat{p}'(\delta)
		&=&
		\frac{1}{\left( 1+3\delta^2 \right)^{3/2}} - \frac{1}{2}\ \ 
		\begin{cases}
			>0 & \text{ if }\  \delta < \overline{\delta} \\
			=0 & \text{ if }\  \delta = \overline{\delta} \\
			<0 & \text{ if }\  \delta > \overline{\delta}
		\end{cases}
	\end{eqnarray*}
	where $\overline{\delta} := \sqrt{\frac{2^{2/3}-1}{3}}.$ 
 	\qed

\subsubsection{Proof of Proposition~\ref{prop:discrimination_frontier}}
The first part of the statement follows directly from the following Lemma~\ref{lemma:pareto_frontier}, which is proved in Online Appendix \ref{OA:discrimination}.
\begin{lemma} \label{lemma:pareto_frontier}
Welfare $W(p,\delta)$ and inequality $I(p,\delta)$ are both (i) continuous in $p$; and (ii) strictly decreasing on $p \in (0,\hat{p}(\delta))$, where $\hat{p}(\delta)$ restores equality. Furthermore, welfare strictly decreases and inequality strictly increases on $p \in (\hat{p}(\delta),p^{aux}(\delta))$, where $p^{aux}(\delta)$ is characterized in Lemma \ref{lemma:discrimination_exists_partiality}; and they are both constant on $p \geq p^{aux}(\delta)$. 
\end{lemma}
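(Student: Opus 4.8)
The plan is to reduce both $W(p,\delta)$ and $I(p,\delta)$ to functions of the single scalar $\tau_2^*(p,\delta)$ and then exploit the known behaviour of $\tau_2^*$ in $p$. Recall that the budget is exhausted, so $\tau_1^*(p,\delta)=1-\tau_2^*(p,\delta)$, and that (from Lemma~\ref{lemma:discrimination_exists_partiality} together with the proof of Lemma~\ref{prop:discrimination_equalizing_bias}) the map $p\mapsto\tau_2^*(p,\delta)$ is continuous on $[0,1]$, strictly increasing from $1/2$ to $1$ on $[0,p^{aux}(\delta))$, and constant at $1$ on $[p^{aux}(\delta),1]$. First I would obtain closed forms for welfare and inequality in terms of $\tau_2$.

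For welfare, I would apply the covariance identity derived in the proof of Lemma~\ref{lemma:strategic_player_maximization} with ``group $k$'s weight vector'' (unit weight on attribute $k$), giving $\cov(\tilde d^*,\tilde\theta_k)=\alpha^D_k\tfrac{\tau_k}{1+\tau_k}$ and $\psi^D(\tau)=\sum_k(\alpha^D_k)^2\tfrac{\tau_k}{1+\tau_k}$, since $\Sigma^0_k=1$. Expanding $V^k=-\mathbb{E}[(\tilde d^*-\tilde\theta_k)^2]=-\big(\psi^D(\tau)-2\alpha^D_k\tfrac{\tau_k}{1+\tau_k}+1\big)$ and summing yields
\begin{equation*}
W(p,\delta)=2\sum_k\alpha^D_k(1-\alpha^D_k)\frac{\tau_k}{1+\tau_k}-2.
\end{equation*}
The key observation is the cancellation $\alpha^D_1(1-\alpha^D_1)=\alpha^D_2(1-\alpha^D_2)=\tfrac{1-\delta^2}{4}$, so that, with $\tau_1=1-\tau_2$, welfare equals a strictly positive constant times $\tfrac{1-\tau_2}{2-\tau_2}+\tfrac{\tau_2}{1+\tau_2}$ plus a constant. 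This is strictly concave in $\tau_2$ and symmetric about its maximiser $\tau_2=1/2$, hence strictly decreasing in $\tau_2$ on $(1/2,1)$.

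For inequality, I would use $I(p,\delta)=|\Delta(\tau_2^*(p,\delta),\delta)|$ with $\Delta$ as in \eqref{eq:discrimination_Cap_Delta_fn}. Differentiating shows $\Delta(\cdot,\delta)$ is strictly decreasing in $\tau_2$: the terms $\tfrac{1-\tau_2}{2-\tau_2}$ and $\tfrac{\tau_2}{1+\tau_2}$ have derivatives $-(2-\tau_2)^{-2}<0$ and $(1+\tau_2)^{-2}>0$, and both enter $\Delta$ with positive coefficients $1\pm\delta$. Together with $\Delta(1/2,\delta)=2\delta/3>0$ and $\Delta(\hat\tau_2(\delta),\delta)=0$ (Lemma~\ref{lem:discrimination_restore_equality}), where $\hat\tau_2(\delta)\in(1/2,1)$, this pins down the sign: $\Delta>0$ on $(1/2,\hat\tau_2(\delta))$ and $\Delta<0$ on $(\hat\tau_2(\delta),1)$.

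It then remains to assemble the claims through the chain rule. On $p\in(0,p^{aux}(\delta))$, $\tau_2^*$ strictly increases through $(1/2,1)$, so welfare strictly decreases there---covering both $(0,\hat p(\delta))$ and $(\hat p(\delta),p^{aux}(\delta))$. For inequality, on $(0,\hat p(\delta))$ one has $\tau_2^*\in(1/2,\hat\tau_2(\delta))$, so $I=\Delta(\tau_2^*,\delta)$ strictly decreases; on $(\hat p(\delta),p^{aux}(\delta))$ one has $\tau_2^*\in(\hat\tau_2(\delta),1)$, so $I=-\Delta(\tau_2^*,\delta)$ strictly increases. For $p\geq p^{aux}(\delta)$, $\tau_2^*\equiv1$, so both are constant. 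Continuity in $p$ is immediate, since $W$ and $I$ are continuous in $\tau_2$ and $\tau_2^*$ is continuous in $p$. The main obstacle is the welfare derivation---specifically spotting the cancellation that makes $W$ single-peaked at $\tau_2=1/2$---and correctly resolving the absolute value in $I$ by tracking the sign change of $\Delta$ across $\hat\tau_2(\delta)$, which is precisely what yields the decrease-then-increase pattern.
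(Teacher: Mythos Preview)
Your proof is correct and follows essentially the same route as the paper: reduce $W$ and $I$ to functions of the scalar $\tau_2^*(p,\delta)$, use the known monotonicity of $\tau_2^*$ in $p$ from Lemma~\ref{lemma:discrimination_exists_partiality}, and then verify that welfare is strictly decreasing in $\tau_2$ on $(1/2,1)$ and that $\Delta(\cdot,\delta)$ is strictly decreasing with a unique zero at $\hat\tau_2(\delta)$. The one mild difference is in the welfare step: you spot the cancellation $\alpha^D_1(1-\alpha^D_1)=\alpha^D_2(1-\alpha^D_2)=\tfrac{1-\delta^2}{4}$ \emph{before} differentiating, which exposes the symmetry of $\tau_2\mapsto\tfrac{1-\tau_2}{2-\tau_2}+\tfrac{\tau_2}{1+\tau_2}$ about $\tau_2=1/2$ directly, whereas the paper differentiates the more expanded expression for $\omega(\tau_2,\delta)$ and only then observes the common factor. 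Both arrive at the same inequality $(2-\tau_2)^2<(1+\tau_2)^2$ for $\tau_2>1/2$, so the arguments are equivalent.
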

Hence, the advisor with $\hat{p}(\delta)$ strictly improves both welfare and inequality outcomes compared to any advisor with $p>\hat{p}(\delta)$, and hence the advisors with $p>\hat{p}(\delta)$ do not constitute the Pareto frontier. 

For the second part of the statement, we build on the following Lemma~\ref{lemma:discr_tests_unchecked_equalizing}, which is proved in Online Appendix \ref{OA:discrimination}. For a given $\delta$, let $(1-\bar{\tau}_2(\delta),\bar{\tau}_2(\delta))$ denote the optimal learning strategy of the politician under unchecked discrimination, and let $(1-\hat{\tau}_2(\delta),\hat{\tau}_2(\delta))$, where $\hat{\tau}_2(\delta)$ is given by equation \eqref{eq:equality_restoring_tau2}, denote the equilibrium test allocation chosen by the advisor $\hat{p}(\delta)$.

\begin{lemma} \label{lemma:discr_tests_unchecked_equalizing}
Fix $\delta$. Then it holds that
\begin{align*}
    | \overline{\tau}_2(\delta) - 1/2|> |\hat{\tau}_2(\delta) - 1/2|. 
\end{align*}
\end{lemma}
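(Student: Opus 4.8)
The plan is to reduce the claim to elementary algebra by computing both allocations in closed form and comparing their distances from $1/2$. The advisor's side is already available: the equality-restoring allocation $\hat{\tau}_2(\delta)$ is given explicitly by \eqref{eq:equality_restoring_tau2}, and Lemma~\ref{lem:discrimination_restore_equality} guarantees $\hat{\tau}_2(\delta) \in (1/2, 1)$. The first real task is therefore to pin down the unchecked politician's allocation $\overline{\tau}_2(\delta)$. Since the unchecked politician controls both learning and decision-making, her problem is exactly the single-player problem with weights $\alpha = \left(\tfrac{1+\delta}{2}, \tfrac{1-\delta}{2}\right)$, prior variances $\Sigma^0_1 = \Sigma^0_2 = 1$, and budget $T = 1$. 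Applying Theorem~\ref{thm:benchmark} with $\alpha_1 \Sigma^0_1 > \alpha_2 \Sigma^0_2$ (which holds because $\delta > 0$), the relevant threshold is $\bar{T} = \tfrac{\alpha_1}{\alpha_2} - 1 = \tfrac{2\delta}{1-\delta}$, and $T = 1 \leq \bar{T}$ precisely when $\delta \geq 1/3$. This produces a clean two-regime characterization, which is the one point where some care is needed.

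Next I would compute the two deviations from $1/2$ in each regime. For $\delta \geq 1/3$ the politician learns only about group~$1$, so $\overline{\tau}_2(\delta) = 0$ and $|\overline{\tau}_2(\delta) - 1/2| = 1/2$. For $\delta < 1/3$, substituting into the $T > \bar{T}$ branch of Theorem~\ref{thm:benchmark} and using $\alpha_1 + \alpha_2 = 1$ simplifies (after collecting terms and factoring $1 - 4\delta + 3\delta^2 = (1-\delta)(1-3\delta)$) to $\overline{\tau}_2(\delta) = \tfrac{1-3\delta}{2} \in (0, 1/2)$, hence $|\overline{\tau}_2(\delta) - 1/2| = \tfrac{3\delta}{2}$. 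On the advisor's side, because $\hat{\tau}_2(\delta) > 1/2$ the absolute value drops out, and a one-line simplification of \eqref{eq:equality_restoring_tau2} yields $|\hat{\tau}_2(\delta) - 1/2| = \tfrac{\sqrt{1+3\delta^2} - 1}{2\delta}$.

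It then remains to verify the inequality $|\overline{\tau}_2(\delta) - 1/2| > |\hat{\tau}_2(\delta) - 1/2|$ in each regime, and in both cases the comparison reduces to a statement that can be checked by squaring. For $\delta < 1/3$, multiplying $\tfrac{3\delta}{2} > \tfrac{\sqrt{1+3\delta^2}-1}{2\delta}$ by $2\delta > 0$ rearranges it to $1 + 3\delta^2 > \sqrt{1+3\delta^2}$, which holds since $x > \sqrt{x}$ for every $x > 1$ and here $x = 1 + 3\delta^2 > 1$. For $\delta \geq 1/3$, the inequality $\tfrac{1}{2} > \tfrac{\sqrt{1+3\delta^2}-1}{2\delta}$ rearranges to $1 + \delta > \sqrt{1+3\delta^2}$; squaring both (positive) sides gives the equivalent $2\delta > 2\delta^2$, i.e. $\delta < 1$, which holds throughout $(0,1)$.

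The main obstacle is not the final inequality, which is routine once the expressions are in hand, but rather the correct application of Theorem~\ref{thm:benchmark} to the unchecked politician: one must identify the two budget regimes, locate the switch at $\delta = 1/3$, and carry out the algebraic simplification of $\overline{\tau}_2(\delta)$ in the interior-solution regime cleanly enough to obtain the transparent form $\tfrac{1-3\delta}{2}$. After that, the proof is just the two squaring arguments above, so I would present the computation of $\overline{\tau}_2(\delta)$ in full and keep the concluding inequalities terse.
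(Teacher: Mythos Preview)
Your proposal is correct and follows essentially the same approach as the paper: both compute $\overline{\tau}_2(\delta)$ via Theorem~\ref{thm:benchmark}, split into the regimes $\delta < 1/3$ and $\delta \geq 1/3$, and reduce the $\delta < 1/3$ case to $1+3\delta^2 > \sqrt{1+3\delta^2}$. The only minor difference is that for $\delta \geq 1/3$ the paper simply invokes $\hat{\tau}_2(\delta) \in (1/2,1)$ from Lemma~\ref{lem:discrimination_restore_equality} to conclude $|\hat{\tau}_2(\delta)-1/2| < 1/2$, whereas you carry out an explicit squaring argument; both are fine.
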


Let
\begin{align*} 
	\omega(\tau_2,\delta) &:= V^1((1-\tau_2,\tau_2),\delta) + V^2((1-\tau_2,\tau_2),\delta) \\
	&= -3 - \delta^2 + \frac{1/2(1+\delta)^2 + (1+\delta)(1-\tau_2)}{2-\tau_2} + \frac{1/2(1-\delta)^2 + (1-\delta)\tau_2}{1+\tau_2}
\end{align*}
denote the sum of the expected utilities of the two groups as a function of test allocation $\tau = (\tau_1,\tau_2)$ under the constraint that the budget is fully used, $\tau_1=1-\tau_2$, and where $V^k(\tau,\delta)$ is given by equation \eqref{eq:discrimination_group_exp_utility}. Since the budget is exhausted in equilibrium (as shown in the proof of Lemma~\ref{prop:discrimination_equalizing_bias}), welfare is thus given by $W(p,\delta) = \omega (\tau_2^*(p,\delta),\delta)$.

Observe the following symmetry property: $\omega(\tau_2,\delta)=\omega(1-\tau_2,\delta)$. Moreover, for any $\delta\in (0,1)$, the function $\omega(\tau_2,\delta)$ is strictly concave and maximized at $\tau_2=1/2$, when the test budget is split equally between the two groups. Therefore, the function $\omega(\tau_2,\delta)$ is strictly decreasing in $|\tau_2-1/2|$. Lemma~\ref{lemma:discr_tests_unchecked_equalizing} then implies that welfare with an advisor $\hat{p}(\delta)$ is strictly higher than welfare under unchecked discrimination.
\qed

	\bibliographystyle{abbrvnat}
	\bibliography{testalloc}

	\newpage
    
\setcounter{section}{1}
\renewcommand{\thesection}{\Alph{section}}
\renewcommand{\thesubsection}{\Alph{section}.\arabic{subsection}}
\setcounter{equation}{0}
\numberwithin{equation}{subsection}
\setcounter{proposition}{0}
\numberwithin{proposition}{subsection}
\setcounter{corollary}{0}
\numberwithin{corollary}{subsection}
\setcounter{page}{1}
%

	{\centering \huge Online Appendix to\\ ``Strategic Attribute Learning''\\ \vspace{0.3cm}
    \Large Jean-Michel Benkert, Ludmila Matyskov\'{a}, Egor Starkov\\}
    \vspace{1cm}
	\section{Extensions} \label{sec:online_app_B}

	\subsection{Multiple researchers: media polarization} \label{sec:media}
	
	Our first extension explores a model of a media market, in which media polarization can emerge, but---in contrast to conventional wisdom---in a way that does not inherently reduce welfare. Specifically, we consider a voter with some priorities over policy issues and two media outlets with different priorities, which attempt to influence the voter's choice via their choice of coverage. We show that in a wide range of scenarios, the competition between media outlets polarizes their coverage in equilibrium. However, this polarization is advantageous for the voter, as it offers access to a broader and more diverse range of information. This diversity allows voter to make better-informed decisions compared to relying on a single, moderate outlet that is not fully aligned with the voter.
		
	We adopt Framework A from Section~\ref{sec:equivalent_models} and extend it by introducing two media outlets, $A$ and $B$, each aiming to influence a decision of voter $V$ that affects two policy issues $k=1,2$.\footnote{
		Our model features a single voter for simplicity. One can think of her as the median voter in a more general model with multiple voters who have different weights on the policy issues. Since the decision is one-dimensional and voters' preferences are single-peaked, standard logic dictates that the median voter (one with median weights $\alpha^i$) is pivotal and would thus be the one targeted by the media outlets aiming to influence the vote outcome.} 
	For $i=A,B,V$, player $i$'s utility is
	\begin{eqnarray} \label{eq:media_utility_defn}
		u^i(d,\theta) = - \sum_{k=1}^{2} \alpha^i_k \left( d - \theta_k \right)^2,
	\end{eqnarray}
	where $d \in \mathbb{R}$ is the voter's decision (the policy platform of the elected candidate), $\alpha^i_k>0$ are player $i$'s weights on respective policy issues $k=1,2$ with $\alpha^i_1+\alpha^i_2=1$, and $\theta_k$ is the optimal decision for issue $k$ (i.e., the platform of the candidate offering the best policy on issue $k$ given state of the world $\theta$).\footnote{In the model, all players agree, conditional on $\theta$, which candidate would enact an optimal policy on a given issue $k$. Tension arises because the single elected candidate will choose policies relevant to many issues, and preferences of the voter and the media outlets differ in terms of weights they assign to different issues.} Assume w.l.o.g. that media outlet $A$ assigns a higher weight to issue $k=1$ than outlet $B$: $\alpha^{A}_1 > \alpha^B_1$.
		
	Each media outlet $m=A,B$ chooses coverage ${q}^m=({q}^m_1,{q}^m_2) \in \mathbb{R}^2_+$ such that ${q}^m_1+{q}^m_2 = 1$, where ${q}^m_k$ describes the fraction of the coverage of media outlet $m$ devoted to issue $k$.\footnote{
		In our setting, the media cannot lie, withhold facts, distort information, or provide fake news. This assumption reflects the recent crackdown on fake news by companies and authorities \citep[see, e.g.,][]{facebook_misinformation,un_countering_disinformation}. Therefore, the strategic choice of which topics to cover remains one of the few options available to media editorial rooms.}
	The voter has a total budget of attention $T>0$ and chooses how to allocate it between the two outlets: $t = ({t}^A,{t}^B) \in \mathbb{R}^2_+$ such that ${t}^A + {t}^B \leq T$. 
	Given ${q}^m$ and ${t}^m$ for $m 
    =A,B$, for each issue $k$ the voter observes a realization of signal
	\begin{align*}
		\tilde{s}^m_k({q}^m_k,{t}^m) = \tilde{\theta}_k + \tilde{\varepsilon}^m_k,
		\text{ where }
		\tilde{\varepsilon}^m_k \sim \mathcal{N} \left(0,\frac{1}{{q}^m_k{t}^m} \right),
	\end{align*}
	and $\tilde{\varepsilon}^m_k$ is independent of all other random variables. Let $s=(s^A_1,s^A_2,s^B_1,s^B_2)$ denote the collection of signal realizations from both outlets on both issues, and $\tilde{s}$ its respective distribution.
	
	To curb the equilibrium multiplicity (though it cannot be avoided entirely, see the discussion below), assume that the voter chooses her attention allocation $t$ first. Media outlets then observe ${t}$ and simultaneously choose their coverage ${q}^m$. The rest of the game proceeds as in the baseline model: nature draws state realization $\theta$ and signal realizations $s$; the voter observes ${q}^m$ and $s$, updates her beliefs and takes decision $d$; afterwards, payoffs are realized. The equilibrium concept is analogous to that in the baseline model, and is formally stated in Section \ref
 {sec:media_proofs} below.\footnote{
		It is straightforward to show that an equilibrium of such a sequential-move game would be outcome-equivalent to an equilibrium of a simultaneous-move game, where the media choose $q$ contemporaneously with the voter's choice of ${t}$. However, the latter game may also admit other equilibria.} 

    We say that \emph{the voter achieves her best information in equilibrium} when the attention and coverages chosen in equilibrium jointly maximize the voter's ex ante expected payoff, given the voter's equilibrium decision strategy. We can state our main result of this section as follows.\footnote{The proofs of this extension can be found below.}
    
	\begin{proposition} \label{prop:media}
		If $\alpha^{A}_1 > \alpha^{V}_1 > \alpha^B_1$ and $T$ is large enough, then in the unique equilibrium the media outlets are polarized, ${q}^A=(1,0)$ and ${q}^B=(0,1)$, and the voter achieves her best information in equilibrium.
	\end{proposition}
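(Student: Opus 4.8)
The plan is to collapse the three‑player game to a one‑dimensional ``tug‑of‑war'' and then apply Theorems~\ref{thm:benchmark} and~\ref{thm:strategic_solution}. First, since $q^m_1+q^m_2=1$, the voter effectively observes for each issue $k$ a signal of precision $\tau_k:=q^A_k t^A+q^B_k t^B$, and her decision and every player's ex ante payoff depend on coverage and attention only through the combined allocation $\tau=(\tau_1,\tau_2)$, which satisfies $\tau_1+\tau_2=t^A+t^B\le T$. A short computation (equivalently, Proposition~\ref{prop:equivalent_models} together with the proof of Theorem~\ref{thm:strategic_solution}) shows that, as a function of $\tau$, the voter's payoff equals $-\hat\sigma^{2,V}(\tau)$ up to a constant---so her problem is the single‑player problem with weights $\alpha^V$---while outlet $m$'s payoff equals $-\sum_k\lambda^m_k\hat\Sigma_k(\tau_k)$ up to a constant, with $\lambda^m_k:=\alpha^V_k(2\alpha^m_k-\alpha^V_k)$.

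Second, I would record the geometry of these objectives along a budget line $\tau_1+\tau_2=s$ (with $s\le T$ the attainable total precision). The voter's payoff is strictly concave and strictly increasing in each $\tau_k$, hence for $T$ large it is uniquely maximized over $\mathcal{T}$ at the interior point $\tau^V:=\tau^{\ast}(\alpha^V)$ of Theorem~\ref{thm:benchmark}, which lies on $\tau_1+\tau_2=T$ with both coordinates strictly positive. Each outlet's payoff restricted to $\{\tau_1+\tau_2=s\}$ is strictly single‑peaked in $\tau_1$, with peak $\tau^m_1(s)$ equal to the single‑player allocation with auxiliary weights $\hat\alpha^m_k=\sqrt{\max\{\lambda^m_k,0\}}$ (Theorem~\ref{thm:strategic_solution}), or to the boundary $s$ (resp.\ $0$) when one auxiliary weight vanishes. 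The hypothesis $\alpha^A_1>\alpha^V_1>\alpha^B_1$ translates into $\hat\alpha^A_1/\hat\alpha^A_2>\alpha^V_1/\alpha^V_2>\hat\alpha^B_1/\hat\alpha^B_2$ (with the obvious conventions when some $\lambda^m_k\le0$), and by monotonicity of the single‑player solution in the weight ratio this yields $\tau^B_1(s)<\tau^V_1(s)<\tau^A_1(s)$ for $s$ large; in particular $\tau^V_1(T)\in\bigl(\tau^B_1(T),\tau^A_1(T)\bigr)$.

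Third---this is where the work lies---I would solve the media subgame for a fixed attention profile $t$. Outlet $A$ controls $x:=q^A_1$ and moves $\tau_1=x t^A+y t^B$ affinely upward toward its peak $\tau^A_1(s)$; outlet $B$ controls $y$ and pushes $\tau_1$ downward toward $\tau^B_1(s)$. A best‑response/clamping argument then shows that every equilibrium of the subgame has combined $\tau_1$ equal to the median of $\{\tau^B_1(s),\,t^A,\,\tau^A_1(s)\}$: if $t^A\in(\tau^B_1(s),\tau^A_1(s))$ then $A$ plays $x=1$, $B$ plays $y=0$, and $\tau_1=t^A$; if $t^A$ lies outside this interval the binding outlet absorbs the slack and $\tau_1$ equals $\tau^A_1(s)$ or $\tau^B_1(s)$. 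One checks the interior and boundary cases of each best response using single‑peakedness; the degenerate cases $\lambda^A_2\le0$ and $\lambda^B_1\le0$ are easier, since the corresponding outlet then has a dominant strategy. In particular, at $t^{\ast}:=(\tau^V_1(T),\tau^V_2(T))$ the subgame has the \emph{unique} equilibrium $q^A=(1,0)$, $q^B=(0,1)$, with combined allocation $\tau^V$.

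Finally, I would close the backward induction. Every attention profile induces a combined allocation in $\mathcal{T}$, and the voter's payoff is uniquely maximized over $\mathcal{T}$ at $\tau^V$, so no choice of $t$ yields the voter more than $V^V(\tau^V)$; by the clamping characterization this bound is attained only when $t^A+t^B=T$ and $t^A=\tau^V_1(T)$, i.e.\ only at $t^{\ast}$. Hence in every equilibrium the voter plays $t^{\ast}$ (otherwise deviating to $t^{\ast}$ is strictly profitable, as the subgame there uniquely delivers $V^V(\tau^V)$), and the media then play their unique subgame equilibrium $\bigl((1,0),(0,1)\bigr)$, so the outlets are polarized and the voter achieves her best information. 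The role of ``$T$ large enough'' is precisely to make $\tau^V$ interior and to guarantee the strict ordering $\tau^B_1(T)<\tau^V_1(T)<\tau^A_1(T)$. The main obstacle is the third step: establishing that the tug‑of‑war has a unique equilibrium of the stated clamping form, carefully handling the boundary cases of the two outlets' best responses and the various sign patterns of $\lambda^A$ and $\lambda^B$.
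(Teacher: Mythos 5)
Your proposal is correct and follows essentially the same route as the paper's proof: reduce everything to the aggregate allocation $\tau_k = q^A_k t^A + q^B_k t^B$, characterize each player's preferred allocation via the auxiliary weights of Theorem~\ref{thm:strategic_solution}, resolve the coverage subgame as a tug-of-war pinning $\tau_1$ between $\tau^{B\ast}_1$ and $\tau^{A\ast}_1$, and then let the voter select $t^\ast$ to implement her single-player optimum. Your explicit ``median/clamping'' statement of the subgame outcome is just a tidier packaging of the paper's deviation argument (an outlet away from its optimum must be polarized or ignored), so the two proofs coincide in substance.
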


	The proposition states that if media outlets are slanted in the opposite directions relative to the voter (and the voter's attention budget is large enough), even small differences in preferences between the voter and the media cause each outlet to focus exclusively on its more favoured issue (as compared to other players), neglecting the other. This occurs even if both media outlets care about both issues (i.e., $\alpha^m_k \in (0,1)$ for all $m,k$). The voter’s attention allocation and the coverage of each media jointly determine how much  information the voter receives on each policy issue. Once the voter chooses her attention, the media outlets engage in a tug-of-war, each seeking to increase the amount of information the voter receives on the outlet's preferred issue. This competition ultimately causes each outlet to cover exclusively a single issue. This polarization and the resulting diverse coverage allow the voter to achieve her best information in equilibrium.
 
	To set the benchmark for the result above, suppose there was only one media outlet. Then, we would be back in the setting of Section~\ref{sec:strategic_players}. In such a setting, the media outlet can shape the total amount of information the voter receives on both issues according to its own preferences and away from the voter's best information. Thus, the voter always weakly prefers a polarized media duopoly over a media monopoly, however moderate the latter may be. We capture this observation in the following statement, which follows from Theorem~\ref{thm:strategic_solution} and Proposition~\ref{prop:media}.\footnote{
		Both Proposition~\ref{prop:media} and Corollary~\ref{cor:media_dislike_monopoly} require that $\alpha^{A}_1 > \alpha^{V}_1 > \alpha^B_1$ and $T$ large enough. In Section \ref{sec:media_proofs}, we generalize the results to all $\alpha$ and $T$. While equilibrium multiplicity becomes an issue, we show that all equilibria are payoff-equivalent. In a representative equilibrium, at least one media outlet focuses exclusively on its relatively more preferred issue (while the other outlet either does the same or is ignored by the voter). Corollary~\ref{cor:media_dislike_monopoly} also continues to hold in general, though with weak preference instead of strict.}
	
	\begin{corollary} \label{cor:media_dislike_monopoly}
		If $\alpha^{A}_1 > \alpha^{V}_1 > \alpha^B_1$ and $T$ is large enough, then the voter strictly prefers the equilibrium when both media outlets $m =A,B$ are available to the equilibrium when only one outlet $m =A,B$ is present in the market.
	\end{corollary}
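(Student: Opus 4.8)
The plan is to combine Proposition~\ref{prop:media}, which pins down the voter's payoff in the duopoly, with Theorem~\ref{thm:strategic_solution}, which pins it down under a single outlet, and then to show the two are strictly ordered. Throughout, I would write $\tau=(\tau_1,\tau_2)$ for the total precision the voter ends up receiving on the two issues --- in the duopoly $\tau_k=q^A_k t^A+q^B_k t^B$, and under outlet $m$ alone $\tau_k=q^m_k t^m$ --- and let $\Pi^V(\tau)$ be the voter's ex ante expected payoff under her equilibrium (conditional-mean) decision rule. A computation in the spirit of the proof of Lemma~\ref{lemma:single_player_maximization}, adapted to Framework~A using $\alpha^V_1+\alpha^V_2=1$, gives $\Pi^V(\tau)=-\hat{\sigma}^{2,V}(\tau)+\text{const}=-\sum_k(\alpha^V_k)^2\big(1/\Sigma^0_k+\tau_k\big)^{-1}+\text{const}$, which is strictly concave in $\tau$; hence it has a \emph{unique} maximizer $\tau^V$ over $\mathcal{T}$, which by Theorem~\ref{thm:benchmark} is the single-player allocation for weights $\alpha^V$, and which is interior ($\tau^V_1,\tau^V_2>0$) whenever $T$ is large enough.

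For the duopoly I would simply invoke Proposition~\ref{prop:media}: under the stated hypotheses the unique equilibrium has polarized coverage $q^A=(1,0)$, $q^B=(0,1)$, and the voter attains her best information, so she obtains $\max_{\tau\in\mathcal{T}}\Pi^V(\tau)=\Pi^V(\tau^V)$.

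For a single outlet $m\in\{A,B\}$, I would first argue that the game reduces to the strategic-player setting of Section~\ref{sec:strategic_players}. Since $\Pi^V$ is strictly increasing in the precision on each issue and the outlet's coverage $q^m$ lies in the simplex, the voter optimally devotes her entire attention budget $T$ to the outlet; facing the voter's conditional-mean decision rule, the outlet then chooses $q^m$ (equivalently, the allocation $\tau$) exactly as a strategic researcher with budget $T$ and weights $\alpha^m$ would, against a decision-maker with weights $\alpha^V$. By Theorem~\ref{thm:strategic_solution}, the induced equilibrium allocation $\tau^m$ is the single-player allocation for auxiliary weights $\hat{\alpha}^m_k=\sqrt{\max\{\lambda^m_k,0\}}$, where $\lambda^m_k=2\alpha^m_k\alpha^V_k-(\alpha^V_k)^2=\alpha^V_k(2\alpha^m_k-\alpha^V_k)$.

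The crux is then to show $\tau^m\neq\tau^V$: because $\tau^V$ is the \emph{unique} maximizer of $\Pi^V$, this gives $\Pi^V(\tau^m)<\Pi^V(\tau^V)$, which is exactly the claimed strict preference, for each $m=A,B$. For $T$ large enough $\tau^V$ is interior, while $\tau^m$ is either supported on a single issue --- hence $\neq\tau^V$ --- or itself interior; in the latter case, since the single-player allocation of Theorem~\ref{thm:benchmark} depends on the weights only through their ratio, $\tau^m=\tau^V$ would require $\hat{\alpha}^m_1/\hat{\alpha}^m_2=\alpha^V_1/\alpha^V_2$, and squaring and substituting $\lambda^m_k=\alpha^V_k(2\alpha^m_k-\alpha^V_k)$ reduces this to $\alpha^m_1/\alpha^m_2=\alpha^V_1/\alpha^V_2$, which with $\alpha^m_1+\alpha^m_2=\alpha^V_1+\alpha^V_2=1$ forces $\alpha^m=\alpha^V$, contradicting $\alpha^A_1>\alpha^V_1>\alpha^B_1$. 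The step I expect to be the main obstacle is the monopoly reduction in the third paragraph: one must verify carefully that concentrating all attention on the lone outlet is optimal for the voter and that the outlet's simplex-constrained coverage problem is scale-free, so that it coincides with the strategic-researcher problem of Section~\ref{sec:strategic_players}. The ratio computation in the final step is routine algebra, with the only delicate point being the bookkeeping of the corner case $\lambda^m_k\le0$.
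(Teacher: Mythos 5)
Your proposal is correct and follows essentially the same route as the paper's proof: duopoly delivers the voter's best information ${\tau^V}^*(T)$ by Proposition~\ref{prop:media}; monopoly reduces to the strategic model of Section~\ref{sec:strategic_players} and delivers the outlet's best information ${\tau^m}^*(T)$; and since the voter's value is strictly concave with a unique maximizer, it suffices that ${\tau^m}^*(T) \neq {\tau^V}^*(T)$. The one place you genuinely diverge is the last step: the paper gets ${\tau^m}^*(T)\neq{\tau^V}^*(T)$ from the strict inclusion ${\tau^V_1}^*(T)\in\left({\tau^B_1}^*(T),{\tau^A_1}^*(T)\right)$ for $T>\hat{T}_V$, already established in the proof of Proposition~\ref{prop:media}, whereas you argue directly that equality of interior allocations would force $\hat{\alpha}^m_1/\hat{\alpha}^m_2=\alpha^V_1/\alpha^V_2$ and hence $\alpha^m=\alpha^V$ under the normalization $\alpha^m_1+\alpha^m_2=\alpha^V_1+\alpha^V_2=1$; both rest on the same injectivity of the single-player allocation in the weight ratio, so this is a cosmetic rather than substantive difference. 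The loose end you correctly flag --- that the voter gives the lone outlet her full attention --- is not closed by your stated reason alone: $\Pi^V$ being increasing in precision does not rule out that raising $t^m$ shifts the outlet's coverage adversely on the issue the voter cares about more. What closes it (and what the paper uses) is that both components of ${\tau^m}^*(T_V)$ are weakly increasing in $T_V$, which follows from $\frac{d}{dT_V}{\tau^m_k}^*(T_V)\in[0,1]$ in \eqref{eq:strat_taustar_K2}; with that observation added, your argument is complete.
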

	
	Finally, we note that the results in this section rely on the assumption that the media outlets are directly interested in the voter's final decision. This is applicable in scenarios where editors have strong ideological stances, or policy-interested outlet owners can affect editorial policy (e.g., Fox News, Russia Today). We can also think of settings other than media markets in which similar outcomes could also arise, such as think tanks or advisors influencing policymakers.
	In contrast, independent media outlets may instead prioritize engagement and aim to attract attention from consumers in order to boost ad revenue or subscription fees. It is quite intuitive that if attention-seeking is the primary motivation for the media outlets, then they would give the voter the information she wants to see.\footnote{The classic Hotelling model of spatial competition suggests that both media outlets would likely converge on the median voter's ``position'' (offer the median voter's preferred information mix) even if there are many voters with heterogeneous preferences.} We would thus observe less media polarization with attention-seeking than with partisan media, but Proposition~\ref{prop:media} suggests that this difference is likely superficial and irrelevant for the actual voting decisions.

	\subsubsection{Proofs for the extension on media polarization} \label{sec:media_proofs}
	We begin by formally stating the equilibrium concept for the media model alongside two supplementary results. Lemma~\ref{lemma:media_tau_belief} then establishes that given $(t,q^A,q^B)$, the posterior distribution of $\tilde{\theta}|s$ can be related to that in the baseline model by considering \emph{aggregate attention allocation} $\tau$ given by $\tau_k := {t}^A {q}^A_k + {t}^B {q}^B_k$ for each $k$.
	Further, Lemma~\ref{lemma:media_tau_payoff} shows that the ex ante payoffs can be expressed as functions of $\tau$, and so the voter's attention allocation problem and the media outlets' coverage choice problems can be seen as a joint choice of $\tau$, with every player having some power over it. Then, we provide a general equilibrium characterization, with Proposition~\ref{prop:media2} covering the cases excluded in Proposition~\ref{prop:media}, and Corollary~\ref{cor:media_dislike_monopoly2} being a more general, albeit slightly weaker analog to Corollary~\ref{cor:media_dislike_monopoly}. We provide a unified proof for Propositions \ref{prop:media} and \ref{prop:media2} before proving Corollaries \ref{cor:media_dislike_monopoly} and \ref{cor:media_dislike_monopoly2}.
 
	Let $\mathcal{R} := \left\{ t \in \mathbb{R}^2_+ \mid t_1 + t_2 \leq T \right\}$ and $\mathcal{Q} := \left\{ q \in \mathbb{R}^2_+ \mid q_1 + q_2 \leq 1 \right\}$ be the domains of the players' strategies.

	\paragraph*{Equilibrium concept.} A weak Perfect Bayesian Equilibrium of the game introduced in Section \ref{sec:media} (referred to in the relevant proofs as ``equilibrium'') is a tuple
	\begin{align*}
		\left( {t}^*,\, {{q}^A}^*({t}),\, {{q}^B}^*({t}),\, d^* \left( s,  {t}, {q}^A, {q}^B \right),\, \tilde{\theta} \left( s,  {t}, {q}^A, {q}^B \right) \right)
	\end{align*}
	such that:
	\begin{enumerate}
		\item the coverage choice ${{q}^m}^*(t): \mathcal{R} \to \mathcal{Q}$ of each outlet $m =A,B$ maximizes its value \eqref{eq:media_value} given equilibrium strategies and belief updating rule $\left( {{q}^{j}}^*({t}),\, d^* \left( s,  {t}, {q}^A, {q}^B \right),\, \tilde{\theta} \left( s,  {t}, {q}^A, {q}^B \right) \right)$ for $j \neq m$ and realized attention choice ${t}$;
		
		\item the voter's attention allocation ${t}^* \in \mathcal{R}$ and decision strategy $d^* \left( s,  {t}, {q}^A, {q}^B \right): \mathbb{R}^2 \times \mathcal{R} \times \mathcal{Q}^2 \to \mathbb{R}$ maximize her value \eqref{eq:media_value} given equilibrium strategies and belief updating rule $\left( {{q}^A}^*({t}),\, {{q}^B}^*({t}),\, \tilde{\theta} \left( s,  {t}, {q}^A, {q}^B \right) \right)$; 
		
		\item the voter's posterior belief $\tilde{\theta} \left( s,  {t}, {q}^A, {q}^B \right): \mathbb{R}^2 \times \mathcal{R} \times \mathcal{Q}^2 \to \varDelta(\mathbb{R}^2)$ about $\theta$ is obtained via Bayes' rule given signal realizations $s$ and realized attention allocation and coverage choices $\left( {t}, {{q}^A}, {{q}^B} \right)$.
	\end{enumerate}
	Next, we show that the induced distribution of posterior beliefs in this extension coincides with that in the main model.
	\begin{lemma} \label{lemma:media_tau_belief}
		In the media model, the distribution of the voter's posterior belief $\tilde{\theta} \left( \tilde{s},{t},{q}^A, {q}^B \right)$ is the same as the one induced by \eqref{eq:signal_defn} and \eqref{eq:posterior_attributes}--\eqref{eq:posterior_var_attributes}, with $\tau = \tau \left( {t}, {q}^A, {q}^B \right)$ being the aggregate attention allocation defined as $\tau = (\tau_1,\tau_2)$ with $\tau_k := {t}^A {q}^A_k + {t}^B {q}^B_k$ for each $k$.
	\end{lemma}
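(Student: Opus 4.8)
The plan is to exploit the independence structure together with Gaussian conjugacy, reducing everything to a per-issue sufficient-statistic argument. First I would observe that, since the attributes $\tilde{\theta}_1,\tilde{\theta}_2$ are independent and all the noise terms $\tilde{\varepsilon}^m_k$ are independent of each other and of the state, the voter's posterior about $\theta$ factors across issues: $\tilde{\theta}(s,{t},{q}^A,{q}^B)$ is the product of the marginal posteriors about each $\tilde{\theta}_k$, and the marginal about $\tilde{\theta}_k$ depends only on the pair $(s^A_k,s^B_k)$ and on the effective precisions $(t^A q^A_k,\, t^B q^B_k)$. Hence it suffices to show that, for each $k$, the distribution-valued random variable describing the marginal posterior about $\tilde{\theta}_k$ has the same law as the one induced by a single signal $\tilde{s}_k=\tilde{\theta}_k+\tilde{\varepsilon}_k$ with $\tilde{\varepsilon}_k\sim\mathcal{N}(0,1/\tau_k)$, where $\tau_k:=t^A q^A_k + t^B q^B_k$.

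Second, fix $k$. Conditional on $\theta_k$, the signals $s^A_k$ and $s^B_k$ are independent Gaussians with precisions $t^A q^A_k$ and $t^B q^B_k$. Standard Gaussian updating then yields that the posterior about $\tilde{\theta}_k$ is normal with precision $1/\Sigma^0_k + t^A q^A_k + t^B q^B_k = 1/\Sigma^0_k + \tau_k$, i.e.\ with variance $\hat{\Sigma}_k(\tau_k)$ as in \eqref{eq:posterior_var_attributes}, and with mean equal to the precision-weighted average of $\mu^0_k$, $s^A_k$ and $s^B_k$. Defining the aggregate realization $s_k := (t^A q^A_k s^A_k + t^B q^B_k s^B_k)/\tau_k$ when $\tau_k>0$ (and taking the posterior equal to the prior when $\tau_k=0$, consistent with the baseline convention that a zero-precision signal is uninformative), a direct rearrangement shows the posterior mean equals $\hat{\mu}_k(s_k,\tau_k)$ from \eqref{eq:posterior_mean_attributes}. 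Thus the posterior about $\tilde{\theta}_k$ is exactly the Bayesian update from observing the single value $s_k$ at precision $\tau_k$.

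Third, I would verify that the ex ante law of the aggregate signal $\tilde{s}_k$ matches that of \eqref{eq:signal_defn}. Writing $\tilde{s}_k = \tilde{\theta}_k + \tilde{\varepsilon}_k$ with $\tilde{\varepsilon}_k := (t^A q^A_k \tilde{\varepsilon}^A_k + t^B q^B_k \tilde{\varepsilon}^B_k)/\tau_k$, the term $\tilde{\varepsilon}_k$ is a linear combination of independent mean-zero Gaussians, hence Gaussian with mean zero and variance $\big((t^A q^A_k)^2/(t^A q^A_k) + (t^B q^B_k)^2/(t^B q^B_k)\big)/\tau_k^2 = \tau_k/\tau_k^2 = 1/\tau_k$, and it is independent of $\tilde{\theta}_k$ (and of the other attribute) because each $\tilde{\varepsilon}^m_k$ is. So $\tilde{s}_k$ has precisely the distribution of the baseline signal at precision $\tau_k$; combining with the cross-issue independence invoked in the first step, the joint distribution of posterior beliefs coincides with that of the baseline model under aggregate allocation $\tau=\tau({t},{q}^A,{q}^B)$, which is the claim.

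As for difficulty, there is no genuine obstacle: the result is essentially a bookkeeping exercise in Gaussian conjugacy. The only points that require a little care are (i) phrasing the equivalence correctly in terms of the \emph{distribution} of the posterior (a distribution-valued random variable), and (ii) the degenerate cases $t^A q^A_k = 0$, $t^B q^B_k = 0$, or $\tau_k = 0$, which are handled by explicitly invoking the convention that a zero-precision signal is uninformative; otherwise the algebra runs through directly.
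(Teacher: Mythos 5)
Your proposal is correct and follows essentially the same route as the paper's proof: factor across issues by independence, apply Gaussian conjugacy so that the posterior precision is $1/\Sigma^0_k+\tau_k$, and show that the precision-weighted aggregate signal $s_k=(t^Aq^A_ks^A_k+t^Bq^B_ks^B_k)/\tau_k$ is a sufficient statistic whose ex ante law matches the baseline signal at precision $\tau_k$. The only cosmetic difference is that the paper updates on the two signals sequentially rather than simultaneously, and your explicit treatment of the degenerate case $\tau_k=0$ is a small welcome addition.
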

	
	\begin{proof}
		Throughout the proof, we fix some attention allocation and coverages $({t}, {q}^A, {q}^B)$ and the corresponding aggregate attention allocation $\tau_k = {t}^A {q}^A_k + {t}^B {q}^B_k$, $k=1,2$. For the sake of brevity, we will omit $({t}, {q}^A, {q}^B)$ and $\tau$ from the set of conditioning variables. 
		
		The collection $({t}, {q}^A, {q}^B)$ induces a distribution of the signal vector $\tilde{s}$. In turn, each signal realization is associated with a posterior belief $\tilde{\theta}(s)$. Ex ante, therefore, the collection $(t, q^A,q^B)$ induces a distribution over posterior beliefs, $\tilde{\theta}(\tilde{s}) \in \varDelta \left( \varDelta (\mathbb{R}^2) \right)$.
		Our goal is to establish that, given the aggregate attention allocation $\tau$ associated with the collection $(t,q^A,q^B)$, the induced distribution coincides with the one described in Section \ref{sec:model} induced by the same $\tau$, where \eqref{eq:posterior_attributes}--\eqref{eq:posterior_var_attributes} describe the posterior distribution conditional on $s$, and \eqref{eq:signal_defn} implies that $\tilde{s}_k \sim \mathcal{N} \left( \mu^0_k, \Sigma^0_k + \frac{1}{\tau_k} \right)$ for all $k$.
		
		Since $\tilde{\theta}_1 \perp \tilde{\theta}_2$ and $\tilde{\theta}_k \perp \tilde{\varepsilon}_j^m$ for all $j,k =1,2$ and $m =A,B$, the posterior belief about each attribute is independent of everything related to the other attribute. Hence, we fix some $k=1,2$ and work with the (marginal) distributions of $\tilde{\theta}_k$ and $\tilde{s}^m_k$, which is sufficient to prove the result.
		
		We first characterize $\tilde{\theta}_k(s)$, the voter's posterior belief about $\theta_k$ given some signal realizations $s^A_k, s^B_k$.
		It is immediate to verify directly from normal p.d.f.s that if $\tilde{X} \sim \mathcal{N}(\mu,\sigma^2_X)$, $\tilde{\varepsilon} \sim \mathcal{N}(0,\sigma^2_\varepsilon)$ and $Y = X + \varepsilon$, then $\tilde{X}|Y \sim \mathcal{N} \left( \hat{\mu}, \hat{\sigma}^2 \right)$, where 
		\begin{align*}
			\hat{\mu} &= \frac{ \frac{1}{\sigma^2_X} }{ \frac{1}{\sigma^2_X} + \frac{1}{\sigma^2_\varepsilon} } \mu + \frac{ \frac{1}{\sigma^2_\varepsilon} }{ \frac{1}{\sigma^2_X} + \frac{1}{\sigma^2_\varepsilon} } Y, 
			&
			\hat{\sigma}^2 &= \frac{ 1 }{ \frac{1}{\sigma^2_X} + \frac{1}{\sigma^2_\varepsilon} }.
		\end{align*}
		This directly implies characterization \eqref{eq:posterior_attributes}--\eqref{eq:posterior_var_attributes} in Section \ref{sec:prelim}. In the context of this proof, this gives first that $\tilde{\theta}_k | s^A_k \sim \mathcal{N} (\hat{\mu}_{k,A}, \hat{\Sigma}_{k,A})$ with
		\begin{align*}
			\hat{\mu}_{k,A} &= \frac{ \frac{1}{\Sigma^0_k} }{ \frac{1}{\Sigma^0_k} + t^A q^A_k } \mu^0_k + \frac{ t^A q^A_k }{ \frac{1}{\Sigma^0_k} + t^A q^A_k } s^A_k, 
			&
			\hat{\Sigma}_{k,A} &= \frac{ 1 }{ \frac{1}{\Sigma^0_k} + t^A q^A_k },
		\end{align*}
		and then, subsequently, that $(\tilde{\theta}_k | s^A_k) | s^B_k \sim \mathcal{N} (\hat{\mu}_k, \hat{\Sigma}_k)$, where
		\begin{align}
			\hat{\mu}_{k} &= \frac{ \frac{1}{\hat{\Sigma}_{k,A}} }{ \frac{1}{\hat{\Sigma}_{k,A}} + t^B q^B_k } \hat{\mu}_{k,A} + \frac{ t^B q^B_k }{ \frac{1}{\hat{\Sigma}_{k,A}} + t^B q^B_k } s^B_k \notag
			= \frac{ \frac{1}{\Sigma^0_k} \mu_k^0 + t^A q^A_k s^A_k + t^B q^B_k s^B_k }{ \frac{1}{\Sigma^0_{k}} + t^A q^A_k + t^B q^B_k }\notag
			\\
			&= \frac{ \frac{1}{\Sigma^0_k} }{ \frac{1}{\Sigma^0_{k}} + \tau_k } \mu^0_k + \frac{1}{ \frac{1}{\Sigma^0_{k}} + \tau_k } \left( t^A q^A_k s^A_k + t^B q^B_k s^B_k \right),\label{eq:fict_rv}
			\\
			\hat{\Sigma}_{k} &= \frac{ 1 }{ \frac{1}{\hat{\Sigma}_{k,A}} + t^B q^B_k }
			= \frac{ 1 }{ \frac{1}{\Sigma^0_{k}} + t^A q^A_k + t^B q^B_k } 
			= \frac{ 1 }{ \frac{1}{\Sigma^0_{k}} + \tau_k }.\label{eq:sigma_hat_modified}
		\end{align}
		Comparing \eqref{eq:sigma_hat_modified} with its counterpart \eqref{eq:posterior_var_attributes} in the baseline model, we note that they coincide.
		Next, compare \eqref{eq:fict_rv} with its counterpart \eqref{eq:posterior_mean_attributes} in the baseline model. In order to conclude that the distribution of posteriors is the same as in the baseline model, we need to show that the random variable defined as
		\begin{align*}
			\tilde{s}^f_k &:= 
			\frac{1}{\tau_k} \left( t^A q^A_k \tilde{s}^A_k + t^B q^B_k \tilde{s}^B_k \right) 
				= \tilde{\theta}_k + \frac{1}{\tau_k} \left( t^A q^A_k \tilde{\varepsilon}^A_k + t^B q^B_k \tilde{\varepsilon}^B_k \right)
		\end{align*}
		has the same distribution as $\tilde{s}_k$ from the baseline model.
		Note that $\tilde{s}^f_k|\theta_k \sim \mathcal{N} \left( \theta_k,\frac{1}{\tau_k} \right)$ since $\mathbb{E}[\tilde{\varepsilon}^m_k]=0$ and $var\left( \frac{t^m q^m_k}{\tau_k} \tilde{\varepsilon}^m_k \right) = \frac{1}{\tau_k}$, and then $\tilde{s}^f_k \sim \mathcal{N} \left( \mu^0_k,\Sigma^0_k + \frac{1}{\tau_k} \right)$. For a given $\tau$, these coincide with the distributions of $\tilde{s}_k | \theta_k$ and $\tilde{s}_k$ from the baseline model, respectively.
		Therefore, the distribution of $\hat{\mu}_k$ above coincides with that of  \eqref{eq:posterior_attributes}, so the distribution of $\tilde{\theta}_k({s}^f_k)$ in the media model coincides with the distribution of $\tilde{\theta}_k(s_k)$ in the baseline model when ${s}^f_k = s_k$.
		Together, the two latter facts imply that the distribution of $\tilde{\theta}_k(\tilde{s}^f_k)$ in the media model coincides with the distribution of $\tilde{\theta}_k(\tilde{s}_k)$ in the baseline model.
	\end{proof}
	
	Our next lemma below show that all players' respective ex ante expected utilities only depend on $({t},{q}^A,{q}^B)$ through $\tau = \tau({t},{q}^A,{q}^B)$. Specifically, define player $i$'s value function as
	\begin{align} \label{eq:media_value}
		V^i(\tau) := \mathbb{E} \left[ - \sum_{k=1}^{2} \alpha^i_k \left( d^*(\tilde{s}^f,\tau) - \tilde{\theta}_k(\tilde{s}^f,\tau) \right)^2 \right]
	\end{align}
	for $i=A,B,V$, where $d^*(\tilde{s}^f,\tau)$ is the voter's equilibrium decision strategy given $\tau = \tau(t,q^A,q^B)$, and ${s}^f = ({s}^f_1, {s}^f_2)$ is a vector of fictitious ``aggregate'' signals introduced in the proof of Lemma \ref{lemma:media_tau_belief}.
	
	\begin{lemma} \label{lemma:media_tau_payoff}
		In the media model, given $({t},{q}^A,{q}^B)$, the expectations of all players' utilities \eqref{eq:media_utility_defn} are given by values \eqref{eq:media_value} and only depend on $\tau \left( {t},{q}^A,{q}^B \right)$: 
		\begin{align*}
			\mathbb{E} \left[ u^i \left( d^* \left( \tilde{s}, {t}, {q}^A, {q}^B \right),\, \tilde{\theta} \left( \tilde{s}, {t}, {q}^A, {q}^B \right) \right) \right]
			= V^i \left( \tau \left( {t},{q}^A,{q}^B \right) \right)
		\end{align*}
	\end{lemma}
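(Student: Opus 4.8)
The plan is to reduce everything to Lemma~\ref{lemma:media_tau_belief}. That lemma already tells us that, for a fixed $(t,q^A,q^B)$, the voter's posterior-belief distribution depends on $(t,q^A,q^B)$ only through the aggregate attention allocation $\tau=\tau(t,q^A,q^B)$ and coincides with the one in the baseline model under test allocation $\tau$. Since every player's payoff in \eqref{eq:media_utility_defn} is pinned down, in expectation, by the joint law of the state and the voter's decision, and the voter's decision is itself a deterministic function of her posterior, the claim should follow with little additional work.

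First I would record the voter's equilibrium decision. At the decision stage the voter maximizes $\mathbb{E}[-\sum_k\alpha^V_k(d-\tilde\theta_k)^2\mid s,t,q^A,q^B]$ over $d$; this objective is strictly concave in $d$, and since $\alpha^V_1+\alpha^V_2=1$ the first-order condition gives the unique maximizer $d^*=\sum_k\alpha^V_k\hat\mu_k$, where $\hat\mu_k$ is the posterior mean of $\tilde\theta_k$ (equivalently, this is \eqref{eq:optimal_decision} with weights $\alpha^V$, using the Framework-A reduction). By the computations in the proof of Lemma~\ref{lemma:media_tau_belief}, $\hat\mu_k=\hat\mu_k(s^f_k,\tau_k)$ as in \eqref{eq:posterior_mean_attributes}, with $\tau_k=t^Aq^A_k+t^Bq^B_k$ and $s^f_k=\frac{1}{\tau_k}(t^Aq^A_ks^A_k+t^Bq^B_ks^B_k)$ when $\tau_k>0$, and $\hat\mu_k=\mu^0_k$ when $\tau_k=0$ (in which case the voter receives no information about $\tilde\theta_k$, exactly as in the baseline). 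Hence the equilibrium decision, a priori a function of the full signal vector $s=(s^A_1,s^A_2,s^B_1,s^B_2)$ and of $(t,q^A,q^B)$, in fact factors through the aggregate signal $s^f=(s^f_1,s^f_2)$ and aggregate attention $\tau$: it equals the decision rule $d^*(s^f,\tau)=\sum_k\alpha^V_k\hat\mu_k(s^f_k,\tau_k)$ referenced in \eqref{eq:media_value}, irrespective of which player's payoff we are evaluating.

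Next I would write player $i$'s ex ante expected payoff as $\mathbb{E}[u^i(d^*(\tilde s^f,\tau),\tilde\theta)]=\mathbb{E}[-\sum_k\alpha^i_k(d^*(\tilde s^f,\tau)-\tilde\theta_k)^2]$, using the previous step. This is a fixed functional of the joint law of $\big((\tilde\theta_1,\tilde\theta_2),(\tilde s^f_1,\tilde s^f_2)\big)$. From the proof of Lemma~\ref{lemma:media_tau_belief}, $\tilde s^f_k\mid\theta_k\sim\mathcal N(\theta_k,1/\tau_k)$ and, across $k=1,2$, the pairs $(\tilde\theta_k,\tilde s^f_k)$ are mutually independent; hence this joint law depends on $(t,q^A,q^B)$ only through $\tau$, and it is exactly the joint law of the state and the signal in the baseline model \eqref{eq:signal_defn} under test allocation $\tau$. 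Consequently $\mathbb{E}[u^i(d^*,\tilde\theta)]$ depends on $(t,q^A,q^B)$ only through $\tau$, which both justifies writing $V^i(\tau)$ for this quantity and establishes the stated identity. Since the realized state, conditional on the signals, is distributed according to the voter's posterior belief, this expectation coincides with the form \eqref{eq:media_value} in which $\tilde\theta_k$ is replaced by a posterior draw $\tilde\theta_k(\tilde s^f,\tau)$, so no reconciliation of the two notations is needed. Applying the argument to $i=A,B,V$ yields the lemma.

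The only real (and modest) obstacle is bookkeeping in the second step: one must make sure that what is invariant is the full joint distribution of state and aggregate signal — not merely the marginal distribution of posterior beliefs literally stated in Lemma~\ref{lemma:media_tau_belief} — and that the voter's four-dimensional observation genuinely collapses to the two-dimensional sufficient statistic $s^f$. Both of these are already contained in the algebra inside the proof of Lemma~\ref{lemma:media_tau_belief}, so no new computation is required; it simply needs to be invoked carefully.
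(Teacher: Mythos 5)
Your proposal is correct and follows essentially the same route as the paper's own proof: both reduce the claim to Lemma~\ref{lemma:media_tau_belief}, observe that the voter's equilibrium decision factors through the aggregate signal $s^f$ and aggregate attention $\tau$, and conclude that each player's expected utility equals $V^i(\tau)$. If anything, you are slightly more careful than the paper in flagging that what must be invariant is the joint law of $(\tilde\theta,\tilde s^f)$ rather than merely the marginal distribution of posteriors, a point the paper's proof passes over by directly asserting $d^*(\tilde s,t,q^A,q^B)\sim d^*(\tilde s^f,\tau)$ jointly with the posteriors.
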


	\begin{proof}
		Rewriting the expected utility (the left-hand side of the equality in the lemma) using definition \eqref{eq:media_utility_defn}, we get
		\begin{align} \label{eq:media_util_expanded1}
			\mathbb{E} \left[ - \sum_{k=1}^{2} \alpha^i_k \left( d^* \left( \tilde{s}, {t}, {q}^A, {q}^B \right) - \tilde{\theta}_k \left( \tilde{s}, {t}, {q}^A, {q}^B \right) \right)^2 \right].
		\end{align}
		The voter's equilibrium decision for any signal realization $s$ is given by
		\begin{align*}
			d^* \left( s, {t}, {q}^A, {q}^B \right) 
			&= \sum_{k=1}^2 \alpha^V_k \mathbb{E} \left[ \tilde{\theta}_k \left(s, {t}, {q}^A, {q}^B \right) \right].
		\end{align*} 
		By Lemma~\ref{lemma:media_tau_belief} we know that $\tilde{\theta}_k \left(\tilde{s}, {t}, {q}^A, {q}^B \right) \sim \tilde{\theta}_k \left(\tilde{s}^f, \tau \right)$ (the two have the same distributions), where $\tilde{s}^f = (\tilde{s}^f_1, \tilde{s}^f_2)$ with $\tilde{s}^f_k = \frac{1}{\tau_k} \left( t^A q^A_k \tilde{s}^A_k + t^B q^B_k \tilde{s}^B_k \right)$. This implies that $d^* \left( \tilde{s}, {t}, {q}^A, {q}^B \right) \sim d^*(\tilde{s}^f,\tau)$.
		Then we can write \eqref{eq:media_util_expanded1} as
			$$\mathbb{E} \left[ - \sum_{k=1}^{2} \alpha^i_k \left( d^*(\tilde{s}^f,\tau) - \tilde{\theta}_k \left( \tilde{s}^f, \tau \right) \right)^2 \right],$$
		which is exactly \eqref{eq:media_value}, the definition of $V^i \left( \tau \right)$.
	\end{proof}

	We now make a statement about the media model equilibria in cases not covered in Proposition \ref{prop:media}, and prove both statements simultaneously.
	
	\begin{proposition} \label{prop:media2}
		In each of the cases listed below, all equilibria are payoff-equivalent to the following:
		\begin{enumerate}
			%
			\item if $\alpha^{A}_1 > \alpha^{V}_1 > \alpha^B_1$ and $T$ is not large enough: both media outlets choose the same extreme coverage, either ${q}^A = {q}^B =(1,0)$, or ${q}^A = {q}^B=(0,1)$, and the voter achieves her best information;
			
			\item if $\alpha^{V}_1 \geq \alpha^{A}_1$: the voter only follows outlet $A$, $t=(T,0)$, and outlet $A$ achieves its best information;\footnote{We define outlet $m$'s best information by analogy with the voter's best information in the text, i.e., $(t,q^A,q^B)$ must jointly maximize $m$'s expected payoff given the voter's equilibrium decision strategy.}
			
			\item if $\alpha^B_1 \geq \alpha^{V}_1$: the voter only follows outlet $B$, $t=(0,T)$, and outlet $B$ achieves its best information.
		\end{enumerate}
	\end{proposition}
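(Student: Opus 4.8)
The plan is to reduce the whole game to a reduced form over the aggregate attention allocation $\tau$ and then analyze the induced voter–media interaction. By Lemmas~\ref{lemma:media_tau_belief} and~\ref{lemma:media_tau_payoff}, once the voter picks $t=(t^A,t^B)\in\mathcal R$ and the outlets pick $q^A,q^B\in\mathcal Q$, every player $i$ gets payoff $V^i(\tau)$, where $\tau_k=t^Aq^A_k+t^Bq^B_k$ and $\tau_1+\tau_2=t^A+t^B\le T$; so it suffices to study the game in which the voter chooses $t$, then $A$ and $B$ simultaneously choose $q^A,q^B$, with payoffs $V^i(\tau)$. The computation in the proofs of Lemma~\ref{lemma:strategic_player_maximization} and Theorem~\ref{thm:strategic_solution} gives $V^i(\tau)=-\sum_k\lambda^i_k\hat\Sigma_k(\tau_k)+\mathrm{const}$ with $\lambda^i_k=\alpha^V_k(2\alpha^i_k-\alpha^V_k)$, hence $\partial V^i/\partial\tau_k=\lambda^i_k\, g_k(\tau_k)$ with $g_k(x):=\big(\Sigma^0_k/(1+x\Sigma^0_k)\big)^2>0$ strictly decreasing. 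For the voter $\lambda^V_k=(\alpha^V_k)^2>0$, so she is solving a single-player problem and her best information is the allocation $\tau^V$ of Theorem~\ref{thm:benchmark} with weights $\alpha^V$ and budget $T$. Writing $S:=t^A+t^B$ and $W^i(\tau_1):=V^i(\tau_1,S-\tau_1)$ on $[0,S]$, one checks that the two $\lambda^i_k$ can never both be negative (since $\alpha^V_1/2<(1+\alpha^V_1)/2$), so each $W^i$ is strictly quasi-concave with a unique maximizer $\tau_1^{i,\ast}(S)\in[0,S]$. Two facts then do the work. First, a single-crossing identity: at any point meeting the voter's own optimality condition (the interior FOC $\lambda^V_1 g_1(\tau_1)=\lambda^V_2 g_2(\tau_2)$, or the corner inequality at $(T,0)$), the algebraic identity $\lambda^m_1(\alpha^V_2)^2-\lambda^m_2(\alpha^V_1)^2=2\alpha^V_1\alpha^V_2(\alpha^m_1-\alpha^V_1)$ gives $\operatorname{sign}(\tfrac{d}{d\tau_1}W^m)=\operatorname{sign}(\alpha^m_1-\alpha^V_1)$; in particular $\tau_1^{A,\ast}(S)\ge\tau_1^{B,\ast}(S)$ for every $S$, since $\alpha^A_1>\alpha^B_1$ makes $\tfrac{d}{d\tau_1}W^A>\tfrac{d}{d\tau_1}W^B$ pointwise. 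Second, in any equilibrium of the coverage subgame the outcome $\tau_1^{\mathrm{eq}}(t)$ lies in $[\tau_1^{B,\ast}(S),\tau_1^{A,\ast}(S)]$ (were it above $\tau_1^{A,\ast}(S)$, both outlets would want it lower and at least one could lower it; symmetrically below).

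For the straddling case $\alpha^A_1>\alpha^V_1>\alpha^B_1$, which covers Proposition~\ref{prop:media} and Proposition~\ref{prop:media2}(1), I first exhibit the claimed equilibrium: let the voter play $t=\tau^V$. If $T$ is large, so $\tau^V$ is interior, the single-crossing identity shows $A$ strictly wants $q^A_1=1$ and $B$ strictly wants $q^B_1=0$, so $(q^A,q^B)=((1,0),(0,1))$ is the unique subgame equilibrium and yields $\tau=\tau^V$: polarized coverage and the voter's best information. If $T$ is small, so $\tau^V$ is a corner --- treat $\tau^V=(T,0)$, the case $(0,T)$ being obtained by relabeling issues and outlets --- the corner version of the identity makes $W^A$ increasing on all of $[0,T]$, so with $t=(T,0)$ outlet $A$ chooses $q^A=(1,0)$, outlet $B$ is ignored, and $\tau=(T,0)=\tau^V$. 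For uniqueness and the ``best information'' claim: in any equilibrium the voter's payoff is at most $V^V(\tau^V)$ (the maximum of $V^V$ over $\mathcal T$) and she can guarantee it, so on path $\tau=\tau^V$; feeding this back, ``$A$ wants $\tau_1$ up while $B$ wants it down'' forces $t=\tau^V$, $q^A=(1,0)$, $q^B=(0,1)$ when $\tau^V$ is interior, and forces $q^m_2=0$ on every outlet the voter actually attends when $\tau^V=(T,0)$ (any attention split either leaves $\tau=(T,0)$ with both active outlets at $(1,0)$, or lets $B$ pull $\tau_1$ below $T$, which the voter strictly avoids) --- so the outcome is payoff-equivalent to $q^A=q^B=(1,0)$.

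For the same-side cases, Proposition~\ref{prop:media2}(2)--(3), suppose first $\alpha^V_1\ge\alpha^A_1>\alpha^B_1$. By the single-crossing identity the voter's ideal satisfies $\tau^V_1\ge\tau_1^{A,\ast}(T)\ge\tau_1^{B,\ast}(T)$, and since $\tau_1^{i,\ast}(\cdot)$ is nondecreasing in the budget, the subgame-interval fact gives $\tau_1^{\mathrm{eq}}(t)\le\tau_1^{A,\ast}(S)\le\tau_1^{A,\ast}(T)$ for every $t$. As $W^V$ is strictly increasing on $[0,\tau^V_1]\supseteq[0,\tau_1^{A,\ast}(T)]$, the voter's payoff never exceeds $V^V\big(\tau_1^{A,\ast}(T),\,T-\tau_1^{A,\ast}(T)\big)$, and she attains this by playing $t=(T,0)$: then $B$ is irrelevant and the residual game is a two-player delegated-learning problem (Framework~A, with outlet~$A$ as the researcher), so by Proposition~\ref{prop:equivalent_models} and Theorem~\ref{thm:strategic_solution} its equilibrium allocation is the single-player solution with auxiliary weights $\hat\alpha^A_k=\sqrt{\max\{\lambda^A_k,0\}}$, i.e.\ outlet $A$'s best information. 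Hence every equilibrium is payoff-equivalent to ``the voter follows only $A$, $t=(T,0)$, and outlet $A$ gets its best information''. The case $\alpha^A_1>\alpha^B_1\ge\alpha^V_1$ is the mirror image under swapping $A\leftrightarrow B$ and issues $1\leftrightarrow 2$.

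The main obstacle is the coverage subgame. One must show that for every attention split $t$ it has an equilibrium, that its outcome $\tau_1^{\mathrm{eq}}(t)$ is confined to the interval between the two outlets' budget-$S$ ideals, and that an outlet's best response collapses to the coverage-constrained corner ($q^A_1=1$, respectively $q^B_1=0$) exactly when it cannot reach its ideal --- and then one must rule out equilibria with a genuinely split attention that happen to reproduce the target allocation. Strict quasi-concavity of each $W^m$ and the ordering $\tau_1^{A,\ast}(S)\ge\tau_1^{B,\ast}(S)$ are what make this go through, but the bookkeeping over which outlet is constrained, which outlet is ignored, and how both depend on $t$ and on whether $\tau^V$ is interior is where the care is needed.
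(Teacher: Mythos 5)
Your proposal is correct and follows essentially the same route as the paper's proof: reduce the game to the aggregate attention allocation $\tau$ via Lemmas \ref{lemma:media_tau_belief} and \ref{lemma:media_tau_payoff}, confine any coverage-subgame outcome to the interval $\left[\tau_1^{B,\ast}(S), \tau_1^{A,\ast}(S)\right]$ by the ``both outlets would push the other way'' deviation argument, and then let the voter select the feasible point closest to her own optimum (her optimum itself in the straddling case, the nearer outlet's optimum in the same-side cases). The only substantive difference is cosmetic: you order the three ideal allocations via the single-crossing identity $\lambda^m_1(\alpha^V_2)^2 - \lambda^m_2(\alpha^V_1)^2 = 2\alpha^V_1\alpha^V_2(\alpha^m_1-\alpha^V_1)$ evaluated at the voter's first-order condition, whereas the paper reads the same ordering off the closed-form solution \eqref{eq:strat_taustar_K2} and its monotonicity in $\alpha^i_1$; both arguments are valid, and the remaining subgame bookkeeping you flag is exactly what the paper's ``polarized or ignored'' argument supplies.
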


	\begin{proof}[Proof of Propositions \ref{prop:media} and \ref{prop:media2}.]
	%
	Lemmas~\ref{lemma:media_tau_belief} and \ref{lemma:media_tau_payoff} imply that the media setting with four signals $s=(s^A_1,s^A_2,s^B_1,s^B_2)$ can, without loss, be replaced by the baseline model with fictitious aggregate signals ${s}^f = ({s}^f_1, {s}^f_2)$, which have the same distribution conditional on aggregate attention allocation $\tau \left( {t}, {q}^A, {q}^B \right)$ as the signals in our main model of Section~\ref{sec:model}. For brevity, we often drop the arguments of the aggregate attention allocation and write it as simply $\tau$.
	
	Proceeding by backwards induction, we first note that the voter's equilibrium decision $d^* \left( {s}^f, \tau \right)$ is analogous to the single-player and strategic versions of our model, see the proof of Lemma~\ref{lemma:media_tau_payoff}. In the remainder of this proof, we assume the voter follows this equilibrium decision strategy $d^* \left( {s}^f, \tau \right)$.
	
	Proceeding further, Lemma~\ref{lemma:media_tau_payoff} implies that the outlets' problem of choosing coverage ${q}^m$ and the voter's problem of choosing attention allocation ${t}$ reduce to the problem of maximizing value function \eqref{eq:media_value} over $\tau \left( {t}, {q}^A, {q}^B \right)$.
	Fix some $t$ and consider the outlets' problem. Let $T_V(t) := t_1 + t_2$ denote the total amount of attention devoted to the two outlets by the voter. Let ${\tau^i}^*(t) := \arg \max_{\tau \in \mathbb{R}^2_+} V^i(\tau)$ s.t. $\tau_1+\tau_2 \leq T_V(t)$ for all $i =A,B,V$.
	Note that player $i$'s best information, by definition, is given by $\max_{t \in \mathcal{R}} {\tau^i}^*(t)$.
	For both $m =A,B$, ${\tau^m}^*(t)$ is then described by Lemma \ref{lemma:strategic_player_maximization} and Theorem \ref{thm:strategic_solution} with $T=T_V(t)$. By Lemma \ref{lemma:strategic_player_maximization}, if $m$'s expected payoff \eqref{eq:strategic_player_maximization} is weakly increasing in $\tau_k$, it is also strictly concave in $\tau_k$.\footnote{
		This is most evident from \eqref{eq:foc_strategic} in the proof of Theorem~\ref{thm:strategic_solution}: one can readily see by analogy that both monotonicity and concavity in this problem depend on whether $\lambda^i_k := \alpha^{V}_k \left( 2 \alpha^m_k - \alpha^{V}_k \right) \gtrless 0$.} 
	From Theorems~\ref{thm:benchmark} and \ref{thm:strategic_solution}, we obtain a closed-form solution for ${\tau^i}^*(t)$:
	\begin{equation} \label{eq:strat_taustar_K2}
		\begin{aligned}
			{\tau^i_1}^*(t) 
			&= \max \left\{ 0,\  \min \left\{ \frac{\hat{\alpha}^i_1 \Sigma^0_{1} - \hat{\alpha}^i_2 \Sigma^0_{2}}{\Sigma^0_{1}\Sigma^0_{2}(\hat{\alpha}^i_1+\hat{\alpha}^i_2)} + 
			\frac{\hat{\alpha}^i_1}{\hat{\alpha}^i_1+\hat{\alpha}^i_2} T_V(t),\ T_V(t) \right\}    \right\}, \\
			{\tau^i_2}^*(t)
			&= T_V(t) - {\tau^i_1}^*(t),
		\end{aligned}
	\end{equation}
	where $\hat{\alpha}^i_k := \sqrt{ \max\left\{ 0, (\alpha^i_k)^2 - (\alpha^i_k - \alpha^V_k)^2 \right\} }$ for all $k =1,2$ and $i =A,B,V$ (note that $\hat{\alpha}^V_k = \alpha^V_k$, so for $i=V$, \eqref{eq:strat_taustar_K2} coincides with the solution in Theorem \ref{thm:benchmark}). Note that ${\tau^i_1}^*(t)$ is weakly increasing in $\alpha^i_1$ (subject to the $\alpha^i_1+\alpha^i_2 = 1$ constraint).
	This characterization implies that since $\alpha^i_1 + \alpha^i_2 = 1$ for all $i =A,B,V$, the auxiliary weights $\hat{\alpha}^m_k$ cannot simultaneously be zero for both $k =1,2$ for any $m =A,B$, so ${\tau^m_1}^*(t) + {\tau^m_2}^*(t) = T_V(t)$ for all $t$ (the media want no attention wasted). 
	We note that subject to this constraint, value $V^i(\tau_1, T_V(t) - \tau_1)$ is either strictly monotone, or strictly concave in $\tau_1$ for all $i =A,B,V$.
	
	The characterization in Theorem~\ref{thm:strategic_solution} also implies that given $t$, if $\hat{\alpha}^m_1 \Sigma^0_1 > \hat{\alpha}^m_2 \Sigma^0_2$ and $T_V(t) \leq \frac{\hat{\alpha}^m_1}{\hat{\alpha}^m_2 \Sigma^0_2} - \frac{1}{\Sigma^0_1}$ (or for all $T_V(t)$ if $\hat{\alpha}_2 = 0$), then ${\tau^m}^*(t) = \left( T_V(t), 0 \right)$, and vice versa: if $\hat{\alpha}^m_2 \Sigma^0_2 > \hat{\alpha}^m_1 \Sigma^0_1$ and $T_V(t) \leq \frac{\hat{\alpha}^m_2}{\hat{\alpha}^m_1 \Sigma^0_1} - \frac{1}{\Sigma^0_2}$ (or for all $T_V(t)$ if $\hat{\alpha}_1 = 0$), then ${\tau^m}^*(t) = \left( 0, T_V(t) \right)$. If none of these conditions apply, then ${\tau^m}^*(t)$ is interior (and strictly monotone in ${\alpha}^m_1$ subject to the $\alpha^m_1 + \alpha^m_2=1$ constraint). Therefore, if we let
	\begin{align*}
		\hat{T} := \max \left\{ 0, \min_{m \in \{A,B\}} \left\{ \frac{\hat{\alpha}^m_1}{\hat{\alpha}^m_2 \Sigma^0_2} - \frac{1}{\Sigma^0_1} \right\}, \min_{m \in \{A,B\}} \left\{ \frac{\hat{\alpha}^m_2}{\hat{\alpha}^m_1 \Sigma^0_1} - \frac{1}{\Sigma^0_2} \right\} \right\},
	\end{align*}
	then whenever $T_V(t) \leq \hat{T}$, both media outlets optimally prefer the same extreme coverage: ${\tau^A_1}^*(t) = {\tau^B_1}^*(t) \in \{ 0, T_V(t) \}$. Note that if $\hat{\alpha}^A_1 \Sigma^0_1 > \hat{\alpha}^A_2 \Sigma^0_2$ and $\hat{\alpha}^B_1 \Sigma^0_1 < \hat{\alpha}^B_2 \Sigma^0_2$, then $\hat{T} = 0$, and such a preference profile (${\tau^A_1}^*(t) = {\tau^B_1}^*(t) \in \{ 0, T_V(t) \}$) cannot arise. Further, if $T_V(t) > \hat{T}$, then ${\tau^m}^*(t)$ is interior for at least one outlet $m$, and since $\alpha^{A}_1 > \alpha^B_1$ (and $\alpha^{A}_2 < \alpha^B_2$), we then have ${\tau^A_1}^*(t) > {\tau^B_1}^*(t)$.
	
	Next, we claim that if $\tau \neq {\tau^m}^*(t)$ for some outlet $m$ in equilibrium, then $m$ is either polarized (i.e., ${q}^m \in \{((1,0), (0,1)\}$), or ignored (${t}^m = 0$).
	To see this, proceed by contradiction and consider a strategy profile $\left( {t}, {q}^A, {q}^B \right)$ such that $\tau_k \left( {t}, {q}^A, {q}^B \right) < {\tau^m_k}^*(t)$, ${q}^m_k < 1$, and $t^m > 0$ for some $k =1,2$ and $m =A,B$. Then increasing ${q}^m_k$ (while decreasing the other weight ${q}^m_{-k}$) brings the aggregate test allocation $\tau$ closer to $m$'s optimum ${\tau^m}^*(t)$, which strictly increases $m$'s expected payoff \eqref{eq:strategic_player_maximization}.\footnote{
		\label{foot:closeness}
		``Closer'' here is used in the sense of the new aggregate attention allocation $\bar{\tau}$ being a convex combination of the original $\tau$ and the optimal ${\tau^m}^*(t)$. The payoff function being concave on the relevant interval, and ${\tau^m}^*(t)$ being its maximizer then directly imply that $\bar{\tau}$ yields a higher payoff than the original $\tau$.} 
	This is a profitable deviation for outlet $m$, and the original strategy profile thus cannot be an equilibrium. The same is true of the case when $\tau_k > {\tau^m_k}^*(t)$ and ${q}^m_k > 0$ and ${t}^m > 0$ for some $k=1,2$, $m =A,B$.
	
	The logic above also implies that as $\alpha^{A}_1 > \alpha^B_1$, there is no equilibrium with $\tau$ such that $\tau_1 > {\tau^A_1}^*(t)$ or ${\tau^B_1}^*(t) > \tau_1$, since in the former case both outlets $m$ want to decrease ${q}^m_1$ and increase ${q}^m_2$ (and at least one has the scope for such a deviation), and in the latter they want to do the opposite. Therefore, in equilibrium, the aggregate attention allocation $\tau = \tau \left( {t}^*, {{q}^A}^*, {{q}^B}^* \right)$ must be such that $\tau_1 \in \left[ {\tau^B_1}^*(t^*), {\tau^A_1}^*(t^*) \right]$ and $\tau_2 = T_V(t^*) - \tau_1 \in \left[ {\tau^A_2}^*(t^*), {\tau^B_2}^*(t^*) \right]$. 
	
	It remains to consider the voter's attention allocation problem. Since ${\tau^m}^*(t)$ only depend on $t$ through $T_V(t)$, the voter's problem can be split into first choosing the total amount of attention to devote to media, $T_V$, and then choosing how to split it by way of choosing $\tau$ such that $\tau_1 \in \left[ {\tau^B_1}^*(T_V), {\tau^A_1}^*(T_V) \right]$ and $\tau_2 = T_V - \tau_1$.
	Note that in the second step, all aggregate allocations in this interval are available to the voter for a given $T_V$. By setting ${t}^m=T_V$ for some $m$ and ${t}^{-m}=0$, she can induce $\tau = {\tau^m}^*(T_V)$. 
	Any allocation $\tau$ in the interior of the interval can be achieved by setting ${t}^A = \tau_1$ and ${t}^B = \tau_2$, in which case the unique best response for outlet $A$ is ${q}^A = (1,0)$, and the unique best response for outlet $B$ is ${q}^B = (0,1)$, as argued above. We shall thus refer to $\tau$ that satisfy these requirements as \emph{feasible} given some $T_V$.
	
	The voter never wants any attention wasted, since her payoff is strictly increasing in $\tau_k$ for both $k$. From \eqref{eq:strat_taustar_K2} we can see that $\frac{d}{dT_V} {\tau^m_k}^*(T_V) \in [0,1]$ for all $m,k$, hence for any $T_V$, any $\tau^a$ that is feasible given $T_V$, and any $\varepsilon > 0$, there exists $\tau^b$ that is feasible given $T_V + \varepsilon$ and is such that $\tau^b_k \geq \tau^a_k$ for both $k$, with at least one inequality being strict.\footnote{A way to see this is to notice that an increase in $T_V$ implies an increase in the strong set order of the intervals of feasible in $\tau_1$ and $\tau_2$.} Therefore, increasing $T_V$ always offers a strict improvement the voter, so in equilibrium, $T_V(t^*) = T$.
	
	We now move on to the second stage of the voter's problem. If $T \leq \hat{T}$, then ${\tau^A_1}^*(T) = {\tau^B_1}^*(T) \in \{0,1\}$, so the set of available $\tau$ is a singleton, and the voter is then indifferent between all attention allocations.
	Suppose now that $T > \hat{T}$.
	If $\alpha^{A}_1 > \alpha^{V}_1 > \alpha^B_1$, then ${\tau^V_1}^*(T) \in \left[ {\tau^B_1}^*(T), {\tau^A_1}^*(T) \right]$, so the voter can attain her optimum ${\tau^V}^*(T)$ by following the strategy described above. It is immediate from \eqref{eq:strat_taustar_K2} that in this case there exists $\hat{T}_V \geq \hat{T}$ such that ${\tau^V}^*(T)$ is interior if and only if $T > \hat{T}_V$. For $T > \hat{T}_V$, ${\tau^V}^*(T)$ is uniquely optimal for the voter, hence ${t} = ({\tau^V_1}^*(T), {\tau^V_2}^*(T))$, ${q}^A = (1,0)$, and ${q}^B = (0,1)$ is the unique equilibrium, proving Proposition~\ref{prop:media}. For $T \leq \hat{T}_V$, as argued above, the voter can still achieve ${\tau^V}^*(T)$, but the equilibrium may not be unique. This proves case 1 of Proposition~\ref{prop:media2}. 
	In case $\alpha^{V}_1 \notin \left( \alpha^{A}_1, \alpha^{B}_1 \right)$, since the voter's value function \eqref{eq:single_player_maximization2} is strictly concave in $\tau$, she will optimally choose the available allocation closest (in the sense of Footnote \ref{foot:closeness}) to ${\tau^V}^*(T)$. In particular, if $\alpha^B_1 \geq \alpha^{V}_1$, then ${\tau^V_1}^*(T) \leq {\tau^B_1}^*(T)$, so the voter will choose ${t} = (0,T)$, resulting in $\tau = \tau^*_{B}(T)$. And if $\alpha^V_1 \geq \alpha^A_1$, then ${\tau^V_1}^*(T) \geq {\tau^A_1}^*(T)$, and the voter will choose $t = (T,0)$, resulting in $\tau = \tau^*_{A}(T)$. This proves parts 2 and 3 of Proposition~\ref{prop:media2} and completes this proof.
	\end{proof}

 	Finally, we state the following general corollary before proving it along with Corollary \ref{cor:media_dislike_monopoly}.
	\begin{corollary} \label{cor:media_dislike_monopoly2}
		The voter weakly prefers the equilibrium when both media outlets $m =A,B$ are available to the equilibrium when only one outlet $m =A,B$ is present in the market.
	\end{corollary}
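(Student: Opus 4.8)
The plan is to exploit the reduction of the media game to a choice of aggregate attention allocation $\tau$ (Lemmas~\ref{lemma:media_tau_belief} and \ref{lemma:media_tau_payoff}) together with the equilibrium characterization developed in the proof of Propositions~\ref{prop:media} and \ref{prop:media2}. The central point is simple: every outcome the voter can obtain under a monopoly of a single outlet is also available to her in the duopoly, and since she plays a best response in equilibrium, her duopoly payoff cannot be lower.

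First I would pin down the monopoly outcome. With only outlet $m$ present, the voter picks $t^m \in [0,T]$ and then outlet $m$ picks coverage $q^m$ with $q^m_1 + q^m_2 = 1$; by Lemma~\ref{lemma:media_tau_payoff} outlet $m$'s payoff depends on $(t^m,q^m)$ only through $\tau = (t^m q^m_1, t^m q^m_2)$, so $m$ steers $\tau$ to its optimum ${\tau^m}^*(t^m) := \arg\max_{\tau \ge 0,\ \tau_1+\tau_2 \le t^m} V^m(\tau)$. Since $\alpha^m_1 + \alpha^m_2 = 1$, the auxiliary weights $\hat\alpha^m_k = \sqrt{\max\{0,(\alpha^m_k)^2-(\alpha^m_k-\alpha^V_k)^2\}}$ are not both zero, so outlet $m$ wastes no attention and ${\tau^m_1}^*(t^m) + {\tau^m_2}^*(t^m) = t^m$; moreover \eqref{eq:strat_taustar_K2} gives $\frac{d}{dT_V}{\tau^m_k}^*(T_V) \in [0,1]$ with the two derivatives summing to one, so each coordinate of ${\tau^m}^*$ is weakly increasing in $t^m$ and at least one is strictly increasing. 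Because $V^V$ is strictly increasing in each coordinate of $\tau$, the voter optimally sets $t^m = T$, and the monopoly-$m$ equilibrium is outcome-unique with aggregate allocation ${\tau^m}^*(T)$ and voter payoff $V^V\!\left({\tau^m}^*(T)\right)$.

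Next I would observe that both monopoly outcomes lie in the voter's feasible set in the duopoly. Take w.l.o.g. $\alpha^A_1 > \alpha^B_1$. Setting $t = (T,0)$ makes outlet $B$'s coverage payoff-irrelevant and leaves outlet $A$ facing exactly its monopoly problem with budget $T$, whose unique best response implements $\tau = {\tau^A}^*(T)$; symmetrically $t = (0,T)$ induces $\tau = {\tau^B}^*(T)$. The proof of Propositions~\ref{prop:media}--\ref{prop:media2} shows that in any duopoly equilibrium the voter exhausts her attention ($T_V(t^*)=T$) and obtains $\max\{V^V(\tau): \tau \text{ feasible given }T\}$, and that the feasible set contains all $\tau$ with $\tau_1 \in [{\tau^B_1}^*(T),{\tau^A_1}^*(T)]$ and $\tau_2 = T - \tau_1$ — in particular both endpoints ${\tau^A}^*(T)$ and ${\tau^B}^*(T)$. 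Hence the voter's duopoly equilibrium payoff is at least $\max\left\{V^V\!\left({\tau^A}^*(T)\right), V^V\!\left({\tau^B}^*(T)\right)\right\}$, i.e.\ weakly above her payoff under either monopoly; when $T \le \hat T$ the two monopoly outcomes coincide, forcing equality, and when $\alpha^V_1 \in (\alpha^B_1,\alpha^A_1)$ she strictly attains her unconstrained optimum, recovering the strict preference of Corollary~\ref{cor:media_dislike_monopoly}. The only delicate step — the main obstacle — is the ``enforcement'' claim: that outlet $A$'s best response to $t=(T,0)$ is precisely the coverage implementing ${\tau^A}^*(T)$ irrespective of outlet $B$'s (irrelevant) coverage, and that interior allocations along $\tau_1 + \tau_2 = T$ are likewise enforceable. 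This follows from the strict concavity / strict monotonicity of each outlet's objective in $\tau_1$ along $\tau_1 + \tau_2 = T_V$ noted in the proof of Propositions~\ref{prop:media}--\ref{prop:media2}, together with the fact that $t^B = 0$ renders $q^B$ immaterial; once that is in place, the comparison is immediate.
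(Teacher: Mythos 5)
Your proof is correct and rests on the same machinery as the paper's (Lemmas~\ref{lemma:media_tau_belief} and \ref{lemma:media_tau_payoff}, the reduction to a choice of aggregate allocation $\tau$, and the feasible-set analysis in the proof of Propositions~\ref{prop:media} and \ref{prop:media2}), but you organize the final comparison differently. The paper identifies the duopoly equilibrium allocation explicitly --- ${\tau^V}^*(T)$ when $\alpha^A_1 > \alpha^V_1 > \alpha^B_1$, and otherwise the optimum of the voter's preferred outlet --- and then compares outcomes case by case, leaving somewhat implicit the step that the voter prefers ${\tau^A}^*(T)$ to ${\tau^B}^*(T)$ when $\alpha^V_1 \geq \alpha^A_1$ (and symmetrically). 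You instead run a revealed-preference argument: both monopoly allocations ${\tau^A}^*(T)$ and ${\tau^B}^*(T)$ are inducible by the voter in the duopoly (via $t=(T,0)$ and $t=(0,T)$ respectively), and since her sequential-move equilibrium payoff is the maximum of $V^V$ over the feasible set, it weakly dominates both monopoly payoffs in one stroke, with no case distinction needed. This buys a cleaner treatment of the ``cross'' comparison (duopoly versus the monopoly of the voter's less-preferred outlet), and your justification of $t^m=T$ in the monopoly --- via the componentwise monotonicity of ${\tau^m}^*(T_V)$ in $T_V$ from \eqref{eq:strat_taustar_K2} rather than the monotonicity of $V^V$ in $\tau$ alone --- is actually more careful than the paper's one-line assertion, since the voter does not control $\tau$ directly. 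The one step you flag as delicate (uniqueness of outlet $A$'s best response to $t=(T,0)$) is indeed covered by the strict concavity/monotonicity of $V^A$ along $\tau_1+\tau_2=T$ established in the proof of Propositions~\ref{prop:media}--\ref{prop:media2}, so no gap remains.
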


	\begin{proof}[Proof of Corollaries \ref{cor:media_dislike_monopoly} and \ref{cor:media_dislike_monopoly2}.]
    	
	In the monopoly case, if only outlet $m$ is available, then the voter always lends it her full attention, ${t}^m = T$, since her value \eqref{eq:media_value} reduces to \eqref{eq:single_player_maximization2}, which is strictly increasing and strictly concave in $\tau$. Therefore,  $m$ can choose aggregate attention allocation $\tau$ freely subject to $\tau_1 + \tau_2 = T$. Since $\alpha^i_1+\alpha^i_2 = 1$ for all $i$, Theorem \ref{thm:strategic_solution} implies that $m$ does not want to waste any of the voter's attention. The unique equilibrium aggregate attention allocation is, therefore, ${\tau^m}^*(T)$, as defined in the proof of Propositions \ref{prop:media} and \ref{prop:media2}, which is $m$'s best information.
	
	Propositions \ref{prop:media} and \ref{prop:media2} above shows that under media duopoly, the equilibrium aggregate attention allocation $\tau^*$ is given by ${\tau^V}^*(T)$ if $\alpha^{A}_1 > \alpha^{V}_1 > \alpha^B_1$, by ${\tau^A}^*(T)$ if $\alpha^{A}_1 \leq \alpha^{V}_1$, and by ${\tau^B}^*(T)$ if $\alpha^{V}_1 \leq \alpha^B_1$. Note that in the two latter cases, it is the voter's preferred media outlet that has the monopoly power, which proves Corollary \ref{cor:media_dislike_monopoly2} for those cases. 
	In the former case, ${\tau^V}^*(T)$ being its maximizer of the voter's value function imply that the voter prefers duopoly, concluding the proof of Corollary \ref{cor:media_dislike_monopoly2}. If ${\tau^V}^*(T) \in \left( {\tau^B}^*(T), {\tau^A}^*(T) \right)$, which is the case if and only if $T > \hat{T}_V$, as defined in the proof of Propositions \ref{prop:media} and \ref{prop:media2}, then it is also the unique maximizer (since the value function is strictly concave on that interval in that case), which proves Corollary \ref{cor:media_dislike_monopoly}.
	\end{proof}

	\subsection{Changing Preferences} \label{sec:multiple_selves}
	
	In this second extension, we study how a (potential) change in an agent's preferences between learning and decision-making affects his learning strategy. Following the literature on time-inconsistent preferences \citep[e.g.,][]{odonoghue1999doing}, we consider sophisticated and naive agents and investigate the impact of such a change in preferences on the learning strategy and on the agent's welfare.
	
	We model this similarly to our baseline model by considering a single agent facing a two-stage problem. In the first stage, the agent acquires information about unknown attributes. In the second stage, the agent takes a decision based on the acquired information. The agent's preference may change between the two stages. For instance, this change may occur due to a change in the agent's circumstances, such as losing the job or falling ill. Alternatively, the change in preferences may be due to the agent's self-control problems, such as succumbing to temptation.

	In the first stage, when acquiring information, the agent has utility function 
	\begin{align}\label{eq:first_stage}
		u^R(d,\theta) = - \left( d - \alpha^R_1 \theta_1 - \alpha^R_2 \theta_2 \right)^2, 
	\end{align}
	where we assume $\alpha^{R}_k>0$ for $k=1,2$. Afterwards, the agent's preference may change, which we model as a change in the weight of attribute $\tilde{\theta}_2$ occurring with (ex ante) probability $p\in (0,1)$. In the second stage, when making the decision, the agent has utility function
	\begin{align*}
		u^{D}(d,\theta) = \begin{cases}
			-\left( d - \alpha^R_1 \theta_1 - \alpha^R_2 \theta_2 \right)^2 & \text{with probability } 1-p, \\
			-\left( d - \alpha^R_1\theta_1 - c\alpha^R_2 \theta_2 \right)^2 & \text{with probability } p, 
		\end{cases}
	\end{align*}
	where $c> 0$ and $c\neq 1$.\footnote{This setting extends the baseline model by allowing for uncertainty about the decision-maker's preferences. An alternative interpretation is that there are multiple potential decision-makers, and the researcher is uncertain about who will make the decision. We extend our equilibrium concept to this setting in Section~\ref{appendix:multiple_selves}.} The naive agent believes her preferences will stay the same across both stages, while the sophisticated agent understands they might change. Throughout this section, we simplify the analysis by assuming $\mu^0 = (0,0)$, so that, absent any new information, the researcher and both types of decision-makers agree on the optimal decision $d^\ast=0$. Additionally, we fix $\Sigma_1^0 = \Sigma_2^0= T=1$. 

	As the naive agent incorrectly assumes no possibility of change, the naif's learning problem coincides with that of a single player in Section \ref{sec:single_player}, and the equilibrium test allocation follows from Theorem \ref{thm:benchmark} with weights $\alpha^R$. In contrast, for the sophisticate, a strategic game similar to that in Section \ref{sec:strategic_players} ensues. However, Theorem \ref{thm:strategic_solution} does not apply directly, as the decision-maker's preferences are stochastic. By appropriately adapting Lemma \ref{lemma:strategic_player_maximization} and then applying the same logic as in the proof of Theorem \ref{thm:strategic_solution}, we find that the sophisticate's equilibrium test allocation can nevertheless be expressed as the solution to a single-player problem with auxiliary weights\footnote{The formal steps are contained in the proof of Proposition \ref{prop:strat_ignorance}.}
	\begin{align*}
		\left( \hat{\alpha}_1, \hat{\alpha}_2 \right) := \left( \alpha_1^R,\alpha_2^R\sqrt{\max\{1-p  + p( c(2 -c)) , 0\}} \right).
	\end{align*}
	Thus, the potential change in preferences reduces the weight put on attribute $\tilde{\theta}_2$, as the sophisticate anticipates the misalignment between his current and future self. Notably, the next result shows that this reduction may even induce the sophisticate to engage in ``strategic ignorance'' \citep[see, e.g., ][]{carrillo2000strategic}.
 
	\begin{proposition}\label{prop:strat_ignorance}
		The sophisticate avoids learning about attribute $\tilde{\theta}_2$ for any $T>0$ if and only if $\sqrt{p}(c-1)\geq 1$.
	\end{proposition}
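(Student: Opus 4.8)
The plan is to show that the sophisticate's learning problem reduces to a single-player problem in the sense of Theorems~\ref{thm:benchmark}--\ref{thm:strategic_solution}, with auxiliary weight on $\tilde\theta_2$ equal to $\hat\alpha_2=\alpha^R_2\sqrt{\max\{1-p+pc(2-c),0\}}$, and then to observe that ``the sophisticate avoids learning about $\tilde\theta_2$ for every $T>0$'' is exactly the statement $\hat\alpha_2=0$, which simplifies to $\sqrt{p}\,(c-1)\ge 1$. I would organize the argument in three steps: (i) derive $V^R(\tau)$ and identify the auxiliary weights; (ii) show $\tau^*_2(T)=0$ for all $T>0$ iff $\hat\alpha_2=0$; (iii) simplify $\hat\alpha_2=0$.

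\emph{Step (i).} The sophisticate's future self is a decision-maker whose weight on $\tilde\theta_2$ equals $\alpha^R_2$ with probability $1-p$ and $c\alpha^R_2$ with probability $p$ (and whose weight on $\tilde\theta_1$ is $\alpha^R_1$ for sure), so $V^R(\tau)=-(1-p)\,\mathbb E\big[(\tilde{d}^*_u(\tau)-\tilde b^R)^2\big]-p\,\mathbb E\big[(\tilde{d}^*_v(\tau)-\tilde b^R)^2\big]$, where $\tilde{d}^*_u(\tau)=\alpha^R_1\hat\mu_1+\alpha^R_2\hat\mu_2$ and $\tilde{d}^*_v(\tau)=\alpha^R_1\hat\mu_1+c\alpha^R_2\hat\mu_2$. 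Since $\mu^0=(0,0)$, all these variables have mean zero, so I can expand each square using only $var$ and $\cov$; using independence of the attributes and $\cov(\hat\mu_k,\tilde\theta_k)=var(\hat\mu_k)=\Sigma^0_k-\hat\Sigma_k(\tau_k)$ (as in the proofs of Lemmas~\ref{lemma:single_player_maximization} and \ref{lemma:strategic_player_maximization}), this collapses to $V^R(\tau)=-(\alpha^R_1)^2\hat\Sigma_1(\tau_1)-\big(1-p+pc(2-c)\big)(\alpha^R_2)^2\hat\Sigma_2(\tau_2)+\text{const}$, with $1-p+pc(2-c)=1-p(c-1)^2$. Writing $\lambda_1=(\alpha^R_1)^2>0$ and $\lambda_2=\big(1-p(c-1)^2\big)(\alpha^R_2)^2$, the argument in the proof of Theorem~\ref{thm:strategic_solution} applies line for line (the only change is that the ``$\lambda_k$'' coefficient is now the expectation $\mathbb E[\,2\alpha^R_k\tilde\alpha^D_k-(\tilde\alpha^D_k)^2\,]$ over the random future weights, which yields the $\lambda_k$ just displayed): the maximizers of $V^R$ coincide with those of the single-player objective \eqref{eq:sp_objfn_as_sum} with weights $\hat\alpha_k=\sqrt{\max\{\lambda_k,0\}}$, i.e.\ $(\hat\alpha_1,\hat\alpha_2)=\big(\alpha^R_1,\ \alpha^R_2\sqrt{\max\{1-p+pc(2-c),0\}}\big)$.

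\emph{Step (ii).} If $\hat\alpha_2=0$, then $V^R$ does not depend on $\tau_2$ and is strictly increasing in $\tau_1$, so $\tau^*=(T,0)$ uniquely for every $T>0$ --- equivalently, Theorem~\ref{thm:benchmark} with a zero weight on $\tilde\theta_2$ gives $\bar T=+\infty$. If instead $\hat\alpha_2>0$, then both auxiliary weights are strictly positive (recall $\alpha^R_1>0$), and Theorem~\ref{thm:benchmark} gives $\tau^*_2(T)>0$ on a nonempty set of budgets: for every $T>0$ when $\hat\alpha_2\Sigma^0_2>\hat\alpha_1\Sigma^0_1$, and for every $T>\bar T$ (with $\bar T<\infty$) when $\hat\alpha_1\Sigma^0_1\ge\hat\alpha_2\Sigma^0_2$. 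Hence ``$\tau^*_2(T)=0$ for all $T>0$'' holds precisely when $\hat\alpha_2=0$. \emph{Step (iii).} $\hat\alpha_2=0\iff 1-p(c-1)^2\le 0\iff p(c-1)^2\ge 1\iff \sqrt{p}\,|c-1|\ge 1$; since $p\in(0,1)$ (so $\sqrt p<1$), this fails whenever $c\in(0,1)$ (where $|c-1|<1$), and in that regime $\sqrt p\,(c-1)<0<1$ as well, so the inequality $\sqrt p\,(c-1)\ge1$ also fails; when $c>1$ we have $|c-1|=c-1$, so the two inequalities coincide. Therefore $\hat\alpha_2=0\iff\sqrt p\,(c-1)\ge1$, which together with Step~(ii) is the claim.

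I do not expect a deep obstacle: the proof is a re-run of Lemma~\ref{lemma:strategic_player_maximization} and the proof of Theorem~\ref{thm:strategic_solution} with a two-point distribution over the future preference parameter. The main points requiring care are (a) the ``for every $T>0$'' quantifier, which forces one to check \emph{both} directions against Theorem~\ref{thm:benchmark}, splitting on whether $\hat\alpha_1\Sigma^0_1$ or $\hat\alpha_2\Sigma^0_2$ is larger; and (b) the boundary case $1-p(c-1)^2=0$, where $\hat\alpha_2=0$ but $V^R$ is only weakly decreasing in $\tau_2$ --- here the strict gain from raising $\tau_1$, together with the tie-breaking convention, still pins down $\tau^*_2=0$.
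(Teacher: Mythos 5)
Your proposal is correct and follows essentially the same route as the paper: derive the researcher's value as a $p$-mixture over the two future decision rules, reduce to a single-player problem with auxiliary weight $\hat\alpha_2=\alpha^R_2\sqrt{\max\{1-p+pc(2-c),0\}}=\alpha^R_2\sqrt{\max\{1-p(c-1)^2,0\}}$, and observe that strategic ignorance for every budget is equivalent to $\hat\alpha_2=0$, i.e.\ $\sqrt{p}(c-1)\ge 1$. Your Step (ii) is in fact slightly more careful than the paper's closing line (``by the same logic\dots the result follows''), since you explicitly verify via Theorem~\ref{thm:benchmark} that $\hat\alpha_2>0$ forces $\tau^*_2(T)>0$ for some $T$, and you handle the boundary case $\hat\alpha_2=0$ with the tie-breaking convention.
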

	In the presence of a potential shift in preferences, represented by $c$, the sophisticate may choose strategic ignorance regarding attribute $\tilde{\theta}_2$. If the change reduces the weight on $\tilde{\theta}_2$ or does not increase it too much, the agent still learns about it, but with less intensity ($\hat{\alpha}_2 < \alpha_2^R$). In contrast, the sophisticate avoids learning about attribute $\tilde{\theta}_2$ if the change is sufficiently high ($c>2$) and the probability of change is not too small. In short, when the sophisticate expects the future self to overreact to information about $\tilde{\theta}_2$, she may opt for strategic ignorance.

	Having established the differences between the naif and the sophisticate's equilibrium learning strategies, we want to understand the implications for the agent's welfare. In this context, the choice of the appropriate welfare criterion is unclear. For instance, if the change in preferences is triggered by succumbing to temptation, the initial utility function $u^R$ seems most appropriate. However, if the change in preferences is due to altered  circumstances, such as illness or parenthood, then the changed utility function $u^{D}$ seems more appropriate. We remain agnostic about the welfare criterion and consider both alternatives. 
	\begin{proposition}\label{prop:selves_welfare}
	If $\alpha_1^R \notin (\hat{\alpha}_2/2, 2\alpha_2^R)$, the expected payoff of the naif and the sophisticate coincide for both welfare criteria. Otherwise, if the welfare criterion is the initial utility, $u^R$, the welfare of the sophisticate exceeds that of the naif. If the welfare criterion is the changed utility, $u^{D}$, the naif's welfare exceeds that of the sophisticate for $c>1$ and vice versa for $c<1$. 
	\end{proposition}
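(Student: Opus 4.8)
The plan is to express both welfare notions as functions of the realized test allocation alone, recognize each as a single-player residual-variance objective with a rescaled weight on the second attribute, and then read off every comparison from the closed form of the single-player solution in Theorem~\ref{thm:benchmark} specialized to $K=2$, $\Sigma^0_1=\Sigma^0_2=T=1$. Fix an allocation $\tau=(\tau_1,\tau_2)$ and decompose $\tilde\theta_k=\tilde Z_k+\tilde\nu_k$, where $\tilde Z_k:=\hat\mu_k(\tilde s_k,\tau_k)$ is the posterior mean (with variance $1-\hat\Sigma_k(\tau_k)$) and $\tilde\nu_k\sim\mathcal N(0,\hat\Sigma_k(\tau_k))$ is independent of $\tilde Z_k$ and of everything pertaining to the other attribute, with $\hat\Sigma_k(\tau_k)=1/(1+\tau_k)$. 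Whether or not preferences change, the future self sets the decision to her own posterior mean of her own bliss point; letting $\tilde{\mathbb 1}$ be the Bernoulli($p$) change indicator, the realized decision is $\tilde d(\tau)=\alpha^R_1\tilde Z_1+(1+(c-1)\tilde{\mathbb 1})\alpha^R_2\tilde Z_2$. Since $\tilde Z_2,\tilde\nu_1,\tilde\nu_2,\tilde{\mathbb 1}$ are mutually independent and mean-zero, expanding the quadratic losses (and, for the second criterion, conditioning on the change event, each type of future self being evaluated with her own weights) gives
\begin{align*}
W^R(\tau)&:=\mathbb E\bigl[-(\tilde d(\tau)-\tilde b^R)^2\bigr]=\mathrm{const}-(\alpha^R_1)^2\,\hat\Sigma_1(\tau_1)-\bigl(1-p(c-1)^2\bigr)(\alpha^R_2)^2\,\hat\Sigma_2(\tau_2),\\
W^D(\tau)&:=\mathbb E\bigl[u^D(\tilde d(\tau),\tilde\theta)\bigr]=-(\alpha^R_1)^2\,\hat\Sigma_1(\tau_1)-\bigl(1-p+pc^2\bigr)(\alpha^R_2)^2\,\hat\Sigma_2(\tau_2),
\end{align*}
where $\tilde b^R=\alpha^R_1\tilde\theta_1+\alpha^R_2\tilde\theta_2$. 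Hence $W^R$ equals, up to an additive constant, the single-player objective for weights $(\alpha^R_1,\hat\alpha_2)$ — the sophisticate's auxiliary weights — so it is maximized exactly at the sophisticate's allocation $\tau^S$ (when $1-p(c-1)^2<0$ the sign flips, $W^R$ is strictly decreasing in $\tau_2$, and is still maximized at $\tau^S=(1,0)$, the degenerate single-player solution for $(\alpha^R_1,0)$); $W^D$ is the single-player objective for weights $(\alpha^R_1,\check\alpha_2)$ with $\check\alpha_2:=\alpha^R_2\sqrt{1-p+pc^2}>0$; and the naif's allocation $\tau^N$ maximizes the single-player objective for weights $(\alpha^R_1,\alpha^R_2)$.

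By Theorem~\ref{thm:benchmark} (with $\Sigma^0_1=\Sigma^0_2=T=1$), the single-player solution for positive weights $(a_1,a_2)$ exhausts the budget and has second coordinate $h(a_2/a_1)$, where $h(r):=\max\{0,\min\{1,(2r-1)/(r+1)\}\}$ is continuous, weakly increasing, and strictly increasing on $(1/2,2)$. Thus $\tau^N_2=h(\alpha^R_2/\alpha^R_1)$, $\tau^S_2=h(\hat\alpha_2/\alpha^R_1)$, and the maximizer of $W^D$ has second coordinate $h(\check\alpha_2/\alpha^R_1)$. If $\alpha^R_1\ge 2\alpha^R_2\ (\ge 2\hat\alpha_2)$ both ratios lie below $1/2$, so $\tau^N=\tau^S=(1,0)$; if $\alpha^R_1\le \hat\alpha_2/2\ (\le\alpha^R_2/2)$ both lie above $2$, so $\tau^N=\tau^S=(0,1)$. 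In either case $W^R$ and $W^D$ take the same value at $\tau^N=\tau^S$, proving that outside the interval the naif's and sophisticate's welfares coincide under both criteria. If instead $\alpha^R_1\in(\hat\alpha_2/2,2\alpha^R_2)$, then $\hat\alpha_2/\alpha^R_1<2$ and $\alpha^R_2/\alpha^R_1>1/2$, so $[\hat\alpha_2/\alpha^R_1,\alpha^R_2/\alpha^R_1]$ overlaps $(1/2,2)$ on a nondegenerate subinterval; since $\hat\alpha_2<\alpha^R_2$ and $h$ is strictly increasing there, $\tau^S_2<\tau^N_2$ (and $\tau^S_1>\tau^N_1$), with both budgets exhausted.

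For the comparisons in the regime $\alpha^R_1\in(\hat\alpha_2/2,2\alpha^R_2)$: under criterion $u^R$, $W^R$ has $\tau^S$ as its unique maximizer and $\tau^N\neq\tau^S$, so $W^R(\tau^S)>W^R(\tau^N)$ — the sophisticate's welfare exceeds the naif's. Under criterion $u^D$, note first $\hat\alpha_2<\alpha^R_2$ always, since $p(c-1)^2>0$. If $c>1$ then $1-p+pc^2>1$, so $\check\alpha_2>\alpha^R_2$, hence $\hat\alpha_2<\alpha^R_2<\check\alpha_2$; applying $h$ gives $\tau^S_2<\tau^N_2\le h(\check\alpha_2/\alpha^R_1)$. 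The map $\tau_2\mapsto W^D(1-\tau_2,\tau_2)$ is a positive combination of the strictly concave functions $-1/(2-\tau_2)$ and $-1/(1+\tau_2)$, hence strictly concave, with maximum at $h(\check\alpha_2/\alpha^R_1)$; it is therefore strictly increasing on $[\tau^S_2,\tau^N_2]$, so $W^D(\tau^N)>W^D(\tau^S)$ — the naif's welfare exceeds the sophisticate's. If $c<1$, the elementary inequality $(1-c)^2<(1-c)(1+c)$ (valid for $0<c<1$) gives $1-p(c-1)^2>1-p+pc^2$, i.e.\ $\check\alpha_2<\hat\alpha_2<\alpha^R_2$, so $h(\check\alpha_2/\alpha^R_1)\le\tau^S_2<\tau^N_2$; by strict concavity $W^D(1-\tau_2,\tau_2)$ is strictly decreasing on $[\tau^S_2,\tau^N_2]$, so $W^D(\tau^S)>W^D(\tau^N)$ — the sophisticate's welfare exceeds the naif's. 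This matches all stated directions.

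The argument is essentially bookkeeping; the only delicate points are (i) the reduction in the first step — obtaining the coefficients $1-p(c-1)^2$ and $1-p+pc^2$, which relies on each type of future self both deciding and being evaluated according to her own weights; (ii) carrying the corner cases (allocations at $(1,0)$ or $(0,1)$, and the subcase $\hat\alpha_2=0$) through every step, using that the single-player objective with a zero weight is still covered by Theorem~\ref{thm:benchmark} and that $W^R$ is then monotone rather than having an interior maximum; and (iii) verifying strictness in the second step via the explicit nondegenerate overlap of $[\hat\alpha_2/\alpha^R_1,\alpha^R_2/\alpha^R_1]$ with $(1/2,2)$ together with the strict concavity of $W^D$ along the budget line.
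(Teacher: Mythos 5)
Your proof is correct and follows essentially the same route as the paper's: identify the coincidence region from the closed-form allocations, note that the initial-utility criterion is (up to a constant) the sophisticate's own objective, and settle the changed-utility comparison by ordering $\tau_2^S,\tau_2^N$ relative to the maximizer of the strictly concave objective with effective weight $\alpha_2^R\sqrt{1-p+pc^2}$. The only difference is presentational — you derive $W^R$ explicitly via the posterior-mean/residual decomposition and phrase everything through the single map $h(r)$, whereas the paper treats the $u^R$ case as immediate and works with $\tau_1$ and the map $f(x)=(2\alpha_1^R-x)/(\alpha_1^R+x)$.
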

	The interesting case arises when a potential change in preferences influences equilibrium learning strategies. As shown in the appendix, the sophisticate allocates (at least weakly) more tests to attribute $\tilde{\theta}_1$ than the naif. Anticipating a shift in preferences, the sophisticate reduces emphasis on attribute $\tilde{\theta}_2$ to avoid misalignment with the future self's choices. When initial utility is used as the welfare criterion, the sophisticate’s expected payoff is higher because they optimize learning, aware of possible preference changes, while the naif makes suboptimal choices. However, when the welfare criterion is the changed utility, the naif can outperform the sophisticate. If $c>1$, the naif learns more about $\tilde{\theta}_2$, benefiting whether preferences change or not: If a change occurs, the naif has learned more about the now more important attribute $\tilde{\theta}_2$ than the sophisticate (in extreme case, the sophisticate engages in strategic ignorance and learns nothing about $\tilde{\theta}_2$); if no change occurs, the naif has implemented the optimal test allocation due to her naivete. In contrast, the sophisticate’s ``hedging" learning strategy against the potential change, which underweighs $\tilde{\theta}_2$, is less effective. However, this is reversed when $c<1$. Then, the naif's learning strategy is still optimal when no change occurs but (substantially) worse when a change occurs.
	
	The result in Proposition \ref{prop:selves_welfare} has a parallel to \citeauthor{odonoghue1999doing}'s (\citeyear{odonoghue1999doing}) findings in their ``doing it now or later'' framework. They find that in case of immediate costs the sophisticate is always better off than the naif (as in our case with the initial utility). Conversely, in case of immediate gratification, either type of the agent can be better off (as in our case with changed utility). In both, \citet{odonoghue1999doing} and our model, the sophisticate may engage in a form of overcompensation. In our case it happens by putting less weight on attribute $\tilde{\theta}_2$ to hedge against preference changes, which leads to a worse outcome \emph{irrespective} of whether preferences actually change. Summarizing, both models highlight that naive strategies can sometimes lead to better outcomes if future preferences or circumstances validate those strategies. This underscores the complexity of modeling changing preferences and how different assumptions about future behavior or the reason for change can impact welfare.

	\subsubsection{Proofs for the extension on changing preferences}\label{appendix:multiple_selves}
	
	\paragraph*{Equilibrium concept.} 
	A weak Perfect Bayesian Equilibrium of the game with potentially changing preferences (hereinafter referred to simply as ``equilibrium'') is a tuple $(\tau^*, d^*_c(s,\tau),d^*_u(s,\tau), \tilde{\theta}(s,\tau))$ such that:
	\begin{enumerate}
		\item the researcher's test allocation strategy $\tau^* \in \mathcal{T}$  maximizes his expected payoff given the decision-maker's strategies $d^*_c(s,\tau)$ and $d^*_u(s,\tau)$;
		\item the decision-maker's decision strategies $d^*_c(s,\tau): \mathbb{R}^2 \times \mathcal{T} \to \mathbb{R}$ and $d^*_u(s,\tau): \mathbb{R}^2 \times \mathcal{T} \to \mathbb{R}$ maximize her expected payoff given changed and unchanged preferences, respectively, and her posterior beliefs $\tilde{\theta}(s,\tau)$;
		\item the decision-maker's posterior beliefs $\tilde{\theta}(s,\tau): \mathbb{R}^2 \times \mathcal{T} \to \varDelta(\mathbb{R}^2)$ are obtained via Bayes' rule given the signal realizations $s$ and the researcher's choice $\tau$.
	\end{enumerate}

	\begin{proof}[Proof of Proposition \ref{prop:strat_ignorance}]
	
	Observe that the decision-maker's equilibrium decision is given by
	\begin{align*}
			d_u^\ast(s, \tau)&= \alpha^R_1 \hat{\mu}_1(s,\tau) + \alpha_2^R \hat{\mu}_2(s,\tau) \quad\;  \text{with probability } 1-p, \\
			d_c^\ast(s, \tau)&= \alpha^R_1 \hat{\mu}_1(s,\tau) + c\alpha_2^R \hat{\mu}_2(s,\tau) \quad \text{with probability } p.	
	\end{align*}
	Thus, for any test allocation $\tau$, the uncertainty about the decision-maker's preferences translates accordingly into uncertainty about the decision-maker's decision. Hence, we can write the researcher's value as
	\begin{align*}
		V^R(\tau) =& - \mathbb{E}\left[p(\alpha^R_1 \hat{\mu}_1(s,\tau) + c\alpha_2^R \hat{\mu}_2(s,\tau)  -\tilde{b}^R)^2 + (1-p)((\alpha^R_1 \hat{\mu}_1(s,\tau) + \alpha_2^R \hat{\mu}_2(s,\tau))- \tilde{b}^R)^2\right].
	\end{align*}
	By appropriately adapting Lemma \ref{lemma:strategic_player_maximization}, we obtain that the equilibrium test allocation is found by solving
	\begin{align*}
		\max_{\tau \in \mathcal{T}}  \;- p\left(  \frac{\left(\alpha_1^R\right)^2\Sigma^0_1}{1 + \tau_1 \Sigma^0_1} +\frac{\left(\alpha_2^R\right)^2\Sigma^0_2}{1 + \tau_2\Sigma^0_2} \left(c^2 + 2 c \Sigma^0_2 \tau_2\right)\right) - (1-p)\left( \sum_{j=1}^2 \frac{\left(\alpha_j^R\right)^2\Sigma^0_j}{1 + \tau_j \Sigma^0_j} \right).
	\end{align*}
	
	Proceeding as in the benchmark model, we can obtain the FOCs for the learning problem. These are given by
	\begin{align*}
	&	\left(\frac{ \Sigma_1^0}{1 + \Sigma_1^0 \tau_1^\ast}\right)^2(\alpha_1^R)^2 = \lambda,\\
	&\left(\frac{ \Sigma_2^0}{1 + \Sigma_2^0 \tau_2^\ast}\right)^2(\alpha_2^R)^2\left(1-p  + p(c(2 - c)) \right)= \lambda.
	\end{align*}
	By the same logic as in the proof of Theorem \ref{thm:strategic_solution}, the result follows.
	\end{proof}

	\begin{proof}[Proof of Proposition \ref{prop:selves_welfare}]
	The proof proceeds in three steps.
	\paragraph*{Step 1.} We begin by observing that $\alpha_2^R \geq \hat{\alpha}_2 = \alpha_2^R\sqrt{\max\{1-p  + p( c(2 -c)) , 0\}},$ where $1-p  + p( c(2 -c))$ is a concave function of $c$ with a maximum at $c=1$. Thus, $\alpha_2^R \geq \hat{\alpha}_2^S$. Further, the sophisticate's and the naif's learning strategies $\tau_1^S$ and $\tau_ N$, respectively, read 
	\begin{align*}
		\tau^S_1 
		&= \max \left\{ 0, \min \left\{ \frac{{2\alpha}_1^R  - \hat{\alpha}_2 }{ {\alpha}_1^R+\hat{\alpha}_2} , 1  \right\}    \right\}, \quad \tau^N_1 
		= \max \left\{ 0, \min \left\{ \frac{{2\alpha}_1^R  - {\alpha}_2^R }{{\alpha}_1^R+{\alpha}_2^R}  , 1  \right\}    \right\}.
	\end{align*}
	Hence, $ \tau^S_1\geq \tau^N_1$, as the sophisticate puts less weight on the second attribute than the naif and they both put the same weight on the first. Thus, if naif learns only about attribute $\tilde{\theta}_1$, the sophisticate will do the same. The condition for that is $\frac{{2\alpha}_1^R  - {\alpha}_2^R }{{\alpha}_1^R+{\alpha}_2^R} \geq 1 \Leftrightarrow \alpha_1^R \geq 2 \alpha_2^R.$
	Analogously, when the sophisticate learns only about attribute $\tilde{\theta}_2$, the naif will do the same. The condition for that is $\frac{{2\alpha}_1^R  - \hat{\alpha}_2 }{ {\alpha}_1^R+\hat{\alpha}_2} \leq 0\Leftrightarrow\alpha_1^R\leq \frac{\hat{\alpha}_2}{2}.$
	Hence, when $\alpha_1^R \notin (\hat{\alpha}_2/2, 2\alpha_2^R)$ the two learning strategies coincide so that the expected payoffs of the naif and the sophisticate coincide. This proves the first part of the proposition. 
	
	\paragraph*{Step 2.} For the rest of the proof, let $\alpha^R_1 \in (\hat{\alpha}_2/2, 2\alpha_2^R)$. If the relevant welfare criterion is the initial utility (given the decision-maker's equilibrium decision rule), it coincides with the objective function the sophisticate maximizes. Hence, the second part of the proposition follows trivially as the learning strategies for the naif and the sophisticate differ when $\alpha^R_1 \in (\hat{\alpha}_2/2, 2\alpha_2^R)$.
	
	\paragraph*{Step 3.} If the relevant welfare criterion is the decision-maker's utility, the welfare function (given the decision-maker's equilibrium decision rule) is 
	\begin{align*}
		V^{D}(\tau) 
		&= p \left[-\frac{ \left(\alpha^R_1\right)^2  }{1+ \tau_1 } - \frac{ \left(c \alpha^R_2\right)^2  }{1+ \tau_2 }\right]  + (1-p)\left[-\frac{ \left(\alpha^R_1\right)^2  }{1+ \tau_1 } - \frac{ \left(\alpha^R_2\right)^2  }{1+ \tau_2 }\right] \\
		&= -\frac{ \left(\alpha^R_1\right)^2  }{1+ \tau_1 } - \frac{\left(\alpha^R_2\right)^2}{2-\tau_1 }\left(1-p + pc^2\right).
	\end{align*}
	Let us find the value of $\tau_1\in [0,1]$ that maximizes this function. The first and second order conditions show that $V^{D}(\tau)$ is a strictly concave function for $\tau_1\in [0,1]$ reaching a unique maximum at
	\begin{align*}
		\tau^{D}_1 = \max\left\{0, \min \left\{ \frac{2\alpha^R_1 -\alpha^R_2 \sqrt{1-p+pc^2}}{\alpha^R_1 + \alpha^R_2 \sqrt{1-p+pc^2}},1\right\} \right\}.
	\end{align*}
	Since $\alpha^R_1 \in (\hat{\alpha}_2/2, 2\alpha_2^R)$, the learning strategies of the naif and sophisticate differ. Furthermore, from Step 1, we then get that the sophisticate always learns strictly more about attribute $\tilde{\theta}_1$: $\tau^S_1 > \tau^N_1$.
	
	Note that a function $f(x):= \frac{2\alpha^R_1-x}{\alpha^R_1+x}$ is strictly decreasing in $x \in \mathbb{R}$. Furthermore, since $\alpha^R_1 \in (\hat{\alpha}_2/2, 2\alpha_2^R)$, we have 
	\begin{align*}
		\tau^N_1 &= \max\{f(\alpha^R_2),0\},
		\\
		\tau^S_1 &= \min \{ f(\hat{\alpha}_2),1\},
		\\
		\tau^{D}_1 &= \max \{0, \min \{f(\alpha^R_2\sqrt{p+(1-p)c^2}),1\}\}.
	\end{align*}
	Let $c>1$. Then $\tau^S_1>\tau^N_1 \geq \tau^{D}_1$ since  $\sqrt{1-p+pc^2}>\sqrt{p+(1-p)}=1$ and thus $\alpha^R_2\sqrt{1-p+pc^2}>\alpha^R_2$. Hence the naif's learning strategy is strictly closer to the maximizer. Due to the strict concavity of $V^{D}(\tau)$, the naif is strictly better off compared to the sophisticate.
	
	Let $c<1$. Then $\tau^{D}_1 \geq \tau^S_1 > \tau^N_1$. To see that, we show that $\hat{\alpha}_2> \alpha^R_2\sqrt{1-p+pc^2}$. Note that we have $\hat{\alpha}_2=\alpha^R_2\sqrt{1-p+pc(2-c)}$, which follows from $c<1$ and $p>0$ and the calculations in Step 1. Hence, the auxiliary weight of the sophisticate on attribute $\tilde{\theta}_2$ is not zero. The inequality $\hat{\alpha}_2> \alpha^R_2\sqrt{1-p+pc^2}$ then follows by noting that $c^2=c\times c<c(2-c)$ for $c<1$. Hence the sophisticate's learning strategy is strictly closer to the maximizer. Due to the strict concavity of $V^{D}(\tau)$, we get that the sophisticate is strictly better off compared to the naif.
	\end{proof}

\newpage

	\section{Additional Proofs} \label{sec:online_app_C}
	\subsection{Proof of Proposition \ref{prop:equivalent_models}} \label{OA:frameworks}
	
	The Baseline Model and Framework A are equivalent up to realized payoffs, since the utility functions \eqref{eq:util_spec_baseline} and \eqref{eq:util_spec_weighted} are equivalent up to an exogenous random term. To see this, expand the squares in both utility functions, using the assumption that $\sum_k \alpha^i_k=1$:
	\begin{align*}
		u^i(d,\theta) 
		&= - \left( d - \sum_k \alpha^i_k \theta_k  \right)^2
		= - d^2 + 2 \sum_k d \alpha^i_k \theta_k - \left( \sum_k \alpha^i_k \theta_k \right)^2,
		\\
		u^i_A(d,\theta) 
		&= - \sum_k \alpha^i_k \left( d - \theta_k  \right)^2
		= - d^2 + 2 \sum_k d \alpha^i_k \theta_k - \sum_k  \alpha^i_k \theta_k^2.
	\end{align*}
	Observe that the difference $u^i(d,\theta) - u^i_A(d,\theta)$ is independent of the decision $d$. Therefore, the decision-maker's expected utilities \eqref{eq:util_interim_DM} only differ in the two settings by a constant independent of $d$, and the researcher's values \eqref{eq:util_exante_R} only differ by a constant.
	
	Now consider Framework B. The decision-maker's equilibrium decision $d^*(s,\tau)$ in this setting satisfies 
	\begin{align*}
		d^*_k(s,\tau) = \mathbb{E} \left[ \alpha^{D}_k \tilde{\theta}_k \right] 
		= \mathbb{E} \left[ \alpha^{D}_k \tilde{\theta}_k (s_k, \tau_k) \right] 
		= \alpha^{D}_k \hat{\mu}_k(s_k, \tau_k)
	\end{align*}
	for all $k$, where $\tilde{\theta}_k (s_k,\tau_k)$ is the decision-maker's posterior belief and $\hat{\mu}_k(s_k, \tau_k)$ is its posterior expectation, given by \eqref{eq:posterior_attributes} and \eqref{eq:posterior_mean_attributes}, respectively. 
	Let $\tilde{d}_k^*(\tau) := d^*_k(\tilde{s},\tau)$ for all $k$ denote the random variable that captures the distribution of the decision-maker's equilibrium decision given $\tau$.
	The researcher's value is then given by
	\begin{align*}
		V^R_B(\tau) = \mathbb{E} \left[ u^R_B \left( \tilde{d}^*(\tau), \tilde{\theta} \right) \right] 
		&= - \mathbb{E} \left[ \sum_k \left( \tilde{d}_k^*(\tau) - \alpha^{R}_k \tilde{\theta}_k \right)^2 \right]
        \\
		&= \sum_k \left[ - Var \left( \tilde{d}_k^*(\tau) \right) + 2 \cov\left( \tilde{d}_k^*(\tau), \alpha^{R}_k \tilde{\theta}_k \right) \right] + C,
	\end{align*}
	where $C$ is some constant that does not depend on $\tau$ (note, specifically, that $\mathbb{E} \left[ \tilde{d}_k^*(\tau) \right] = \alpha^{D}_k \mu_k^0$ does not depend on $\tau$). This is analogous (though not equivalent) to Lemma \ref{lemma:strategic_player_maximization}.
	
	From \eqref{eq:posterior_mean_attributes} and \eqref{eq:signal_defn} we have that 
	\begin{align*}
		Var \left( \tilde{d}_k^*(\tau) \right) 
		&= Var \left( \alpha^{D}_k \frac{\tau_k \Sigma^0_k}{1 + \tau_k\Sigma^0_k} \tilde{s}_k \right)
		= \frac{\tau_k (\Sigma^0_k)^2}{1 + \tau_k\Sigma^0_k} \left( \alpha^{D}_k \right)^2,
		\\
		\cov\left( \tilde{d}_k^*(\tau), \alpha^{R}_k \tilde{\theta}_k \right)
		&= \cov \left( \alpha^{D}_k \frac{\tau_k \Sigma^0_k}{1 + \tau_k\Sigma^0_k} \tilde{s}_k,\, \alpha^{R}_k \tilde{\theta}_k \right)
		= \frac{\tau_k (\Sigma^0_k)^2}{1 + \tau_k\Sigma^0_k} \alpha^{D}_k \alpha^R_k,
	\end{align*}
	hence the objective function above reduces to
	\begin{align*}
		V^R_B(\tau) = \sum_k \frac{\tau_k (\Sigma^0_k)^2}{1 + \tau_k\Sigma^0_k} \left( 2\alpha^R_k - \alpha^{D}_k \right) \alpha^{D}_k  + C
	\end{align*}
	and its partial derivative w.r.t. $\tau_k$ is given by
	\begin{align*}
		\frac{\partial V^R_B(\tau)}{\partial \tau_k} = \left( 2\alpha^R_k - \alpha^{D}_k \right) \alpha^{D}_k \left( \frac{1}{\Sigma^0_k} + \tau_k \right)^{-2},
	\end{align*}
	which is equivalent to \eqref{eq:foc_strategic}. The remainder of the proof of Theorem \ref{thm:strategic_solution} then applies to Framework B, and the statement of Theorem \ref{thm:strategic_solution} holds. The equilibrium test allocation is, therefore, the same as in the Baseline Model.
	\qed

	\subsection{Details of the Proofs in Appendix \ref{appendix:organizations} }\label{OA:organizations}

	\subsubsection{Proof of Lemma \ref{lemma:characterization_learning_low}}

        Fix the manager's weight vector $\alpha^{D}=(\alpha^{D}_1,\alpha^{D}_2)$ with $\alpha^{D}_1\geq \alpha^{D}_2>0$. Let $\alpha^{R}(\beta,\gamma)$ 
		denote the analyst's weight vector as a function of $\beta$ and $\gamma$ for a given $\alpha^{D}$  as defined by \eqref{eq:decomposition_binary}.\footnote{For simplicity, we omit the dependence on $\alpha^{D}$ in the notation.} We thus have
		\begin{eqnarray}\label{eq:alpha1a_alpha2a_gamma_beta}
			\alpha^{R}_1(\gamma,\beta) = \beta \alpha^{D}_1 - \gamma \alpha^{D}_2,\ \ \  
			\alpha^{R}_2(\gamma,\beta) = \beta \alpha^{D}_2 + \gamma \alpha^{D}_1.
		\end{eqnarray}
		For $\beta\in \mathbb{R}_+$, let $\gamma_1(\beta)\in \mathbb{R}$ and $\gamma_2(\beta)\in \mathbb{R}$ denote the values of $\gamma$ defined by equations $\alpha^{R}_1(\beta,\gamma_1(\beta)) = 1/2 \alpha^{D}_1,$ and $\alpha^{R}_2(\beta,\gamma_2(\beta)) = 1/2 \alpha^{D}_2.$	From equations \eqref{eq:alpha1a_alpha2a_gamma_beta}, we get
		\begin{eqnarray} \label{eq:gamma_beta_1_2}
			\gamma_1(\beta) 
			=  - \left(\frac{1}{2} - \beta \right) \frac{\alpha^{D}_1}{\alpha^{D}_2}, \ \ \
			\gamma_2(\beta) 
			= \left( \frac{1}{2} - \beta \right) \frac{\alpha^{D}_2}{\alpha^{D}_1}
		\end{eqnarray}
		Since $\alpha^{R}_1(\beta,\gamma)$ ($\alpha^{R}_2(\beta,\gamma)$) is strictly decreasing (strictly increasing) in $\gamma$, the following results hold: Given $\beta \in \mathbb{R}_+$ (a) $\alpha^{R}_1(\beta,\gamma) \leq 1/2 \alpha^{D}_1$ if and only if $\gamma \geq \gamma_1(\beta)$; and (b)  $\alpha^{R}_2(\beta,\gamma) \leq 1/2 \alpha^{D}_2$ if and only if $\gamma \leq \gamma_2(\beta)$.
		
		Note that $\gamma_1(\beta)$ ($\gamma_2(\beta)$) is strictly increasing (strictly decreasing) in $\beta$. Furthermore, whenever $\beta < 1/2$ we have $\gamma_1(\beta) < 0 < \gamma_2(\beta)$ and for $\beta=1/2$ we have $\gamma_1(\beta)=0=\gamma_2(\beta)$. Taking $\beta\leq 1/2$, combining results (a) and (b) 
		we can then apply Theorem~\ref{thm:strategic_solution} and obtain the analyst's equilibrium test allocation as
		\begin{eqnarray}\label{eq:optimal_testing_low}
			\tau^*(\beta,\gamma) = (\tau_1^*(\beta,\gamma), \tau^*_2(\beta,\gamma)) =
			\begin{cases}
				(T,0)	& \text{if} \ \gamma<\gamma_1(\beta) \\
				(0,0)	& \text{if} \ \gamma \in \left[ \gamma_1(\beta), \gamma_2(\beta)  \right] \\
				(0,T)	& \text{if} \ \gamma > \gamma_2(\beta)
			\end{cases}.
		\end{eqnarray}
		For each $\beta$, however, $\gamma$ also needs to satisfy the constraint $\gamma \in \Gamma(\beta)$. For $\beta\in \mathbb{R}_+$, let 
		\begin{eqnarray}\label{eq:underline_overline_gamma}
			\underline{\gamma}(\beta) 
			:= - \beta \frac{\alpha^{D}_2}{\alpha^{D}_1}, \ \ \ 
			\overline{\gamma}(\beta)
			:= \beta \frac{\alpha^{D}_1}{\alpha^{D}_2}
		\end{eqnarray}
		denote the end points of the constraint $\Gamma(\beta)$. Let $\beta_1,\beta_2 \in \mathbb{R}_+$ denote the values of $\beta$ defined by equations $\gamma_1(\beta_1) = \underline{\gamma}(\beta_1)$ and $\gamma_2(\beta_2) = \overline{\gamma}(\beta_2).$	
		Solving these equations, we obtain
		\begin{eqnarray} \nonumber
			\beta_1 
			&=& \frac{1}{2} \frac{ \left( \alpha^{D}_1 \right)^2 }{ \left( \alpha^{D}_1  \right)^2 + \left(  \alpha^{D}_2  \right)^2 }, \ \ \ 
			\beta_2 
			= \frac{1}{2} \frac{ \left( \alpha^{D}_2 \right)^2 }{ \left( \alpha^{D}_1  \right)^2 + \left(  \alpha^{D}_2  \right)^2 }
		\end{eqnarray}
		Since $\alpha^{D}_1\geq \alpha^{D}_2 >0$, we get $0<\beta_2 \leq \beta_1 < 1/2$. Furthermore, since $\gamma_1(\beta)$ ($\gamma_2(\beta)$) is strictly increasing (strictly decreasing) in $\beta$ and since $\underline{\gamma}(\beta)$ ($\overline{\gamma}(\beta)$) is strictly decreasing (strictly increasing) we get the following results: (i) for all $\gamma \in \Gamma(\beta)$ we have $\gamma\geq \gamma_1(\beta)$ if and only if $\beta \leq \beta_1$; and (ii) for all $\gamma \in \Gamma(\beta)$ we have $\gamma \leq \gamma_2(\beta)$ if and only if $\beta \leq \beta_2$. Combining points (i) and (ii) with the equilibrium test allocation defined in equation \eqref{eq:optimal_testing_low}, we obtain Lemma~\ref{lemma:characterization_learning_low}.	
	\qed

	\subsubsection{Proof of Lemma \ref{lemma:characterization_learning_high}}

     Fix the manager's weight vector $\alpha^{D} = (\alpha^{D}_1,\alpha^{D}_2)$ with $\alpha^{D}_k >0$ for both $k=1,2$. Let $\alpha^{R}_1(\beta,\gamma)$, $\alpha^{R}_2(\beta,\gamma)$,  $\gamma_1(\beta)$, $\gamma_2(\beta)$, $\underline{\gamma}(\beta)$, and $\overline{\gamma}(\beta)$ be given by equations \eqref{eq:alpha1a_alpha2a_gamma_beta}, \eqref{eq:gamma_beta_1_2}, and \eqref{eq:underline_overline_gamma}  in the proof of Lemma~\ref{lemma:characterization_learning_low}. 
		
		Fix $\beta>1/2$. Then we have $\underline{\gamma}(\beta)<\gamma_2(\beta) < 0 < \gamma_1(\beta) < \overline{\gamma}(\beta)$.	As shown in the proof of Lemma~\ref{lemma:characterization_learning_low}, it holds that $\alpha^{R}_1(\beta,\gamma) \leq 1/2 \alpha^{D}_1$ if and only if $\gamma \geq \gamma_1(\beta)$; and $\alpha^{D}_2(\beta,\gamma) \leq 1/2 \alpha^{D}_2$ if and only if $\gamma \leq \gamma_2(\beta)$. By Theorem~\ref{thm:strategic_solution}, the analyst's equilibrium test allocation satisfies 
		\begin{eqnarray}\label{eq:tau_star_gamma_ends}
			\tau^*(\beta,\gamma) = \left( \tau^*_1(\beta,\gamma), \tau^*_2(\beta,\gamma) \right)
			&=&
			\begin{cases}
				(T,0)	& \text{if}\ \gamma \leq \gamma_2(\beta), \\
				(0,T)	& \text{if}\ \gamma \geq \gamma_1(\beta).
			\end{cases}
		\end{eqnarray}
		Now suppose $\gamma \in \left[\gamma_2(\beta),\gamma_1(\beta) \right]$. Let $\tilde{\tau}_k(\beta,\gamma)$  and $\hat{\alpha}_k(\beta,\gamma)$ for $k = 1,2$ be given by equations \eqref{eq:tilde_tau_gamma} and \eqref{eq:tilde_alpha_gamma}. Note that expressions in \eqref{eq:tilde_alpha_gamma} are well-defined for $\gamma \in \left[\gamma_2(\beta),\gamma_1(\beta) \right]$. 
		
		Furthermore, we have
		\begin{eqnarray*}
			\tilde{\tau}_1(\beta,\gamma) = 
			\begin{cases}
				\frac{1}{\Sigma^0_{2}} + T >T & \text{at}\ \gamma=\gamma_2(\beta), \\
				-\frac{1}{\Sigma^0_{1}} < 0 & \text{at}\ \gamma=\gamma_1(\beta),
			\end{cases}
		\end{eqnarray*}
		since $\hat{\alpha}_2(\beta,\gamma_2(\beta))=0$, $\hat{\alpha}_1(\beta,\gamma_2(\beta))>0$, and $\hat{\alpha}_1(\beta,\gamma_1(\beta)) = 0$, $\hat{\alpha}_2(\beta,\gamma_1(\beta))>0$. Further, $\tilde{\tau}_1(\beta,\gamma)$ is continuous and strictly decreasing in $\gamma$. To see the strict monotonicity, let us take the partial derivative of $\tilde{\tau}_1(\gamma)$ with respect to $\gamma$. For ease of notation, we suppress the dependence on $\beta$ and $\gamma$ from the  right-hand side of the expression. We get
		\begin{eqnarray*}
			\frac{ \partial \tilde{\tau}_1 (\beta,\gamma)}{\partial \gamma}
			= \frac{(\Sigma^0_{1}+\Sigma^0_{2}) \left( \hat{\alpha}_2 \frac{\partial \hat{\alpha}_1}{\partial \gamma}  - \hat{\alpha}_1 \frac{\partial \hat{\alpha}_2}{\partial \gamma}   \right)}{ \Sigma^0_{1} \Sigma^0_{2} \left( \hat{\alpha}_1 + \hat{\alpha}_2  \right)^2}		
			+ \frac{\hat{\alpha}_2 \frac{\partial \hat{\alpha}_1}{\partial \gamma} - \hat{\alpha}_1 \frac{\partial \hat{\alpha}_2}{\partial \gamma}    }{ \left( \hat{\alpha}_1 + \hat{\alpha}_2\right)^2}T <0,
		\end{eqnarray*}
		where we used $\frac{\partial \hat{\alpha}_1(\beta,\gamma)}{\partial \gamma} = -\frac{\alpha^{D}_2}{\sqrt{\hat{\alpha}_1(\beta,\gamma)}}<0$ and $\frac{\partial \hat{\alpha}_2(\gamma)}{\partial \gamma} = \frac{\alpha^{D}_1}{\sqrt{\hat{\alpha}_2(\beta,\gamma)}}>0$ to determine the sign. By the intermediate value theorem, there exist $\gamma_I(\beta), \gamma_{II}(\beta) \in \left( \gamma_2(\beta), \gamma_1(\beta) \right)$ such that 
		\begin{eqnarray}\label{eq:gamma_I_II}
			\tilde{\tau}_1(\beta, \gamma) = 
			\begin{cases}
				T	& \text{at}\ \gamma=\gamma_{I}(\beta), \\
				0	& \text{at}\ \gamma = \gamma_{II}(\beta).
			\end{cases}
		\end{eqnarray}
		From the strict monotonicity of $\tilde{\tau}_1(\beta,\gamma)$, it further holds that
		\begin{eqnarray}\label{eq:tilde_tau_1_gamma_cutoffs}
			\tilde{\tau}_1(\beta,\gamma) 
			\begin{cases}
				> T	& \text{if and only if}\ \gamma < \gamma_{I}(\beta), \\
				< 0 & \text{if and only if}\ \gamma > \gamma_{II}(\beta).
			\end{cases}
		\end{eqnarray}
		Finally, note that $\hat{\alpha}_k(\beta,\gamma)$ as given by \eqref{eq:tilde_alpha_gamma} are equivalent to \eqref{eq:alpha_n_lambda} and \eqref{eq:auxiliary_weights} given decomposition \eqref{eq:decomposition_binary}. 
        By Theorems~\ref{thm:benchmark} and \ref{thm:strategic_solution}, 
        the analyst's equilibrium test allocation for $\gamma \in \left[\gamma_2(\beta),\gamma_1(\beta)\right]$ is
		\begin{eqnarray} \label{eq:tau_star_1_gamma_learning}
			\tau^*_1(\beta,\gamma)
			&=&
			\max \left\{ 0, \min \left\{ \tilde{\tau}_1(\beta,\gamma) , T  \right\}    \right\} \\
			\tau^*_2(\beta,\gamma) \label{eq:tau_star_2_gamma_learning}
			&=&
			\max \left\{ 0, \min \left\{ \tilde{\tau}_2(\beta,\gamma) , T  \right\}    \right\}		
		\end{eqnarray}
		Combining the results \eqref{eq:tau_star_1_gamma_learning}, \eqref{eq:tau_star_2_gamma_learning} with \eqref{eq:tilde_tau_1_gamma_cutoffs} and \eqref{eq:tau_star_gamma_ends}, together with the strict monotonicity result of $\tilde{\tau}_1(\beta,\gamma)$, we obtain the claim of Lemma~\ref{lemma:characterization_learning_high}.	
	\qed

	\subsubsection{Proof of Lemma \ref{lemma:binary_Vp_concavity}}

		Given a test allocation $\tau=(\tau_1,\tau_2)$, the decision-maker's interim expected payoff is 
		\begin{eqnarray}\label{eq:VP_tau}
			V^{D}(\tau) = -\frac{ \left(\alpha^{D}_1\right)^2 \Sigma^0_{1} }{1+ \tau_1 \Sigma^0_{1}} - \frac{ \left(\alpha^{D}_2\right)^2 \Sigma^0_{2} }{1+ \tau_2 \Sigma^0_{2}}.
		\end{eqnarray}
		The first, the second, and the cross derivates of the decision-maker's interim expected payoff with respect to $\tau_k \geq 0$ and $\tau_l\neq \tau_k$ are
        $\frac{\partial V^{D}(\tau)}{ \partial \tau_k}=\left( \frac{ \alpha^{D}_k \Sigma^0_{k}  }{1+\tau_k \Sigma^0_{k}} \right)^2 >0$, $\frac{\partial^2 V^{D}(\tau)}{\partial \tau_k^2}=- 2 \left( \alpha^{D}_k \right)^2 \left( \frac{ \Sigma^0_{k}  }{ 1+\tau_k \Sigma^0_{k}  }  \right)^3 <0 $, and $\frac{\partial^2 V^{D}(\tau)}{\partial \tau_k \partial \tau_l}=0$.
		Hence, $V^{D}(\tau)$ is strictly increasing and strictly concave function.
\qed 
	
	\subsubsection{Proof of Lemma \ref{lemma:optimal_allocation_gamma_zero}}

		Fix the test budget $T>0$ and the decision-maker's weight vector $\alpha^{D} \in \mathbb{R}^2_{++}$.
		Suppose $\gamma=0$ (the agent is not distorted).
		\begin{itemize}
			\item[(i)] If $\beta\leq 1/2$ (the agent is too insensitive), then the agent optimally chooses no testing: $\tau^*(\beta,0) = (0,0)$.
			\item[(ii)] If $\beta>1/2$ (the agent is sufficiently sensitive), then the agent optimally chooses the decision-maker's most preferred test allocation: $\tau^*(\beta,0) = \tau^*(1,0)$. 
		\end{itemize}

		Suppose the premise of Lemma~\ref{lemma:optimal_allocation_gamma_zero} holds. 
		Part (i) immediately follows from Lemma~\ref{lemma:characterization_learning_low}. 
		For part (ii) take $\beta>1/2$ and let $\tilde{\tau}_1(\beta,\gamma)$ and $\tilde{\tau}_2(\beta,\gamma)$ be given by \eqref{eq:tilde_tau_gamma}. At $\gamma=0$, we get
		\begin{eqnarray*}
			\tilde{\tau}_1(\beta,0)
			&=&
			\frac{ \sqrt{2\beta - 1} \alpha^{D}_1 \Sigma^0_{1} - \sqrt{2\beta-1}\alpha^{D}_2 \Sigma^0_{2}    }{   \Sigma^0_{1} \Sigma^0_{2} \sqrt{2\beta-1} \left( \alpha^{D}_1 + \alpha^{D}_2 \right)    }
			+ \frac{ \sqrt{2\beta - 1} \alpha^{D}_1  }{ \sqrt{2\beta - 1}  \left( \alpha^{D}_1 +\alpha^{D}_2   \right)} T \\
			&=& 
			\frac{\alpha^{D}_1 \Sigma^0_{1} - \alpha^{D}_2 \Sigma^0_{2}}{\Sigma^0_{1}\Sigma^0_{2}(\alpha^{D}_1+\alpha^{D}_2)} + 
			\frac{\alpha^{D}_1}{\alpha^{D}_1+\alpha^{D}_2} T
		\end{eqnarray*}
		Lemma~\ref{lemma:characterization_learning_high} and 
        Theorem~\ref{thm:benchmark} 
        then imply that the agent's equilibrium test allocation coincides with the optimal test allocation of a single player with weight vector $\alpha = \alpha^{D}$.
	\qed

	\subsection{Details of the Proofs in Appendix \ref{appendix:discrimination}}\label{OA:discrimination}
	\subsubsection{Proof of Lemma \ref{lemma:discrimination_exists_partiality}}

Let us set $T=1$ and $\Sigma^0_1=\Sigma^0_2=1$.
	From Theorems~\ref{thm:benchmark} and \ref{thm:strategic_solution}, 
 the equilibrium test allocation satisfies $\tau^*_1(p,\delta)=1-\tau^*_2(p,\delta)$ and
	\begin{align*}
		\tau^*_2(p,\delta)
		= \min \left\{ \frac{\hat{\alpha}_2(p,\delta) - \hat{\alpha}_1(p,\delta)}{\hat{\alpha}_1(p,\delta)+\hat{\alpha}_2(p,\delta)} +
		\frac{\hat{\alpha}_2(p,\delta)}{\hat{\alpha}_1(p,\delta)+\hat{\alpha}_2(p,\delta)}, 1 \right\} 
        =
		\min \biggl\{ \underbrace{\frac{2\hat{\alpha}_2(p,\delta) - \hat{\alpha}_1(p,\delta)}{\hat{\alpha}_1(p,\delta)+\hat{\alpha}_2(p,\delta)}}_{ \tau^{aux}_2(p,\delta):=}, 1 \biggr\}
	\end{align*}
	where $\hat{\alpha}_1(p,\delta)$ and $\hat{\alpha}_2(p,\delta)$ are given by equations \eqref{app:alpha_1_tilde_p_d} and \eqref{app:alpha_2_tilde_p_d}, respectively. The function $\tau_2^{aux}(p,\delta)$ is well-defined, since $\hat{\alpha}_1(p,\delta)+\hat{\alpha}_2(p,\delta) \neq 0$ for all values of $p$ and $\delta$. Note that
	$\tau^{aux}_2(p,\delta)=1/2$ when $p=0$ and $\tau^{aux}_2(p,\delta)=2$ when $p \geq \frac{1-\delta}{2}$. Furthermore, when $p\in \left[0,\frac{1-\delta}{2} \right)$ we have
	\begin{eqnarray} \nonumber
		\frac{\partial \tau^{aux}_2(p,\delta)}{\partial p}
		&=&
		\frac{\left(2 \frac{\partial \hat{\alpha}_2}{\partial p} - \frac{\partial \hat{\alpha}_1}{\partial p}  \right) \left( \hat{\alpha}_1 +\hat{\alpha}_2 \right) 
			- \left( 2\hat{\alpha}_2 - \hat{\alpha}_1 \right) 
		\left( \frac{\partial \hat{\alpha}_1}{\partial p} + \frac{\partial \hat{\alpha}_2}{\partial p}    \right)}{
		\left( \hat{\alpha}_1 + \hat{\alpha}_2   \right)^2}	\\ \label{eq:discrimination_tau2_aux}
		&=& \frac{3}{4} \frac{ (1-\delta^2) }{\hat{\alpha}_1 \hat{\alpha}_2 \left( \hat{\alpha}_1 + \hat{\alpha}_2 \right)^2} >0. 
	\end{eqnarray}
	Note that $\tau^{aux}_2(p,\delta)$ is continuous in $p$ for $p \geq 0$.\footnote{Since $\hat{\alpha}_1(p,\delta)$ and $\hat{\alpha}_2(p,\delta)$ are both continuous in $p$, the only possible point of discontinuity of $\tau_2^{aux}(p,\delta)$ is if $\hat{\alpha}_1(p,\delta)+\hat{\alpha}_2(p,\delta) = 0$. Equations \eqref{app:alpha_1_tilde_p_d} and \eqref{app:alpha_2_tilde_p_d} imply this never happens for all feasible values of $p$ and $\delta$. Hence, $\tau^{aux}_2(p,\delta)$ is continuous on $p\geq 0$.}
	Furthermore, $\tau^{aux}_2(p,\delta)$ is strictly increasing in $p$ on $p \in [0,(1-\delta)/2)$. At $p=0$ we have $ \tau^{aux}_2(p,\delta)=1/2<1$ and at $p \geq (1-\delta)/2$ we have $\tau^{aux}_2(p,\delta)=2>1$. Hence, for each value of discrimination $\delta \in (0,1)$ there exists a threshold partiality level of the advisor $p^{aux}(\delta)\in \left(0,\frac{1-\delta}{2}\right)$, defined implicitly by the equation $\tau^{aux}_2(p^{aux}(\delta),\delta)=1$, such that the equilibrium testing strategy satisfies $\tau^*_2(p,\delta)=\tau^{aux}_2(p,\delta)$ for $p \in [0,p^{aux}(\delta)]$ and $\tau^*_2(p,\delta)=1$ for $p \geq p^{aux}(\delta)$.
\qed

	\subsubsection{Proof of Lemma \ref{lem:discrimination_restore_equality}}
The ex ante expected utility of group $k$, $V^k(\tau,\delta)$, can be derived from the expected payoff of a researcher $V^R(\tau)$ in the baseline model in the proof of Lemma~\ref{lemma:strategic_player_maximization}, where we set the decision-maker's weights to $\alpha^{D}_1(\delta)$, $\alpha^{D}_2(\delta)$ and the researcher's weights (now group $k$ weights) to $\alpha^{R}_k =1$ and $\alpha^{R}_{-k}=0$. In other words, given allocation $\tau$ and the politician's equilibrium strategy $d^*(s,\tau;\delta)$, the ex ante expected payoff of group $k$ is the same as that of the researcher in the baseline model who solely cares about attribute $\tilde{\theta}_k$ (with weight $\alpha^R_k=1$) and not about the other attribute (with weight $\alpha^R_{-k}=0$).
We get
 \begin{align*}
 V^k(\tau,\delta) =&\ \mathbb{E}\left[ -(\tilde{d}^{*D}(\tau;\delta) - \tilde{\theta}_k)^2 \right] 
		\\ 
		=& - \left(v^{D}_0(\delta) - \mu_k^0   \right)^2 - \left( \sigma^{2,D}_0(\delta) + \Sigma^0_k  \right) + \hat{\sigma}^{2,D}(\tau;\delta) + 2 \cov\left(\tilde{d}^{*D}(\tau;\delta),\tilde{\theta}_k  \right) 
		\\ 
		=& -\left( \alpha^{D}_1(\delta)\mu_1^0 + \alpha^{D}_2(\delta)\mu_2^0  - \mu_k^0 \right)^2 - \left( \left(\alpha^{D}_1(\delta)\right)^2\Sigma^0_1 + \left(\alpha^{D}_2(\delta)\right)^2\Sigma^0_2 + \Sigma^0_k  \right) 
		\\
		&+ \frac{\Sigma^0_k}{1+\tau_k \Sigma^0_k} \left( \left(\alpha^{D}_k(\delta)\right)^2 + 2\alpha^{D}_k(\delta) \Sigma^0_k \tau_k \right) + \frac{\Sigma^0_{-k}}{1+\tau_{-k} \Sigma^0_{-k}} \left(\alpha^{D}_{-k}(\delta)\right)^2
	\end{align*}
    Using $\mu^0_1=\mu^0_2=0$, $\Sigma^0_1=\Sigma^0_2=1$, we get
	\begin{align} \nonumber
		V^k(\tau,\delta)
		=& - \left(  \left(\alpha^{D}_1(\delta)\right)^2 + \left(\alpha^{D}_2(\delta)\right)^2 + 1 \right)\\ \label{eq:discrimination_group_exp_utility}
		&	+  \frac{1}{1+\tau_k} \left( \left(\alpha^{D}_k(\delta)\right)^2 + 2\alpha^{D}_k(\delta) \tau_k \right) + \frac{1}{1+\tau_{-k}} \left(\alpha^{D}_{-k}(\delta)\right)^2.
	\end{align}
    The difference between the expected utilities of the two groups is
	\begin{eqnarray*}
		V^1(\tau,\delta) - V^2(\tau,\delta) = \frac{2\alpha^P_1(\delta)\tau_1}{1+\tau_1} - \frac{2\alpha^P_2(\delta)\tau_2}{1+\tau_2} = \frac{(1+\delta)\tau_1}{1+\tau_1} - \frac{(1-\delta)\tau_2}{1+\tau_2}
	\end{eqnarray*}
	where we used $\alpha^P_1(\delta)=\frac{1}{2}(1+\delta)$ and $\alpha^P_2(\delta)=\frac{1}{2}(1-\delta)$. 
    The difference between the expected utilities of the two groups under the additional constraint that $\tau_1+\tau_2=1$ is
    \begin{align*}
		\Delta(\tau_2,\delta) &= V^1((1-\tau_2,\tau_2),\delta) - V^2((1-\tau_2,\tau_2),\delta)
		&= \frac{(1+\delta)(1-\tau_2)}{1+1-\tau_2} - \frac{(1-\delta)\tau_2}{1+\tau_2}.
	\end{align*}

  	The	function $\Delta(\tau_2,\delta)$ is continuous and strictly decreasing in $\tau_2$ since $\delta \in (0,1)$ and thus
	$
	\frac{\partial \Delta(\tau_2,\delta)}{\partial \tau_2}
	= - \frac{(1+\delta)}{(1+1-\tau_2)^2} - \frac{(1-\delta)}{(1+\tau_2)^2} <0
	$.
	Furthermore, we have $\Delta(1/2,\delta)>0$ and $\Delta(1,\delta)<0$. Hence, for each $\delta$ there exists a unique value of $\hat{\tau}_2(\delta) \in \left( 1/2, 1 \right)$ for which it holds $\Delta(\hat{\tau}_2(\delta);\delta)=0$. 
\qed

	\subsubsection{Proof of Lemma \ref{lemma:pareto_frontier}}
 Let
\begin{align*} 
	\omega(\tau_2,\delta) &:= V^1((1-\tau_2,\tau_2),\delta) + V^2((1-\tau_2,\tau_2),\delta) \\
	&= -3 - \delta^2 + \frac{1/2(1+\delta)^2 + (1+\delta)(1-\tau_2)}{2-\tau_2} + \frac{1/2(1-\delta)^2 + (1-\delta)\tau_2}{1+\tau_2}
\end{align*}
denote the sum of the expected utilities of the two groups as a function of test allocation $\tau = (\tau_1,\tau_2)$ under the constraint that the budget is fully used, $\tau_1=T-\tau_2$ with $T=1$, and where $V^k(\tau,\delta)$ is given by equation \eqref{eq:discrimination_group_exp_utility}. Since the budget is exhausted in equilibrium (as shown in the proof of Lemma~\ref{prop:discrimination_equalizing_bias}), welfare is thus given by $W(p,\delta) = \omega (\tau_2^*(p,\delta),\delta)$. The partial derivative of welfare with respect to $p>0$, whenever differentiable, is then
\begin{align*}
	\frac{\partial W(p,\delta)}{\partial p}
	&=
	\frac{\partial \tau^*_2}{\partial p} \frac{\partial \omega(\tau_2^*,\delta)}{\partial \tau_2} 
	= \frac{\partial \tau^*_2}{\partial p}
	\biggl(
		\frac{1/2(1+\delta)^2 - (1+\delta)}{(2-\tau_2^*)^2}
	- 
	\frac{1/2(1-\delta)^2 -(1-\delta)}{(1+\tau_2^*)^2}
	\biggr)
\end{align*}
where we suppressed the dependence of $\tau^*_2(p,\delta)$ on $(p,\delta)$ from the RHS for brevity.
Note that 
$1/2(1+\delta^2) - (1+\delta) = 1/2(1+\delta^2) - (1-\delta) = \frac{1-\delta^2}{2}$ and that $(2-\tau_2^*)^2<(1+\tau_2^*)^2$, since $\tau^*_2(p,\delta)>1/2$ for $p>0$. Hence, $\frac{\partial \omega (\tau_2^*,\delta)}{\partial \tau_2}<0$. We thus have $sign\left( \frac{\partial W(p,\delta)}{\partial p} \right) = - sign \left( \frac{\partial \tau^*_2}{\partial p} \right)$. 

Let $\tau^{aux}_2(p,\delta)$, $\hat{p}(\delta)$ and $p^{aux}(\delta)$ with $\hat{p}(\delta)<p^{aux}(\delta)$, be the variables defined in the proof of Lemma~\ref{prop:discrimination_equalizing_bias}. 
From equation \eqref{eq:discrimination_tau2_aux}, it follows
\begin{align*}
    sign\left( \frac{\partial W(p,\delta)}{\partial p} \right) 
    &= - sign \left( \frac{\partial \tau^*_2(p,\delta)}{\partial p} \right) = 
    \begin{cases}
        - sign \left(\frac{\tau_2^{aux}(p,\delta)}{\partial p} \right) < 0 & p \in (0,p^{aux}(\delta)) \\
        0 & p>p^{aux}(\delta)
    \end{cases},
\end{align*}
and the derivative does not exist at $p=p^{aux}(\delta)$. We thus have that welfare $W(p,\delta)$ is (i) continuous in $p$ (since $V^k(\tau^*(p,\delta),\delta)$ is continuous in $p$); (ii) strictly decreasing on $p \in (0,p^{aux}(\delta))$; and (iii) constant on $p\geq p^{aux}(\delta)$, where all the advisors use the entire budget to learn exclusively about group 2. Furthermore, the equalizing partiality level $\hat{p}(\delta)<p^{aux}(\delta)$.

Next, inequality is given by
\begin{align*}
    I(p,\delta) &= \left|\Delta(\tau_2^*(p,\delta),\delta) \right|
\end{align*}

where $\Delta(\tau_2^*(p,\delta),\delta)$ is given by equation \eqref{eq:discrimination_Cap_Delta_fn}.
As we have shown in the proof of Lemma~\ref{prop:discrimination_equalizing_bias}, the function $\Delta(\tau_2^*(p,\delta),\delta)$ is continuous in $\tau_2^*(p,\delta)$, strictly decreasing in $\tau_2^*(p,\delta)$, and is zero at $\tau_2^*(\hat{p}(\delta),\delta)$, where $\tau_2^*(\hat{p}(\delta),\delta)<T$ (the advisor who restores equality, $\hat{p}(\delta)$, does not use the entire test budget to learn exclusively about group 2). As shown above, we have:  the equilibrium allocation of tests to group 2,  $\tau_2^*(p,\delta)$, is continuous in $p$, strictly increasing on $p<p^{aux}(\delta)$, and constant on $p\geq p^{aux}(\delta)$, where the advisors use the entire test budget to learn exclusively about group 2: $\tau_2^*(p,\delta)=T$. Hence, inequality $I(p,\delta)$ is (i) continuous in $p$; (ii) strictly decreasing on $p \in \left(0,\hat{p}(\delta)\right)$ (iii) zero at $p=\hat{p}(\delta)$; 
(iv) strictly increasing on $p \in \left(\hat{p}(\delta), p^{aux}(\delta)\right)$; and (v) constant on $p \geq p^{aux}(\delta)$.

Therefore, welfare $W(p,\delta)$ and inequality $I(p,\delta)$ are both (i) continuous in $p$; and (ii) strictly decreasing on $p \in (0,\hat{p}(\delta))$. Furthermore, welfare strictly decreases and inequality strictly increases on $p \in (\hat{p}(\delta),p^{aux}(\delta))$; and they are both constant on $p \geq p^{aux}(\delta)$.

	\subsubsection{Proof of Lemma \ref{lemma:discr_tests_unchecked_equalizing}}
By Theorem~\ref{thm:benchmark}, 
the politician's optimal learning strategy under unchecked discrimination yields
\begin{align*}
	\bar{\tau}_2(\delta) = \begin{cases}
		\frac{1-3\delta}{2} & \text{if } \delta < \frac{1}{3},\\
		0 & \text{if } \delta \geq \frac{1}{3}.
	\end{cases}
\end{align*}
From the proof of Lemma \ref{prop:discrimination_equalizing_bias}, equation \eqref{eq:equality_restoring_tau2}, the test allocation restoring equality $\hat{\tau_2}(\delta) = \frac{-(1-\delta)+\sqrt{1+3\delta^2}}{2\delta}$. 

Fix $\delta\in [1/3,1)$. Then $|\overline{\tau}_2(\delta) - 1/2|>|\hat{\tau}_2(\delta)-1/2|$, since $\hat{\tau}_2(\delta)\in (1/2,1)$, as shown in the proof of Lemma~\ref{prop:discrimination_equalizing_bias}. Next, fix $\delta\in (0,1/3)$. Then we have, equivalently,
\begin{align*}
	| \overline{\tau}_2(\delta) - 1/2| = \frac{1}{2} - \frac{1-3\delta}{2} &> 
	\frac{-1+\delta + \sqrt{1+3\delta^2}}{2\delta} - \frac{1}{2} = |\hat{\tau}_2(\delta) - 1/2|  \\
1+3\delta^2 &> \sqrt{1+3\delta^2}
\end{align*}
which holds for any $\delta \in (0,1/3)$.

\qed
\end{document}